%
\documentclass[runningheads]{llncs}
\usepackage{color}
\usepackage{amsmath}
\usepackage{amssymb} 
\usepackage{wasysym}
\usepackage{graphicx}
\usepackage{xspace}
\usepackage{subfig}
\usepackage[export]{adjustbox}
\usepackage{stmaryrd}
\usepackage{enumitem}
\usepackage{tikz}
\usepackage{tikz-layers}
\usetikzlibrary{patterns}
\usetikzlibrary{decorations.markings}
\usepackage{xspace}
\usepackage{pgfplotstable}
\usepackage{booktabs}
\usepackage{latexsym}
\usepackage{listings}

\usepackage[T1]{fontenc}
\usetikzlibrary{automata, positioning, arrows}

\usetikzlibrary{decorations.pathmorphing}
\usetikzlibrary{calc}

\makeatletter
\DeclareRobustCommand{\cev}[1]{%
  \mathpalette\do@cev{#1}%
}
\newcommand{\do@cev}[2]{%
  \fix@cev{#1}{+}%
  \reflectbox{$\m@th#1\vec{\reflectbox{$\fix@cev{#1}{-}\m@th#1#2\fix@cev{#1}{+}$}}$}%
  \fix@cev{#1}{-}%
}
\newcommand{\fix@cev}[2]{%
  \ifx#1\displaystyle
    \mkern#23mu
  \else
    \ifx#1\textstyle
      \mkern#23mu
    \else
      \ifx#1\scriptstyle
        \mkern#22mu
      \else
        \mkern#22mu
      \fi
    \fi
  \fi
}

\makeatother

%
%

\newcommand{\cm}{CM}
\newcommand{\cmc}{CMC}

\newcommand{\copa}{CoPa}

\newcommand{\slcs}{{\tt SLCS}}
\newcommand{\slcsG}{{\tt SLCS}$_{\gamma}$}
\newcommand{\slcsE}{{\tt SLCS}$_{\eta}$}

\newcommand{\polylogica}{{\tt PolyLogicA}}

%
%
\newcommand{\SET}[1]{\{#1\}}
\newcommand{\ZET}[2]{\SET{#1 \,|\, #2}}
\newcommand{\pws}[1]{\mathbf{2}^{#1}}
\newcommand{\nats}{\mathbb{N}}
\newcommand{\reals}{\mathbb{R}}

\newcommand{\cnv}[1]{#1^{-}}
\newcommand{\dircnv}[1]{#1^{\pm}}

\newcommand{\plm}{$\pm$}
\newcommand{\upd}{$\uparrow\!\downarrow$}
\newcommand{\dwn}{$\downarrow$}

%
%
\newcommand{\relint}[1]{\widetilde{#1}}
\newcommand{\ap}{{\tt PL}}

\newcommand{\preceqsucc}{\preceq^{\pm}}

\newcommand{\map}{\mathbb{F}}
\newcommand{\peval}{\calV}
\newcommand{\invpeval}{\peval^{-1}}

\newcommand{\sibis}{\sim_{\resizebox{0.2cm}{!}{$\triangle$}}}
\newcommand{\wsibis}{\approx_{\resizebox{0.2cm}{!}{$\triangle$}}}

\newcommand{\plmbis}{\sim_{\pm}}
\newcommand{\wplmbis}{\approx_{\pm}}

%
%
\newcommand{\form}{\Phi}
\newcommand{\ltrue}{{\tt true}}

\newcommand{\lneg}{\neg}

\newcommand{\slcsGeq}{\equiv_{\gamma}}
\newcommand{\slcsEeq}{\equiv_{\eta}}

\newcommand{\etga}{\calE}

%
%
\newcommand{\closure}{\calC}

%
%

\newcommand{\lts}{LTS}

%
%

%

\newcommand{\calC}{\mathcal{C}}

\newcommand{\calE}{\mathcal{E}}
\newcommand{\calF}{\mathcal{F}}

\newcommand{\calL}{\mathcal{L}}
\newcommand{\calM}{\mathcal{M}}

\newcommand{\calP}{\mathcal{P}}

\newcommand{\calV}{\mathcal{V}}

%

\newcommand{\CYAN}[1]{\textcolor{cyan}{#1}}
\newcommand{\RED}[1]{\textcolor{red}{#1}}


\newcommand{\closedefi}{\hfill$\bullet$}
\newcommand{\closeex}{\hfill$\diamond$}
\newcommand{\closerem}{\hfill$\divideontimes$}
\newcommand{\sep}{\vert}

\newcounter{dgnot} 
\newenvironment{dgnot}[1][]{\refstepcounter{dgnot}\par\medskip
   \noindent \textbf{\RED{NfDiego~\thedgnot.}  #1} \rmfamily}{\medskip}

\newcounter{mknot} 
\newenvironment{mknot}[1][]{\refstepcounter{mknot}\par\medskip
   \noindent \textbf{\CYAN{NfMieke~\themknot.}  #1} \rmfamily}{\medskip}

\begin{document}
\title{Weak Simplicial Bisimilarity for\\Polyhedral Models and \slcs$_\eta$\thanks{This is an extended version of the paper ``Weak Simplicial Bisimilarity for Polyhedral Models and \slcsE'', accepted for publication in the proceedings of the 44th International Conference on Formal Techniques for Distributed Objects, Components, and Systems (FORTE 2024) published as LNCS by Springer. It contains all detailed proofs that are not present in the FORTE 2024 paper due to lack of space.\\The authors are listed in alphabetical order, as they equally contributed to the work presented in this paper.}\\
--- Extended version ---}

\titlerunning{Weak Simplicial Bisimilarity and \slcsE{} --- Extended Version}

\author{Nick Bezhanishvili\inst{1} \and
Vincenzo Ciancia\inst{2} \and
David Gabelaia\inst{3} \and
Mamuka Jibladze\inst{3} \and
Diego Latella\inst{2} \and
Mieke Massink\inst{2} \and
Erik P. de Vink\inst{4}
}%
\authorrunning{N. Bezhanishvili et al.}
\institute{ILLC, University of Amsterdam, NL
\email{n.bezhanishvili@uva.nl}\\
\and
ISTI, Consiglio Nazionale delle Ricerche, Pisa, IT\\
\email{\{Vincenzo.Ciancia, Diego.Latella, Mieke.Massink\}@cnr.it}\\
\and
ARM Institute, Tbilisi State University, GE
\email{\{gabelaia, mamuka.jibladze\}@gmail.com}\\
\and
Eindhoven University of Technology, NL
\email{evink@win.tue.nl}
}

\maketitle              
\begin{abstract}
In the context of spatial logics and spatial model checking for polyhedral models --- mathematical basis for visualisations in continuous space --- we propose a weakening of simplicial bisimilarity. We additionally propose a corresponding weak notion of \plm-bisimilarity on cell-poset models, discrete representation of polyhedral models. We show that two points are weakly simplicial bisimilar iff their representations are weakly \plm-bisimilar. The advantage of this weaker notion is that it leads to a stronger reduction of models than its counterpart that was introduced in our previous work. This is important, since real-world polyhedral models, such as those found in domains exploiting mesh processing, typically consist of large numbers of cells. We also propose \slcsE, a weaker version of the {\em Spatial Logic for Closure Spaces} (\slcs) on polyhedral models, and we show that the proposed bisimilarities enjoy the Hennessy-Milner property: two points are weakly simplicial bisimilar iff they are logically equivalent for \slcsE. Similarly, two cells are weakly \plm-bisimilar iff they are
logically equivalent in the poset-model interpretation of \slcsE. This work is performed in the context of the geometric spatial model checker \polylogica{} and the polyhedral semantics of \slcs.

\keywords{
Bisimulation relations \and
Spatial bisimilarity \and
Spatial logics \and
Logical equivalence \and
Spatial model checking \and
Polyhedral models.
}
\end{abstract}
\section{Introduction and Related Work}\label{sec:Introduction}

The notion of bisimulation is central in the theory of models for (concurrent) system behaviour, for characterising those systems which ``behave the same''. 
Properties of such behaviours are typically captured by formulas of appropriate logics, such as modal logics and variations/extensions thereof (e.g. temporal, deterministic-/stochastic-time temporal, probabilistic).
A key notion, in this context, is the Hennessy-Milner property (HMP), that allows for logical characterisations of bisimilarity.
Given a model $\calM$, a bisimulation equivalence $E$ over~$\calM$, and a logic $\calL$ intepreted on $\calM$,
we say that $E$ and $\calL$ enjoy the HMP if the following holds: any two states in $\calM$ are equivalent according to $E$ 
iff they satisfy the same formulas of $\calL$. Besides the intrinsic theoretical value of the HMP, the latter is also of fundamental importance as mathematical foundation for safe model reduction procedures since it 
ensures that any formula $\calL$ is satisfied by a state $s$ in $\calM$ iff it is satisfied by the equivalence class $[s]_E$ of $s$, that is itself a state in the {\em minimal} model $\calM_{/E}$ --- there are nowadays standard procedures for the effective and efficient computation of $\calM_{/E}$, for finite models $\calM$. Model reduction is, in turn, extremely important for efficient model analysis via, for example, automatic model-checking of logic formulas. 

Model-checking techniques have been  developed for the analysis of models of {\em space} as well, with properties expressed in {\em spatial logics}, i.e. modal logics interpreted over such models, following the tradition
going back to McKinsey and Tarski in 1940s~\cite{McKT44} (see also~\cite{vBB07} for an overview), where {\em topological models}  are considered as models for space. 
In~\cite{Ci+14,Ci+16} the
{\em Spatial Logic for Closure Spaces} (\slcs{)} has been proposed together with a model-checking algorithm for finite 
spaces and its implementation. Closure spaces are a generalisation of topological spaces that allow for a uniform treatment of  continuous spaces and discrete spaces, such as general graphs. \slcs{} is interpreted on models whose carriers are closure spaces. Spatio-temporal versions of the logic and the model-checker have been presented in~\cite{Ci+15,Ci+16a}. A version of the model-checker optimised for (2D and 3D) digital images --- that can be seen as {\em adjacency spaces}, a subclass of closure spaces --- has been proposed in~\cite{Be+19} with \slcs{} enriched with a {\em  distance} operator.  Tools for spatial and spatio-temporal model-checking have been successfully used in several applications~\cite{CLMP15,Ci+18,Ne+18,Ci+19b,Ba+20,Be+21} showing that the notions and techniques developed in the area of concurrency theory and formal methods can be extremely helpful in developing a foundational basis for the automatic  analysis of spatial models in real applications.

As a natural step forward, and following classical developments in modal logic, in~\cite{Ci+22,Ci+23}
several  notions of bisimulation for finite closure spaces have been studied. These cover a spectrum from \cm-bisimilarity, an equivalence based on {\em proximity} --- similar to and inspired by topo-bisimilarity for topological models~\cite{vBB07} --- to its specialisation for quasi-discrete closure models, \cmc-bisimilarity,  to \copa-bisimilarity, an equivalence based on {\em conditional reachability}. Each of these bisimilarities has been equipped with its logical characterisation. In~\cite{Ci+23a} an encoding from finite closure models to finite labelled transition systems (\lts{s}) has been defined and proven correct in the sense that two points in the space are \copa-bisimilar iff their corresponding states in the \lts{}
are branching bisimilar. This makes it possible to perform minimisation of the spatial model w.r.t. \copa-bisimilarity via minimisation of its \lts{} w.r.t. branching bisimulation. Very efficient tools are available for \lts{} minimisation w.r.t. branching equivalence~\cite{Gr+17}. 

The spatial model-checking techniques mentioned above have been extended to {\em polyhedral models}~\cite{Be+22,LoQ23}, that we address in the present paper.
{\em Polyhedra} are sets of points in $\reals^n$ generated by {\em simplicial complexes}, i.e. certain finite collections of {\em simplexes}, where a simplex is the {\em convex hull} of a set of affinely independent points in $\reals^n$.
Given a set $\ap$ of {\em proposition letters}, a {\em polyhedral model} is obtained from a polyhedron in the usual way, i.e. by assigning  a set of points to each proposition letter $p\in \ap$, namely those that ``satisfy'' $p$.
Polyhedral models in $\reals^3$  can be used for (approximately) representing objects in continuous 3D space. This is widely used in many 3D visual computing techniques, where an  object is divided into suitable areas of different size. Such ways of division of an object are known as {\em mesh techniques} and include triangular surface meshes or tetrahedral volume meshes (see for example~\cite{LevinePRZ2012} and the example in Fig.~\ref{subfig:cube}).

In~\cite{Be+22} a version of \slcs{,} referred to as  \slcsG{} in this paper, has been proposed for expressing spatial properties of points laying in polyhedral models, and in particular {\em conditional reachability} properties. Many other interesting properties, such as ``being surrounded by'' can be expressed using reachability (see~\cite{Be+22}). 
Intuitively, a point $x$ in a polyhedral model satisfies the conditional reachability formula
$\gamma(\form_1,\form_2)$ if there is a topological path starting from $x$, ending in a point $y$ satisfying $\form_2$,
and such that all the intermediate points of the path between $x$ and $y$ satisfy $\form_1$, but note that neither $x$ nor $y$ is required to satisfy $\form_1$. 
A notion of bisimilarity between points has also been introduced in~\cite{Be+22}, namely {\em simplicial bisimilarity}, and it has been proven that the latter enjoys the HMP w.r.t. \slcsG{.} In addition, a representation $\map$ of polyhedral models as finite posets has been built and it has been shown that a point $x$ in a polyhedral model $\calP$ satisfies
a \slcsG{} formula $\form$ in $\calP$ iff its representation $\map(x)$ in the poset model $\map(\calP)$ representing $\calP$ satisfies $\form$ in $\map(\calP)$. An  \slcsG{} model-checking algorithm has been developed for finite poset models that has been implemented in the tool \polylogica{} thus achieving model-checking of continuous space
that can be represented by polyhedral models (see~\cite{Be+22} for details). 

In~\cite{Ci+23c} we addressed the issue of minimisation for polyhedral models and, more specifically, their poset representations.
In particular, we defined \plm\nobreakdash-bisimilarity, a notion based on \plm-paths, a subclass of undirected paths over poset models suitable for representing, in such models, topological paths over polyhedral models. We proved that
\plm-bisimilarity enjoys the HMP w.r.t. \slcsG{} interpreted on finite poset models and we showed that it can be used
for poset model minimisation: for instance, the minimal model of the poset model of Fig.~\ref{subfig:PolyhedronNoPathPosetCompressed} has only 10 elements. 

In this paper we present \slcsE{,} another variant of \slcs{} for polyhedral models where the $\gamma$ modality has been replaced by $\eta$ so that a point $x$ satisfies
$\eta(\form_1,\form_2)$ if there is a topological path starting from $x$, ending in a point $y$ satisfying $\form_2$,
and such that  all the intermediate points of the path between $x$ and $y$,  {\em and including $x$ itself},  satisfy $\form_1$ ($y$ is not required to satisfy $\form_1$). Thus $\gamma$ and $\eta$ behave differently {\em only} in $\eta$ requiring that $x$ itself satisfies $\form_1$.

The result is that \slcsE{} is weaker than \slcsG{} in the sense that it distinguishes less points than \slcsG{.} Furthermore, \slcsG{} can express proximity, here intended as topological closure --- that boils down to the  standard {\em possibly} modality $\Diamond$ in the poset model interpretation --- whereas \slcsE{} cannot. Nevertheless, many interesting reachability properties can be expressed in \slcsE{} and, perhaps most importantly, the latter characterises bisimilarities (in the polyhedral model and the associated poset model) that are coarser than simplicial bisimilarity and \plm-bisimilarity, respectively.
This allows for a substantial model reduction. For instance, the minimal model, w.r.t. the new equivalence, of the poset of Fig.~\ref{subfig:PolyhedronNoPathPosetCompressed},
shown in Fig.~\ref{fig:exa:MinRunExaE}, has only 4 states now. This greater reduction in model size is one of the main motivations for the study of \slcsE{} presented in this paper. 

In the remainder of this paper, we provide necessary background information in Sect.~\ref{sec:BackAndNotat}. 
Sect.~\ref{sec:slcsE} introduces \slcsE{} and addresses its relationship with \slcsG{.} It is also shown that \slcsE{} is
preserved and reflected by the mapping $\map$ form  polyhedral models to finite poset models. 
Weak simplicial bisimilarity and weak \plm-bisimilarity are defined in
Sect.~\ref{sec:WeakBis} where it is also shown that they enjoy the HMP w.r.t. the intepretation of \slcsE{}
on polyhedral models and on finite poset models, respectively. 
A larger example is given in Sect.~\ref{sec:example}, illustrating the reduction potential of weak \plm-bisimulation.
Finally, conclusions and a discussion on
future work are reported in Sect.~\ref{sec:ConclusionsFW}.
The proofs for all the results presented in Sections~\ref{sec:BackAndNotat}, \ref{sec:slcsE} and \ref{sec:WeakBis} are provided in Appendix~\ref{apx:DetailedProofs} whereas in Appendix~\ref{apx:BackgroundInDetail}
we recall background information and results, as well as notational details.

\section{Background and Notation}\label{sec:BackAndNotat}

In this section we recall the relevant details of the language \slcsG{,} its polyhedral and poset models, and the truth-preserving map $\map$ between these models.

For sets $X$ and $Y$, a function $f:X \to Y$, and subsets $A \subseteq X$ and $B \subseteq Y$ 
we define $f(A)$ and $f^{-1}(B)$ as $\ZET{f(a)}{a \in A}$  and $\ZET{a}{f(a) \in B}$, respectively. 
The  {\em restriction} of  $f$ on $A$ is denoted by $f|A$.
The powerset of $X$ is denoted by $\pws{X}$.
For a relation $R\subseteq X\times X$ we let $\cnv{R}=\ZET{(y,x)}{(x,y)\in R}$ denote its converse and $\dircnv{R}$ denote $R \, \cup  \cnv{R}$.
In the remainder of the paper we assume that a set  $\ap$ of {\em proposition letters} is fixed.
The sets of natural numbers and of real numbers are denoted by $\nats$ and $\reals$, respectively. 
We use the standard interval notation: for $x,y \in \reals$ we let $[x,y]$ be the set
$\ZET{r\in \reals}{x\leq r \leq y}$, $[x,y) = \ZET{r\in \reals}{x\leq r < y}$, and so on.
Intervals of $\reals$ are equipped with the Euclidean topology inherited from $\reals$.
We use a similar notation for intervals over $\nats$:  
for $n,m \in \nats$ 
$[m;n]$ denotes the set $\ZET{i\in\nats}{m\leq i \leq n}$,
$[m;n)=\ZET{i\in\nats}{m\leq i < n}$, and so on.

Below we recall, informally, some basic notions, assuming that the reader is familiar with topological spaces, Kripke models and posets. For all the details concerning basic notions and notation we refer the reader to~\cite{Be+22,Ci+23c}.

A {\em simplex} $\sigma$ is the convex hull of a set of  $d+1$ affinely independent points in $\reals^m$, with $d \leq m$, 
i.e. $\sigma = \ZET{ \lambda_0\mathbf{v_0} + \ldots + \lambda_d\mathbf{v_d}}{\lambda_0,\ldots,\lambda_d \in [0,1]\mbox{ and }
\sum_{i=0}^{d} \lambda_i = 1}$. For instance, a segment $AB$  together with its end-points $A$ and $B$ 
is a simplex in $\reals^m$, for $m\geq 1$. 
Given a simplex $\sigma$ with  vertices $\mathbf{v_0},\ldots,\mathbf{v_d}$, any
subset of $\SET{\mathbf{v_0},\ldots,\mathbf{v_d}}$  spans a simplex $\sigma'$ in turn: we say that $\sigma'$ is a {\em face} of $\sigma$,
written $\sigma' \sqsubseteq \sigma$. 
So, for instance, both $A$ and $B$ are simplexes in turn --- with $A \sqsubseteq AB$ and
$B \sqsubseteq AB$ --- and $AB$ itself could be part of a larger simplex, e.g., a triangle $ABC$.
Clearly, $\sqsubseteq$ is a partial order. 
The {\em barycentre} $b_{\sigma}$ of $\sigma$ is defined as  follows:
$
b_{\sigma} =
\sum_{i=0}^d \frac{1}{d+1}\mathbf{v_i}
$.

The relative interior $\relint{\sigma}$ of a simplex $\sigma$ is the same as $\sigma$ ``without its borders'', i.e. the set $\ZET{ \lambda_0\mathbf{v_0} + \ldots + \lambda_d\mathbf{v_d}}{\lambda_0,\ldots,\lambda_d \in (0,1]\mbox{ and } \sum_{i=0}^{d} \lambda_i = 1}$. For instance, the open segment $\relint{AB}$, without the
end-points $A$ and $B$ is the relative interior of segment $AB$. The relative interior of a simplex is often called {\em a cell} and is equal to the topological interior taken inside the affine hull of the simplex.\footnote{But note that the relative interior of a simplex composed of just a single point is the point itself and not the empty set.} There is an obvious partial order between the cells of a simplex: 
$\relint{\sigma_1} \preceq \relint{\sigma_2}$ iff $\relint{\sigma_1} \subseteq \closure_T(\relint{\sigma_2})$,
where $\closure_T$ denotes the classical topological closure operator. So, 
in the above example, we have $\relint{A}\preceq \relint{A}, \relint{B}\preceq \relint{B}, \relint{A}\preceq \relint{AB}$, $\relint{B}\preceq \relint{AB}$, and $\relint{AB}\preceq \relint{AB}$.
Note that for all simplexes $\sigma_1$ and $\sigma_2$ the following holds: $\sigma_1 \sqsubseteq \sigma_2$ iff 
$\relint{\sigma_1} \preceq \relint{\sigma_2}$.

A {\em simplicial complex} $K$
 is a finite collection of simplexes of $\reals^m$ such that:
(i) if $\sigma \in K$ and $\relint{\sigma}' \preceq \relint{\sigma}$ then also $\sigma' \in K$; 
(ii) if $\sigma, \sigma' \in K$ then 
$\relint{\sigma \cap \sigma'} \preceq \relint{\sigma}\cap \relint{\sigma}'$.
Given a simplicial complex $K$,  the {\em cell poset} of $K$ is the poset $(\relint{K},\preceq)$ where
$\relint{K}$ is the set $\ZET{\relint{\sigma}}{\sigma \in K\setminus \SET{\emptyset}}$ and
the {\em polyhedron} $|K|$ of $K$ is the set-theoretic union of the simplexes in $K$. Note that
$|K|$ inherits the topological structure of $\reals^m$. 

A {\em polyhedral model} is a pair $(|K|,V)$
where $V: \ap \to \pws{|K|}$ maps every proposition letter $p\in \ap$ to the set of points of $|K|$ satisfying $p$. It is required that, for all $p\in \ap$, $V(p)$ is always a union of cells in $\relint{K}$.
Similarly, a poset model $(W,\preceq,\peval)$ is a poset equipped with a  valuation
function $\peval:\ap \to \pws{W}\!$. Given polyhedral model $\calP= (|K|,V)$, we say that
 $(\relint{K},\preceq,\peval)$ is the  {\em cell poset model} of $\calP$  iff 
 $(\relint{K},\preceq)$ is the cell poset of $K$ and, for all $\relint{\sigma}\in \relint{K}$, we have:
 $\relint{\sigma} \in \peval(p)$ iff  $\relint{\sigma} \subseteq V(p)$. 
 We let  $\map(\calP)$ denote the cell poset model of $\calP$ and, with a little bit of overloading, for all $x\in |K|$, $\map(x)$ denotes the
 unique cell $\relint{\sigma}$ such that $x \in \relint{\sigma}$. Note that $\map: |K| \to \relint{K}$ is a continuous function~\cite[Corollary 3.4]{BMMP2018}. Furthermore, note that poset models are a subclass of Kripke models. In the sequel, when we say that $\calF$ is a cell poset model, we mean that there exists a polyhedral model $\calP$ such that $\calF=\map(\calP)$.

Fig.~\ref{fig:PolyhedronNoPathCompressed} shows a polyhedral model. There are three 
proposition letters, $\mathbf{red}$, $\mathbf{green}$ and $\mathbf{gray}$, shown by different colours (\ref{subfig:PolyhedronNoPathCompressed}). The model is ``unpacked'' into its cells in Fig.~\ref{subfig:PolyhedronNoPathCellsCompressed}. The latter are collected in the cell poset model, whose Hasse diagram is shown in Fig.~\ref{subfig:PolyhedronNoPathPosetCompressed}.

\begin{figure}[h]
\subfloat[]{\label{subfig:PolyhedronNoPathCompressed}
\resizebox{0.9in}{!}
{
\begin{tikzpicture}[scale=1.4,label distance=-2pt]
	    \tikzstyle{point}=[circle,draw=black,fill=white,inner sep=0pt,minimum width=4pt,minimum height=4pt]
	    \node (p0)[point,draw=red,label={270:$B$}] at (0,0) {};
	    	\filldraw [red] (p0) circle (1.25pt);
	    \node (p1)[point,draw=gray,label={ 90:$A$}] at (0,1) {};
	    	\filldraw [gray] (p1) circle (1.25pt);
	    \node (p2)[point,draw=gray,label={270:$D$}] at (1,0) {};
	    \node (p3)[point,draw=red,label={ 90:$C$}] at (1,1) {};
	    	\filldraw [red] (p3) circle (1.25pt);
	    \node (p4)[point,draw=gray,label={270:$F$}] at (2,0) {};
	    \node (p5)[point,draw=gray,label={ 90:$E$}] at (2,1) {};

	    \draw [red   ,thick](p0) -- (p1);
	    \draw [red   ,thick](p0) -- (p2);
	    \draw [red   ,thick](p0) -- (p3);
	    \draw [red   ,thick](p1) -- (p3);
	    \draw [red   ,thick](p2) -- (p3);	    
    \draw [dashed      ](p2) -- (p4);
    \draw [dashed      ](p2) -- (p5);
    \draw [dashed      ](p3) -- (p5);
    \draw [dashed      ](p4) -- (p5);
    \draw [gray,thick](p2) -- (p4);
    \draw [gray,thick](p4) -- (p5);
    \draw [gray,thick](p2) -- (p5);
    \draw [gray,thick](p3) -- (p5);
	        
	    \begin{scope}[on background layer]
	    \fill [fill=red!50  ](p0.center) -- (p1.center) -- (p3.center);
	    \fill [fill=red!50  ](p0.center) -- (p3.center) -- (p2.center);
	    \fill [fill=green!50](p2.center) -- (p3.center) -- (p5.center);
            \fill [fill=gray!50](p2.center) -- (p4.center) -- (p5.center);    	    
            \end{scope}

    \filldraw [gray] (p2) circle (1.25pt);
    \filldraw [gray] (p4) circle (1.25pt);
    \filldraw [gray] (p5) circle (1.25pt); 
	\end{tikzpicture}
	}
}
\subfloat[]{\label{subfig:PolyhedronNoPathCellsCompressed}
\resizebox{1.5in}{!}
{
\begin{tikzpicture}[scale=1.3,label distance=-2pt]
	    \tikzstyle{point}=[circle,fill=white,inner sep=0pt,minimum width=4pt,minimum height=4pt]
	    \node (p0S0d)[point,draw=red,fill=red,label={270:$B$}] at (0,0) {};
	    \node (p0S1d)[point] at (0.33,0) {};
	    \node (p0S2d)[point] at (0.66,0) {};
	    \node (p0S3d)[point] at (0.99,0) {};
	    \node (p0S0u)[point] at (0,0.33) {};
	    \node (p0S1u)[point] at (0.33,0.33) {};
	    \node (p0S2u)[point] at (0.66,0.33) {};
	    \node (p0S3u)[point] at (0.99,0.33) {};
	    
	    \node (p1S0d)[point] at (0,1.33) {};
	    \node (p1S1d)[point] at (0.33,1.33) {};
	    \node (p1S0u)[point,fill=gray,label={90:$A$}] at (0,1.66) {};
	    \node (p1S1u)[point] at (0.33,1.66) {};
	    
	    \node (p2S0u)[point] at (1.99,0.33) {};
	    \node (p2S0d)[point] at (1.99,0) {};
	    \node (p2S1u)[point] at (2.32,0.33) {};
	    \node (p2S1d)[point,fill=gray,label={270:$D$}] at (2.32,0.0) {};
	    \node (p2S2u)[point] at (2.65,0.33) {};
	    \node (p2S3u)[point] at (2.98,0.33) {};
	    \node (p2S4u)[point] at (3.31,0.33) {};
	    \node (p2S4d)[point] at (3.31,0.0) {};
	    
	    \node (p3S0u)[point] at (1.33,1.66) {};
	    \node (p3S1u)[point,fill=red,label={90:$C$}] at (2.32,1.66) {};
	    \node (p3S0d)[point] at (1.33,1.33) {};
	    \node (p3S1d)[point] at (1.66,1.33) {};
	    \node (p3S2u)[point] at (2.65,1.66) {};
	    \node (p3S3u)[point] at (2.98,1.66) {};
	    \node (p3S2d)[point] at (1.99,1.33) {};
	    \node (p3S3d)[point] at (2.32,1.33) {};
	    \node (p3S4d)[point] at (2.65,1.33) {};
	    
	    \node (p4S0u)[point] at (4.31,0.33) {};
	    \node (p4S0d)[point] at (4.31,0.0) {};
	    \node (p4S1u)[point] at (4.64,0.33) {};
	    \node (p4S1d)[point,fill=gray,label={270:$F$}] at (4.64,0.0) {};	    
	    
	    \node (p5S0u)[point] at (3.65,1.66) {};
	    \node (p5S1u)[point,fill=gray,label={90:$E$}] at (4.64,1.66) {};
	    \node (p5S0d)[point] at (3.65,1.33) {};
	    \node (p5S1d)[point] at (3.98,1.33) {};
	    \node (p5S2d)[point] at (4.31,1.33) {};
	    \node (p5S3d)[point] at (4.64,1.33) {};
	    
	    \draw [red,thick](p0S0u) -- (p1S0d);
	    \draw [red,thick](p1S1u) -- (p3S0u);
	    \draw [red,thick](p0S2u) -- (p3S1d);
	    \draw [red,thick](p0S3d) -- (p2S0d);
	    \draw [red,thick](p3S3d) -- (p2S1u);
	    \draw [gray,thick](p2S3u) -- (p5S1d);
	    \draw [gray,thick](p3S2u) -- (p5S0u);
	    \draw [gray,thick](p2S4d) -- (p4S0d);
	    \draw [gray,thick](p5S3d) -- (p4S1u);
	    	    
	    \begin{scope}[on background layer]
	    \fill [fill=red!50  ](p0S1u.center) -- (p1S1d.center) -- (p3S0d.center);
 	    \fill [fill=red!50  ](p0S3u.center) -- (p3S2d.center) -- (p2S0u.center);
	    \fill [fill=green!50  ](p2S2u.center) -- (p3S4d.center) -- (p5S0d.center);
	    \fill [fill=gray!50  ](p2S4u.center) -- (p5S2d.center) -- (p4S0u.center);    
            \end{scope}
	\end{tikzpicture}
	}
}
\subfloat[]{\label{subfig:PolyhedronNoPathPosetCompressed}
\resizebox{2in}{!}
{
\begin{tikzpicture}[scale=20, every node/.style={transform shape}]
    \tikzstyle{kstate}=[rectangle,draw=black,fill=white]
    \tikzset{->-/.style={decoration={
		markings,
		mark=at position #1 with {\arrow{>}}},postaction={decorate}}}
    
    \node[kstate,fill=red!50  ] (P0) at (  1,0) {$\relint{B}$};
    \node[kstate,fill=lightgray!50  ] (P1) at (  0,0) {$\relint{A}$};
    \node[kstate,fill=lightgray!50] (P2) at (3.5,0) {$\relint{D}$};
    \node[kstate,fill=red!50  ] (P3) at (2.5,0) {$\relint{C}$};
    \node[kstate,fill=lightgray!50] (P4) at (  6,0) {$\relint{F}$};
    \node[kstate,fill=lightgray!50] (P5) at (  5,0) {$\relint{E}$};

    \node[kstate,fill=red!50] (E0) at (-1,1) {$\relint{AB}$};
    \node[kstate,fill=red!50] (E1) at ( 2,1) {$\relint{BD}$};
    \node[kstate,fill=red!50] (E2) at ( 1,1) {$\relint{BC}$};
    \node[kstate,fill=red!50  ] (E3) at ( 0,1) {$\relint{AC}$};
    \node[kstate,fill=red!50  ] (E4) at ( 3,1) {$\relint{CD}$};
	\node[kstate,fill=lightgray!50] (E5) at ( 6,1) {$\relint{DF}$};
	\node[kstate,fill=lightgray!50] (E6) at ( 5,1) {$\relint{DE}$};
	\node[kstate,fill=lightgray!50] (E7) at ( 4,1) {$\relint{CE}$};
	\node[kstate,fill=lightgray!50] (E8) at ( 7,1) {$\relint{EF}$};

    \node[kstate,fill=red!50] (T0) at ( 2,2) {$\relint{BCD}$};
    \node[kstate,fill=red!50] (T1) at ( 0,2) {$\relint{ABC}$};
    \node[kstate,fill=lightgray!50] (T2) at ( 6,2) {$\relint{DEF}$};
    \node[kstate,fill=green!50] (T3) at ( 4,2) {$\relint{CDE}$};

    \draw (P0) to (E0);
    \draw (P0) to (E1);
    \draw (P0) to (E2);

    \draw (P1) to (E0);
    \draw (P1) to (E3);

    \draw (P2) to (E1);
    \draw (P2) to (E4);
    \draw (P2) to (E5);
    \draw (P2) to (E6);

    \draw (P3) to (E2);
    \draw (P3) to (E3);
    \draw (P3) to (E4);
    \draw (P3) to (E7);

    \draw (P4) to (E5);
    \draw (P4) to (E8);

    \draw (P5) to (E6);
    \draw (P5) to (E7);
    \draw (P5) to (E8);
    \draw (E0) to (T1);
    \draw (E2) to (T0);
    \draw (T1) to (E2);


    \draw (E1) to (T0);
    
%

    \draw (E3) to (T1);
    
    \draw (E4) to (T0);
    \draw (E4) to (T3);

	\draw (E5) to (T2);

	\draw (E6) to (T2);
	\draw (E6) to (T3);

	\draw (E7) to (T3);

	\draw (E8) to (T2);


\end{tikzpicture}
}
}
\caption{A polyhedral model $\calP$ (\ref{subfig:PolyhedronNoPathCompressed}) with its cells (\ref{subfig:PolyhedronNoPathCellsCompressed}) and the Hasse diagram of the related cell poset (\ref{subfig:PolyhedronNoPathPosetCompressed}).}
\label{fig:PolyhedronNoPathCompressed}
\end{figure}

In a topological space $(X,\tau)$, a {\em topological path} from $x\in X$ is a total, continuous function $\pi : [0,1] \to X$ such that $\pi(0)=x$.
We call $\pi(0)$ and $\pi(1)$ the {\em starting}  and {\em ending}  {\em point} of $\pi$, respectively, while
 $\pi(r)$ is an {\em intermediate point} of $\pi$, for all $r\in(0,1)$. Fig.~\ref{subfig:PolyhedronWithPath} shows a path from a point $x$ in the open segment $\relint{AB}$ in the polyhedral model of Fig~\ref{subfig:PolyhedronNoPathCompressed}. 

Topological paths are represented in cell posets by so-called \plm-paths, a subclass of undirected paths~\cite{Be+22}. 
For technical reasons\footnote{We are interested in model-checking structures resulting from the minimisation, via bisimilarity, of cell poset models, and such structures are often just (reflexive) Kripke models rather than poset models.} in this paper we extend the definition given in~\cite{Be+22} to general Kripke frames. Given a Kripke frame $(W,R)$, an {\em undirected  path} of length $\ell \in \nats$ from $w$ is a total function $\pi : [0;\ell] \to X$  such that $\pi(0)=x$ and, for all $i\in [0;\ell)$,  $\dircnv{R}(\pi(i),\pi(i+1))$. The
{\em starting} and {\em ending points} are $\pi(0)$ and $\pi(\ell)$, respectively, while $\pi(i)$ is an intermediate point for all $i\in(0;\ell)$. The path is a {\em \plm-path} iff $\ell\geq 2$, $R(\pi(0),\pi(1))$ and $\cnv{R}(\pi(\ell-1),\pi(\ell))$. 

The \plm-path 
$(\relint{AB},\relint{ABC},\relint{BC},\relint{BCD},\relint{D})$\footnote{For undirected path $\pi$ of length $\ell$ we often use the sequence notation $(x_i)_{i=0}^{\ell}$ where
$x_i=\pi(i)$ for $i\in [0;\ell]$.}, drawn in blue in Fig.~\ref{subfig:PosetWithPath}, faithfully represents the path from $x$ shown in Fig.~\ref{subfig:PolyhedronWithPath}. Note that a path $\pi$ such that, say, $\pi(0)\in \relint{CD}$, 
$\pi(1) = E$ and $\pi((0,1))\subseteq \relint{CDE}$, i.e. a path that ``jumps immediately'' to  
 $\relint{CDE}$ after starting in $\relint{CD}$ cannot be represented in the poset by any undirected path $\pi'$, of some length $\ell\geq 2$ such that $\pi'(0) \succ \pi'(1)$ (or $\pi'(\ell-1) \prec \pi'(\ell)$, for symmetry reasons), while it is correctly represented by the \plm-path 
 $(\relint{CD},\relint{CDE},\relint{E})$, where $\relint{CD} \prec \relint{CDE} \succ \relint{E}$.
 
In the context of this paper it is often convenient to use a generalisation of \plm-paths,
so-called ``down paths'', \dwn-paths for short: a \dwn-path from $w$, of length $\ell \geq 1$, is an undirected path $\pi$ from $w$ of length $\ell$ such that  $\cnv{R}(\pi(\ell-1),\pi(\ell))$.
Clearly, every \plm-path is also a \dwn-path. The following lemma ensures that in {\em reflexive} Kripke frames \plm- and \dwn-paths can be safely used interchangeably since for every \dwn-path there is a \plm-path with the same starting and ending points and with the same set of intermediate points, occurring in the same order:

\begin{lemma}\label{lem:d2plm}
Given a reflexive Kripke frame  $(W,R)$ and a \dwn-path $\pi:[0;\ell]\to W$,
there is a \plm-path $\pi':[0;\ell'']\to W$, for some $\ell'$, and a 
total, surjective, monotonic, non-decreasing function $f:[0;\ell'] \to [0;\ell]$ with  
$\pi'(j)=\pi(f(j))$ for all $j\in [0;\ell']$.
\end{lemma}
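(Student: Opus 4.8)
The plan is to exploit reflexivity of $R$ to repair exactly the two features in which a \dwn-path may fail to be a \plm-path. A \dwn-path is only required to have length $\ell \geq 1$ and to end with a ``down'' step $\cnv{R}(\pi(\ell-1),\pi(\ell))$, whereas a \plm-path additionally needs length at least $2$ and to begin with an ``up'' step $R(\pi'(0),\pi'(1))$. Since $R$ is reflexive, $R(w,w)$ holds for every $w \in W$, so a self-loop at the initial point is simultaneously a legal step of an undirected path and a genuine ``up'' step. The idea is therefore to prepend one such self-loop at the start of $\pi$, which increases the length by one and installs the required up-step, while leaving the ending of the path untouched.

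Concretely, I would set $\ell' = \ell + 1$ and define $\pi':[0;\ell']\to W$ by $\pi'(0)=\pi(0)$ and $\pi'(j)=\pi(j-1)$ for $j\in[1;\ell']$; equivalently, $\pi'=\pi(0),\pi(0),\pi(1),\dots,\pi(\ell)$. To see that $\pi'$ is an undirected path, note that its first step $(\pi'(0),\pi'(1))=(\pi(0),\pi(0))$ lies in $R\subseteq\dircnv{R}$ by reflexivity, while each later step $(\pi'(j),\pi'(j+1))=(\pi(j-1),\pi(j))$ for $j\in[1;\ell]$ lies in $\dircnv{R}$ because $\pi$ is an undirected path. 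It is a \plm-path because $\ell'=\ell+1\geq 2$ (as $\ell\geq 1$), because $R(\pi'(0),\pi'(1))=R(\pi(0),\pi(0))$ holds by reflexivity, and because $\cnv{R}(\pi'(\ell'-1),\pi'(\ell'))=\cnv{R}(\pi(\ell-1),\pi(\ell))$ holds since $\pi$ is a \dwn-path.

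Finally I would define $f:[0;\ell']\to[0;\ell]$ by $f(0)=0$ and $f(j)=j-1$ for $j\in[1;\ell']$, so that $\pi'(j)=\pi(f(j))$ holds by construction. This $f$ is total and non-decreasing (it equals $0$ at both $0$ and $1$, then increases by one at each subsequent step), and it is surjective because $f(\,[1;\ell']\,)=[0;\ell]$. The construction works uniformly, whether or not $\pi$ already happened to be a \plm-path, so no case distinction on the shape of $\pi$ is needed. Accordingly I do not expect a genuine obstacle here: the only point requiring care is the index bookkeeping around the prepended step (so that $f$ comes out surjective and $\pi'=\pi\circ f$), and the single essential insight is that reflexivity makes a stutter step count as an up-step, which is precisely what upgrades the start of a \dwn-path to that of a \plm-path.
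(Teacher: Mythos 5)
Your proof is correct, and it is noticeably more direct than the one in the paper. The paper derives Lemma~\ref{lem:d2plm} as an immediate corollary of Lemma~\ref{lem:d2ud}, which converts a \dwn-path into an \upd-path (a \plm-path with strictly alternating up/down steps) by induction on the length, inserting reflexive stutter steps wherever the alternation would otherwise break; since every \upd-path is a \plm-path, the claim follows. You instead observe that the only defects a \dwn-path can have relative to a \plm-path are its length possibly being $1$ and the absence of an initial up-step, and that a single prepended self-loop $R(\pi(0),\pi(0))$ — legitimate by reflexivity — repairs both at once, with the reindexing function $f$ handled explicitly and correctly (total, surjective, non-decreasing, and $\pi'=\pi\circ f$). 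This buys a short, induction-free argument for exactly the stated lemma; what the paper's longer route buys is the stronger \upd-path normal form, which it genuinely needs elsewhere (e.g.\ in Lemma~\ref{lem:dPExists4udP} and hence in the proof of Lemma~\ref{lem:dPExists}), so the authors prove the stronger statement once and specialise. Your construction is in fact the same trick the paper uses in the base case of Lemma~\ref{lem:d2ud}, applied globally rather than step by step.
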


\begin{figure}[t]
\begin{center}
\subfloat[]{\label{subfig:PolyhedronWithPath}
\resizebox{1in}{!}{
\begin{tikzpicture}[scale=1.3,label distance=-2pt]
	    \tikzstyle{point}=[circle,draw=black,fill=white,inner sep=0pt,minimum width=4pt,minimum height=4pt]
	    \node (p0)[point,draw=red,label={270:$B$}] at (0,0) {};
	    	\filldraw [red] (p0) circle (1.25pt);
	    \node (p1)[point,draw=gray,label={ 90:$A$}] at (0,1) {};
	    	\filldraw [gray] (p1) circle (1.25pt);
	    \node (p2)[point,draw=gray,label={270:$D$}] at (1,0) {};
	    \node (p3)[point,draw=red,label={ 90:$C$}] at (1,1) {};
	    	\filldraw [red] (p3) circle (1.25pt);
	    \node (p4)[point,draw=gray,label={270:$F$}] at (2,0) {};
	    \node (p5)[point,draw=gray,label={ 90:$E$}] at (2,1) {};

	    \draw [red   ,thick](p0) -- (p1);
	    \draw [red   ,thick](p0) -- (p2);
	    \draw [red   ,thick](p0) -- (p3);
	    \draw [red   ,thick](p1) -- (p3);
	    \draw [red   ,thick](p2) -- (p3);	    
    \draw [dashed      ](p2) -- (p4);
    \draw [dashed      ](p2) -- (p5);
    \draw [dashed      ](p3) -- (p5);
    \draw [dashed      ](p4) -- (p5);
    \draw [gray,thick](p2) -- (p4);
    \draw [gray,thick](p4) -- (p5);
    \draw [gray,thick](p2) -- (p5);
    \draw [gray,thick](p3) -- (p5);
	        
	    \begin{scope}[on background layer]
	    \fill [fill=red!50  ](p0.center) -- (p1.center) -- (p3.center);
	    \fill [fill=red!50  ](p0.center) -- (p3.center) -- (p2.center);
	    \fill [fill=green!50](p2.center) -- (p3.center) -- (p5.center);
            \fill [fill=gray!50](p2.center) -- (p4.center) -- (p5.center);    	    
            \end{scope}

	    \node at (-.15,.5) {$x$};
	    \fill[blue] (0,.5) circle (.7pt);
	    \draw [blue] plot [smooth,tension=1] coordinates { (0,.5) (.6,.3) (1,0)};
	    \fill[blue] (1,0) circle (.7pt);
    \filldraw [gray] (p2) circle (1.25pt);
    \filldraw [gray] (p4) circle (1.25pt);
    \filldraw [gray] (p5) circle (1.25pt); 
	\end{tikzpicture}
}
	}\quad\quad\quad
\subfloat[]{\label{subfig:PosetWithPath}
\resizebox{2in}{!}{
\begin{tikzpicture}[scale=0.8, every node/.style={transform shape}]
    \tikzstyle{kstate}=[rectangle,draw=black,fill=white]
    \tikzset{->-/.style={decoration={
		markings,
		mark=at position #1 with {\arrow{>}}},postaction={decorate}}}
    
    \node[kstate,fill=red!50  ] (P0) at (  1,0) {$\relint{B}$};
    \node[kstate,fill=lightgray!50  ] (P1) at (  0,0) {$\relint{A}$};
    \node[kstate,fill=lightgray!50,draw=blue,thick] (P2) at (3.5,0) {$\relint{D}$};
    \node[kstate,fill=red!50  ] (P3) at (2.5,0) {$\relint{C}$};
    \node[kstate,fill=lightgray!50] (P4) at (  6,0) {$\relint{F}$};
    \node[kstate,fill=lightgray!50] (P5) at (  5,0) {$\relint{E}$};

    \node[kstate,fill=red!50  ,draw=blue,thick] (E0) at (-1,1) {$\relint{AB}$};
    \node[kstate,fill=red!50  ] (E1) at ( 2,1) {$\relint{BD}$};
    \node[kstate,fill=red!50  ,draw=blue,thick] (E2) at ( 1,1) {$\relint{BC}$};
    \node[kstate,fill=red!50  ] (E3) at ( 0,1) {$\relint{AC}$};
    \node[kstate,fill=red!50  ] (E4) at ( 3,1) {$\relint{CD}$};
	\node[kstate,fill=lightgray!50] (E5) at ( 6,1) {$\relint{DF}$};
	\node[kstate,fill=lightgray!50] (E6) at ( 5,1) {$\relint{DE}$};
	\node[kstate,fill=lightgray!50] (E7) at ( 4,1) {$\relint{CE}$};
	\node[kstate,fill=lightgray!50] (E8) at ( 7,1) {$\relint{EF}$};

    \node[kstate,fill=red!50  ,draw=blue,thick] (T0) at ( 2,2) {$\relint{BCD}$};
    \node[kstate,fill=red!50  ,draw=blue,thick] (T1) at ( 0,2) {$\relint{ABC}$};
    \node[kstate,fill=lightgray!50] (T2) at ( 6,2) {$\relint{DEF}$};
    \node[kstate,fill=green!50] (T3) at ( 4,2) {$\relint{CDE}$};

    \draw (P0) to (E0);
    \draw (P0) to (E1);
    \draw (P0) to (E2);

    \draw (P1) to (E0);
    \draw (P1) to (E3);

    \draw (P2) to (E1);
    \draw (P2) to (E4);
    \draw (P2) to (E5);
    \draw (P2) to (E6);

    \draw (P3) to (E2);
    \draw (P3) to (E3);
    \draw (P3) to (E4);
    \draw (P3) to (E7);

    \draw (P4) to (E5);
    \draw (P4) to (E8);

    \draw (P5) to (E6);
    \draw (P5) to (E7);
    \draw (P5) to (E8);

    \draw[blue,thick,->-=.5] (E0) to (T1);

    \draw (E1) to (T0);
    
    \draw[blue,thick,->-=.5] (E2) to (T0);
    
    \draw[blue,thick,->-=.5] (T1) to (E2);

    \draw (E3) to (T1);
    
    \draw (E4) to (T0);
    \draw (E4) to (T3);

	\draw (E5) to (T2);

	\draw (E6) to (T2);
	\draw (E6) to (T3);

	\draw (E7) to (T3);

	\draw (E8) to (T2);

	\begin{scope}[on background layer]
		\draw[blue,thick,->-=.2] (T0) to (P2);
		\draw[blue,thick,->-=.8] (T0) to (P2);
	\end{scope}
\end{tikzpicture}
}
}
\end{center}
\caption{ (\ref{subfig:PolyhedronWithPath}) A topological path from a point $x$ to vertex $D$ in the polyhedral model $\calP$ of Figure~\ref{subfig:PolyhedronNoPathCompressed}. (\ref{subfig:PosetWithPath}) The corresponding \plm-path (in blue) in the Hasse diagram of the cell poset model $\map(\calP)$.
}\label{fig:poset}
\end{figure}

In~\cite{Be+22}, \slcsG{,} a version of \slcs{} for polyhedral models, has been presented that consists of predicate letters, negation, conjunction, and the single modal operator~$\gamma$, expressing conditional reachability.
The satisfaction relation for $\gamma(\form_1, \form_2)$, for polyhedral model $\calP=(|K|,V)$ and
$x\in |K|$, as defined in~\cite{Be+22}, is recalled below:\\[0.5em]
$
\begin{array}{l c l}
\calP, x \models \gamma(\form_1,\form_2) & \Leftrightarrow &
\mbox{a } \mbox{topological path } \pi: [0,1] \to |K| \mbox{ exists such that } \pi(0)=x,\\&&
\calP, \pi(1) \models \form_2, \mbox{and }
\calP, \pi(r) \models \form_1 \mbox{ for all } r\in \!\!(0,\!1).
\end{array}
$\\[0.5em]
We also recall the interpretation of \slcsG{} on poset models. The satisfaction relation for $\gamma(\form_1, \form_2)$, for poset model $\calF=(W,\preceq,\peval)$ and
$w\in W$, is as follows:\\[0.5em]
$
\begin{array}{l c l}
\calF, w \models \gamma(\form_1,\form_2) & \Leftrightarrow &
\mbox{a } \mbox{\plm-path } \pi: [0;\ell] \to W \mbox{ exists such that } \pi(0)=w,\\&&
\calF, \pi(\ell) \models \form_2, \mbox{and }
\calF, \pi(i) \models \form_1 \mbox{ for all } i\in \!\!(0;\!\ell).
\end{array}
$\\[0.5em]
In~\cite{Be+22} it has also been shown that, for all $x\in |K|$ and \slcsG{} formulas $\form$, we have:
$\calP,x \models \form$ iff $\map(\calP),\map(x) \models \form$.
In addition, {\em simplicial bisimilarity}, a novel notion of bisimilarity for polyhedral models, has been defined that uses a subclass of topological paths and it has been shown
to enjoy the classical Hennessy-Milner property: two points $x_1,x_2 \in |K|$ are 
simplicial bisimilar, written $x_1 \sibis^{\calP} x_2$, iff they satisfy the same \slcsG{} formulas, i.e. they are equivalent with 
respect to the logic \slcsG{}, written $x_1 \slcsGeq^{\calP} x_2$.

The result has been extended to {\em \plm-bisimilarity} on finite poset models, a notion of
bisimilarity based on \plm-paths:
$w_1,w_2 \in W$ are  \plm-bisimilar, written $x_1 \plmbis^{\calF} x_2$, iff they satisfy the same \slcsG{} formulas, i.e.  $x_1 \slcsGeq^{\calF} x_2$ (see~\cite{Ci+23c} for details). 
In summary, we have:
\begin{equation}\label{HMext}
x_1 \sibis^{\calP} x_2 \mbox{ iff }
x_1 \slcsGeq^{\calP} x_2 \mbox{ iff }
\map(x_1) \slcsGeq^{\map(\calP)} \map(x_2) \mbox{ iff }
\map(x_1)\plmbis^{\map(\calP)} \map(x_2).
\end{equation}
As an illustration, with reference to Figure~\ref{subfig:PolyhedronNoPathCompressed}, we have that no 
red point, call it  $x$, in the open segment $CD$ is simplicial bisimilar to the red point $C$. In fact, although both $x$ and $C$ satisfy $\gamma(\mathbf{green}, \ltrue)$, we have that 
$C$ satisfies also $\gamma(\mathbf{gray}, \ltrue)$, which is not the case for $x$.
Similarly, with reference to Figure~\ref{subfig:PolyhedronNoPathPosetCompressed}, cell $\relint{C}$ satisfies 
$\gamma(\mathbf{gray}, \ltrue)$, which is not satisfied by $\relint{CD}$.

We aim to obtain a similar result as (\ref{HMext}) above, for a weaker logic introduced in the next section.

 \section{\slcsE{}: Weak \slcs{} on Polyhedral Models}\label{sec:slcsE}
 
 In this section we introduce \slcsE{,} a logic for polyhedral models that is weaker
 than \slcsG{,} yet is still capable of expressing interesting conditional reachability properties. We present also an interpretation of the logic 
 on finite poset models.

\begin{definition}[Weak \slcs{} on polyhedral models - \slcsE{}]\label{def:SlcsEPolMod}
The abstract language of \slcsE{} is the following:
$$
      \form ::= p \; \sep \; \lneg \form \; \sep \; \form_1 \land \form_2 \; \sep \; \eta(\form_1,\form_2). 
$$
The satisfaction relation of \slcsE{} with respect to a given polyhedral model
$\calP = (|K|, V)$,  \slcsE{} formula $\form$, and point $x \in |K|$ is defined recursively on the structure of $\form$ as 
follows:
$$
\begin{array}{l c l}
\calP, x \models p & \Leftrightarrow & x \in V(p);\\
\calP, x \models \lneg \form & \Leftrightarrow & \calP, x \models  \form \mbox{ does not hold};\\
\calP, x \models \form_1 \land \form_2 & \Leftrightarrow &
\calP, x \models \form_1 \mbox{ and } \calP, x \models \form_2;\\
\calP, x \models \eta(\form_1,\form_2) & \Leftrightarrow &
\mbox{a } \mbox{topological path } \pi: [0,1] \to |K| \mbox{ exists such that } \pi(0)=x,\\&&
\calP, \pi(1) \models \form_2, \mbox{and }
\calP, \pi(r) \models \form_1 \mbox{ for all } r\in \!\![0,\!1).
\end{array}
$$
\vspace{-0.3in}\\\mbox{ }\hfill\closedefi
\end{definition}

As usual, disjunction ($\lor$) is derived as the dual of $\land$.
Note that the only difference between $\eta(\form_1,\form_2)$ and 
$\gamma(\form_1,\form_2)$ is that the former requires that  {\em also the 
first element} of a path witnessing the formula satisfies $\form_1$,
hence the use of the left closed interval $[0,1)$ here.
Although this might seem at first sight only a very minor difference, it has considerable consequences of both theoretical and practical nature, as we will see in what follows.
  
\begin{definition}[\slcsE{} Logical Equivalence]\label{def:SlcsEeq}
Given polyhedral model $\calP = (|K|, V)$ and $x_1, x_2 \in |K|$ we say that $x_1$ and $x_2$ are {\em logically equivalent} with respect to \slcsE{,}  
written $x_1 \slcsEeq^{\calP} x_2$, iff, for all \slcsE{} formulas $\form$, the following holds:
$\calP,x_1 \models \form$ if and only if
$\calP,x_2 \models \form$.\closedefi
\end{definition}

In the sequel, we will refrain from indicating the model $\calP$ explicitly as a superscript of  $ \slcsEeq^{\calP}$ when it is clear from the context.
Below, we show that \slcsE{} can be encoded into \slcsG{} so that 
the former is weaker than the latter.

\begin{definition}\label{def:etga}
We define the following encoding of \slcsE{} into \slcsG{:}
$$
\begin{array}{l c l}
\etga(p) & = & p\\
\etga(\lneg \form) & = & \lneg\etga(\form)
\end{array}\quad\quad\quad\quad
\begin{array}{l c l}
\etga(\form_1 \land \form_2) & = &\etga(\form_1) \land \etga(\form_2)\\
\etga(\eta(\form_1,\form_2)) & = & \etga(\form_1) \land \gamma(\etga(\form_1),\etga(\form_2)) 
\end{array}
$$
\vspace{-0.3in}\\\mbox{ }\hfill\closedefi
\end{definition}

The following lemma is easily proven by structural induction:

\begin{lemma}\label{lem:etgaCorrectG}
Let $\calP=(|K|,V)$ be a polyhedral model, $x \in |K|$ and $\form$ a \slcsE{} formula.
Then $\calP,x \models \form$ iff $\calP,x \models \etga(\form)$.
\end{lemma}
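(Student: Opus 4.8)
The plan is to prove Lemma~\ref{lem:etgaCorrectG} by structural induction on the \slcsE{} formula $\form$. The statement asserts that the encoding $\etga$ preserves satisfaction, i.e. $\calP,x \models \form$ iff $\calP,x \models \etga(\form)$, for every polyhedral model $\calP$, every point $x \in |K|$, and every \slcsE{} formula $\form$. The three propositional cases ($p$, $\lneg\form$, $\form_1\land\form_2$) are immediate, since $\etga$ acts homomorphically on these connectives and the satisfaction clauses for \slcsE{} and \slcsG{} coincide on them; the induction hypothesis transfers each subformula's equivalence directly.

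The only case requiring genuine argument is $\form = \eta(\form_1,\form_2)$, where by definition $\etga(\eta(\form_1,\form_2)) = \etga(\form_1) \land \gamma(\etga(\form_1),\etga(\form_2))$. First I would spell out what each side means. By Definition~\ref{def:SlcsEPolMod}, $\calP,x\models\eta(\form_1,\form_2)$ holds iff there is a topological path $\pi:[0,1]\to|K|$ with $\pi(0)=x$, $\calP,\pi(1)\models\form_2$, and $\calP,\pi(r)\models\form_1$ for all $r\in[0,1)$; note the closed left endpoint, so in particular $\calP,x\models\form_1$. On the other side, $\calP,x\models\etga(\form_1)\land\gamma(\etga(\form_1),\etga(\form_2))$ holds iff $\calP,x\models\etga(\form_1)$ and there is a topological path $\pi:[0,1]\to|K|$ with $\pi(0)=x$, $\calP,\pi(1)\models\etga(\form_2)$, and $\calP,\pi(r)\models\etga(\form_1)$ for all $r\in(0,1)$ (open left endpoint, from the \slcsG{} clause for $\gamma$). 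Applying the induction hypothesis to $\form_1$ and $\form_2$ lets me replace $\etga(\form_1)$ by $\form_1$ and $\etga(\form_2)$ by $\form_2$ throughout, so it suffices to show the equivalence between ``$\calP,x\models\form_1$ and a $\gamma$-witnessing path exists'' and ``an $\eta$-witnessing path exists.''

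The heart of the matter is reconciling the two path conditions: the $\eta$-clause demands $\form_1$ on the whole left-closed interval $[0,1)$ using a single path, whereas the conjunction splits this into the separate requirement $\calP,x\models\form_1$ (covering the endpoint $r=0$) together with a $\gamma$-path requiring $\form_1$ only on the open interval $(0,1)$. For the forward direction this is trivial: an $\eta$-witnessing path $\pi$ already satisfies $\form_1$ on all of $(0,1)$ (hence witnesses $\gamma$) and at $0$ gives $\calP,x\models\form_1$. For the converse, given $\calP,x\models\form_1$ and a path $\pi$ witnessing $\gamma(\form_1,\form_2)$ with $\pi(0)=x$, the \emph{same} path $\pi$ witnesses $\eta(\form_1,\form_2)$, because $\form_1$ holds at every $r\in(0,1)$ by the $\gamma$-condition and at $r=0$ because $\pi(0)=x$ and $\calP,x\models\form_1$; together these give $\form_1$ on all of $[0,1)$, and $\form_2$ at $\pi(1)$ is unchanged. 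I expect the main (mild) obstacle to be stating the endpoint bookkeeping cleanly, namely making explicit that the single hypothesis $\calP,x\models\form_1$ is exactly what extends the open-interval $\gamma$-condition to the closed-interval $\eta$-condition at the single point $r=0$, so that no reparametrisation or gluing of paths is needed and the identical path serves both directions.
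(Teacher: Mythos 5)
Your proposal is correct and follows essentially the same route as the paper: structural induction, with the $\eta$ case handled by observing that the extra conjunct $\etga(\form_1)$ in the encoding accounts exactly for the left endpoint $r=0$, so the identical path witnesses both sides once the induction hypothesis replaces $\etga(\form_i)$ by $\form_i$. The endpoint bookkeeping you describe is precisely what the paper's proof does.
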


A direct consequence of Lemma~\ref{lem:etgaCorrectG} is that \slcsE{} is weaker than \slcsG{.}

\begin{proposition}\label{prop:SLCSGimplWSLCSG}
Let $\calP=(|K|,V)$ be a polyhedral model.
For all $x_1,x_2 \in |K|$ the following holds:
if $x_1 \slcsGeq{} x_2$ then $x_1 \slcsEeq x_2$.
\end{proposition}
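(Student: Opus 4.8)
The plan is to derive this directly from Lemma~\ref{lem:etgaCorrectG}, so that essentially no new work is required beyond a short chain of biconditionals. First I would record the immediate syntactic observation that the encoding $\etga$ of Definition~\ref{def:etga} maps every \slcsE{} formula to a genuine \slcsG{} formula: the clauses for $p$, $\lneg$, and $\land$ stay within the common fragment, and the clause $\etga(\eta(\form_1,\form_2)) = \etga(\form_1) \land \gamma(\etga(\form_1),\etga(\form_2))$ produces a formula built only from $\land$ and the $\gamma$ modality. This can be checked by a trivial structural induction, or simply read off from the definition, and it is what lets us feed $\etga(\form)$ into the hypothesis $x_1 \slcsGeq x_2$.

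Next I would fix an arbitrary \slcsE{} formula $\form$ and an arbitrary pair $x_1, x_2 \in |K|$ with $x_1 \slcsGeq x_2$, and establish the three-step equivalence
$$
\calP, x_1 \models \form
\;\Longleftrightarrow\;
\calP, x_1 \models \etga(\form)
\;\Longleftrightarrow\;
\calP, x_2 \models \etga(\form)
\;\Longleftrightarrow\;
\calP, x_2 \models \form.
$$
The first and third biconditionals are instances of Lemma~\ref{lem:etgaCorrectG} applied at $x_1$ and at $x_2$, respectively. The middle biconditional is exactly the hypothesis $x_1 \slcsGeq x_2$ instantiated at the \slcsG{} formula $\etga(\form)$, which is legitimate by the observation above. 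Composing the three equivalences yields $\calP, x_1 \models \form$ iff $\calP, x_2 \models \form$.

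Since $\form$ was an arbitrary \slcsE{} formula, this shows that $x_1$ and $x_2$ satisfy precisely the same \slcsE{} formulas, i.e. $x_1 \slcsEeq x_2$, which is the claim. I do not expect a genuine obstacle here: the entire content has been pushed into Lemma~\ref{lem:etgaCorrectG}, and the only point deserving an explicit word is that $\etga$ lands in \slcsG{} so that the equivalence hypothesis is applicable. The statement is thus a direct corollary, exactly as the preamble to the proposition announces.
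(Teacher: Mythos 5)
Your proposal is correct and matches the paper's intent exactly: the paper presents this proposition as ``a direct consequence of Lemma~\ref{lem:etgaCorrectG}'', and your three-step chain of biconditionals (together with the observation that $\etga$ lands in \slcsG{} so the hypothesis $x_1 \slcsGeq x_2$ is applicable to $\etga(\form)$) is precisely the spelled-out version of that derivation.
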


\begin{remark}\label{rem:weakGweakerG}
The converse of Proposition~\ref{prop:SLCSGimplWSLCSG} does {\em not} hold,
as shown by the example polyhedral model $\calP=(|K|,V)$ of Figure~\ref{fig:AlternatingTriangle}.
It is easy to see that, for all  $x\in \relint{ABC}$, we have $A \not\slcsGeq{} x$ and $A \slcsEeq{} x$.
Let $x\in \relint{ABC}$. Clearly, $A \not\slcsGeq{} x$ since $\calP, A \models \gamma(\mathtt{red},\ltrue)$ whereas $\calP, x \not\models \gamma(\mathtt{red},\ltrue)$. 
It can be easily shown, by  induction on the structure of formulas, that $A \slcsEeq x$ for all $x\in \relint{ABC}$ (see Sect.~\ref{apx:prf:rem:weakGweakerG}). 
\closerem
\end{remark}

\begin{figure}
\begin{center}
\subfloat[]{\label{fig:AlternatingTriangle}
\resizebox{0.9in}{!}
{
\begin{tikzpicture}[scale=2]%
    \tikzstyle{point}=[circle,fill=white,inner sep=0pt,minimum width=5pt,minimum height=2pt]	  

            \node (tA)[point, draw=blue!35, fill=blue!35, label={270:$A$}] at (0,0) {};
	    \node (tB)[point, draw=blue!35, fill=blue!35,right of = tA, xshift=2cm, label={270:$B$}]{};
	    \node (tC)[point, draw=blue!35, fill=blue!35,above of = tA, xshift=1.7cm, yshift=1cm, label={90:$C$}]{};
	    \fill [fill=blue!35](tA.center) -- (tB.center) -- (tC.center);
	    \draw [thick, red!60] (tA) -- (tB);	
	    \draw [thick, red!60] (tA) -- (tC);	
	    \draw [thick, red!60] (tB) -- (tC);
 \end{tikzpicture}
 }
 }\quad\quad\quad\quad\quad\quad
 \subfloat[]{\label{fig:PosetAlternatingTriangle}
 \resizebox{1.2in}{!}
 {
 \begin{tikzpicture}[scale=2, every node/.style={transform shape}]
    \tikzstyle{kstate}=[rectangle,draw=black,fill=white]
    \tikzset{->-/.style={decoration={
        markings,
        mark=at position #1 with {\arrow{>}}},postaction={decorate}}}
    \node[kstate,fill=blue!35] (A) at (  0,0) {$\relint{A}$};
    \node[kstate,fill=blue!35, right of= A, xshift=1cm] (C) {$\relint{C}$};
    \node[kstate,fill=blue!35, right of= C, xshift=1cm] (B) {$\relint{B}$};
    \node[kstate,fill=red!60, above of= A] (AC) {$\relint{AC}$};
    \node[kstate,fill=red!60, right of= AC, xshift=1cm] (AB) {$\relint{AB}$};
    \node[kstate,fill=red!60, right of= AB, xshift=1cm] (BC) {$\relint{BC}$};
    \node[kstate,fill=blue!35, above of= AB] (ABC) {$\relint{ABC}$};

    \draw(A) edge[thick](AC);
    \draw(A) edge[thick](AB);
    \draw(C) edge[thick](AC);
    \draw(C) edge[thick](BC);
    \draw(B) edge[thick](AB);
    \draw(B) edge[thick](BC);
    \draw(AC) edge[thick](ABC);
    \draw(AB) edge[thick](ABC);
    \draw(BC) edge[thick](ABC);
\end{tikzpicture}
}
}
\end{center}
\caption{A polyhedral model (\ref{fig:AlternatingTriangle}) and its cell poset model (\ref{fig:PosetAlternatingTriangle})}
\end{figure}

\begin{remark}
The example of Figure~\ref{fig:AlternatingTriangle} is useful also for showing that the classical  topological interpretation of the modal logic operator $\Diamond$ cannot be expressed in \slcsE{.} We recall that
$$
\begin{array}{l c l}
\calP, x \models \Diamond \,\form & \Leftrightarrow & x \in \closure_T(\ZET{x'\in|K|}{\calP,x' \models \form}).
\end{array}
$$
Clearly, in the model of the figure, we have $\calP,A \models \Diamond \mathbf{red}$ while $\calP, x \models \Diamond \mathbf{red}$ for no $x\in \relint{ABC}$. On the other hand, $A \slcsEeq x$ holds for all $x \in \relint{ABC}$, as we have just seen in Remark~\ref{rem:weakGweakerG}. So, if
$\Diamond$ were expressible in \slcsE{,} then $A$ and $x$ should have agreed on $\Diamond \mathbf{red}$ for each $x\in \relint{ABC}$.
Note that $\Diamond$ {\em can} be expressed in \slcsG{} as $\gamma (\form, \ltrue) $, see~\cite{Be+22}.
\end{remark}

Below we re-interpret \slcsE{} on finite poset models instead of polyhedral models. The only difference from Def.~\ref{def:SlcsEPolMod} is, of course, the fact that 
$\eta$-formulas are defined for \plm-paths instead of topological ones.

\begin{definition}[\slcsE{} on finite poset models]\label{def:WSlcsFinPos}
The satisfaction relation of \slcsE{} with respect to a given  finite poset model
$\calF = (W,{\preceq},\peval)$, 
\slcsE{} formula $\form$, and element $w \in W$ is defined recursively on the structure of $\form$:\\

\noindent
$$
\begin{array}{l c l}
\calF, w \models p & \Leftrightarrow & w \in \peval(p);\\
\calF, w \models \lneg \form & \Leftrightarrow & \calF, w \not\models  \form;\\
\calF, w \models \form_1 \land \form_2 & \Leftrightarrow &
\calF, w \models \form_1 \mbox{ and } \calF, w \models \form_2;\\
\calF, w \models \eta(\form_1,\form_2) & \Leftrightarrow &
\mbox{a \plm-path } \pi: [0;\ell] \to W \mbox{ exists such that }
\pi(0)=w, \\&&
\calF, \pi(\ell) \models \form_2 \mbox{ and } 
\calF, \pi(i) \models \form_1 \mbox{ for all } i\in [0;\ell).
\end{array}
$$
\vspace{-0.3in}\\\mbox{ }\hfill\closedefi		
\end{definition}

\begin{definition}[Logical Equivalence]\label{def:WFPSLCSeq}
Given  a finite poset model  $\calF = (W,{\preceq},\peval)$ and elements $w_1, w_2 \in W$ 
we say that $w_1$ and $w_2$ are {\em logically equivalent} with respect to \slcsE, 
written $w_1 \slcsEeq^{\calF} w_2$, iff, for all \slcsE{} formulas $\form$, the following holds:
$\calF,w_1 \models \form$ if and only if
$\calF,w_2 \models \form$.\closedefi
\end{definition}

Again, in the sequel, we will refrain from indicating the model $\calF$ explicitly in $ \slcsEeq^{\calF}$ when it is clear from the context. 
It is useful to define a ``characteristic''  \slcsE{} formula $\chi(w)$ that is satisfied by all and only those $w'$ with $w' \slcsEeq w$. 

\begin{definition}\label{def:chiE}
Given a finite poset model $(W,\preceq,\peval)$,
$w_1,w_2 \in W$, define \slcsE{} formula~$\delta_{w_1,w_2}$
  as follows: if $w_1 \slcsEeq w_2$, then set $\delta_{w_1,w_2}= \ltrue$, otherwise pick some 
  \slcsE{} formula~$\psi$ such that $\calF,w_1 \models \psi$ and $\calF,w_2 \models \lneg\psi$, and  set $\delta_{w_1,w_2}= \psi$.
For $w \in W$ define $\chi(w) = \bigwedge_{w' \in W} \: \delta_{w,w'}$.
\closedefi
\end{definition}

\begin{proposition}\label{prop:chiE} 
Given a finite poset model $(W,\preceq,\peval)$, for $w_1,w_2 \in W$, it holds that 
 $\calF,w_2 \models \chi(w_1) \mbox{ if and only if } w_1 \slcsEeq w_2$.
  \end{proposition}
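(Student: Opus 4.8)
The plan is to establish both directions of the biconditional by the standard characteristic-formula argument, after recording one preliminary fact. The fact I would prove first is that $w_1$ satisfies its own characteristic formula, namely $\calF, w_1 \models \chi(w_1)$. This follows conjunct by conjunct from Definition~\ref{def:chiE}: for each $w' \in W$, if $w_1 \slcsEeq w'$ then $\delta_{w_1,w'} = \ltrue$ and is trivially satisfied, whereas if $w_1 \not\slcsEeq w'$ then $\delta_{w_1,w'}$ is chosen precisely as some $\psi$ with $\calF, w_1 \models \psi$. Since $W$ is finite, $\chi(w_1) = \bigwedge_{w' \in W} \delta_{w_1,w'}$ is a genuine (finite) \slcsE{} formula, so $w_1$ satisfying every conjunct indeed yields $\calF, w_1 \models \chi(w_1)$.

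For the right-to-left direction I would assume $w_1 \slcsEeq w_2$. By Definition~\ref{def:WFPSLCSeq}, logical equivalence means $w_1$ and $w_2$ satisfy exactly the same \slcsE{} formulas; since $\chi(w_1)$ is one such formula and $\calF, w_1 \models \chi(w_1)$ by the preliminary fact, we immediately obtain $\calF, w_2 \models \chi(w_1)$.

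For the left-to-right direction I would assume $\calF, w_2 \models \chi(w_1)$ and argue by contraposition. Suppose $w_1 \not\slcsEeq w_2$. Instantiating the conjunction at $w' = w_2$, Definition~\ref{def:chiE} provides $\delta_{w_1,w_2} = \psi$ with $\calF, w_1 \models \psi$ and $\calF, w_2 \models \lneg\psi$, i.e. $\calF, w_2 \not\models \delta_{w_1,w_2}$. But $\delta_{w_1,w_2}$ is one of the conjuncts of $\chi(w_1)$, so $\calF, w_2 \models \chi(w_1)$ forces $\calF, w_2 \models \delta_{w_1,w_2}$, a contradiction. Hence $w_1 \slcsEeq w_2$.

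I do not anticipate a genuine obstacle, as the argument is entirely elementary bookkeeping with the defining property of $\delta_{w_1,w'}$. The one point deserving a moment's care is the finiteness of $W$: it is what guarantees that $\chi(w_1)$ is a well-formed finite conjunction, and therefore a legitimate \slcsE{} formula to which the satisfaction relation of Definition~\ref{def:WFPSLCSeq} applies in both directions of the proof.
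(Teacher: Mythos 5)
Your proposal is correct and follows essentially the same argument as the paper: both directions rest on the facts that $w_1$ satisfies every conjunct $\delta_{w_1,w'}$ and that logical equivalence transfers satisfaction, with the left-to-right direction handled by isolating the conjunct $\delta_{w_1,w_2}$ when $w_1 \not\slcsEeq w_2$. The only cosmetic difference is that you transfer the whole formula $\chi(w_1)$ from $w_1$ to $w_2$ at once via the preliminary fact $\calF,w_1\models\chi(w_1)$, whereas the paper transfers the conjuncts one by one.
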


The following lemma is the poset model counterpart of Lemma~\ref{lem:etgaCorrectG}:

\begin{lemma}\label{lem:etgaCorrectE}
Let $\calF=(W,\preceq,\peval)$ be a 
finite poset model, 
$w \in W$ and $\form$ a \slcsE{} formula.
Then $\calF,w \models \form$ iff $\calF,w \models \etga(\form)$.
\end{lemma}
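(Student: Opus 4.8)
The plan is to argue by structural induction on the \slcsE{} formula $\form$, using the poset-model satisfaction relation of Definition~\ref{def:WSlcsFinPos} together with the definition of the encoding $\etga$ (Definition~\ref{def:etga}). The argument closely mirrors that of Lemma~\ref{lem:etgaCorrectG}, the only difference being that \plm-paths replace topological paths. This substitution does not affect the reasoning, since in every case the very same witnessing path is reused, so it never leaves the class of \plm-paths.

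For the base case $\form = p$ we have $\etga(p) = p$, so the claim is immediate. For $\form = \lneg \form'$ and $\form = \form_1 \land \form_2$ the encoding distributes over the connective, so the equivalence follows directly from the induction hypothesis applied to the immediate subformulas, invoking the clauses for $\lneg$ and $\land$ in Definition~\ref{def:WSlcsFinPos}. These cases are purely routine.

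The single interesting case is $\form = \eta(\form_1, \form_2)$, where $\etga(\form) = \etga(\form_1) \land \gamma(\etga(\form_1), \etga(\form_2))$. By the induction hypothesis, $\calF, w' \models \form_i$ iff $\calF, w' \models \etga(\form_i)$ for every $w' \in W$ and $i \in \{1,2\}$, so it suffices to prove that $\calF, w \models \eta(\form_1, \form_2)$ iff both $\calF, w \models \form_1$ and $\calF, w \models \gamma(\form_1, \form_2)$. Here I would unfold the two modal clauses and observe that the only discrepancy between the $\eta$-clause and the $\gamma$-clause is the index range over which $\form_1$ must hold along the witnessing \plm-path: $[0;\ell)$ for $\eta$ versus $(0;\ell)$ for $\gamma$; that is, $\eta$ additionally demands $\form_1$ at the starting point $\pi(0) = w$. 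For the forward direction, a \plm-path $\pi$ witnessing $\eta(\form_1,\form_2)$ already satisfies $\calF, \pi(0) \models \form_1$, hence $\calF, w \models \form_1$, and the same $\pi$ witnesses $\gamma(\form_1, \form_2)$ since $(0;\ell) \subseteq [0;\ell)$. For the converse, given $\calF, w \models \form_1$ and a \plm-path $\pi$ witnessing $\gamma(\form_1, \form_2)$, the point condition at $\pi(0) = w$ together with the intermediate conditions on $(0;\ell)$ yields $\form_1$ on all of $[0;\ell)$, so $\pi$ witnesses $\eta(\form_1, \form_2)$.

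I do not anticipate any genuine obstacle. Once the definitions are unfolded, the $\eta$-case reduces to the trivial identity $[0;\ell) = \{0\} \cup (0;\ell)$, which is exactly mirrored by the extra conjunct $\etga(\form_1)$ in the encoding. The only point worth stating explicitly is that no change of witnessing path is ever required in either direction, so the defining constraints of a \plm-path (in particular $\ell \geq 2$, $R(\pi(0),\pi(1))$ and $\cnv{R}(\pi(\ell-1),\pi(\ell))$) are preserved automatically.
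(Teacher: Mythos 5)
Your proof is correct and follows essentially the same route as the paper, which simply states that the argument is ``similar to that of Lemma~\ref{lem:etgaCorrectG}, but with reference to the finite poset interpretation of the logic''; you have merely spelled out that adaptation. Your explicit observation that the same witnessing \plm-path is reused in both directions (so the $\ell \geq 2$, $R(\pi(0),\pi(1))$, $\cnv{R}(\pi(\ell-1),\pi(\ell))$ constraints need no re-verification) is exactly the point that makes the transfer from topological paths to \plm-paths harmless.
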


Thus we get, as for the interpretation on polyhedral models, that 
\slcsE{} on finite poset models is weaker than \slcsG{:}

\begin{proposition}\label{prop:SLCSPMimplWSLCSPM}
Let $\calF=(W,\preceq, \peval)$ be a finite poset model. 
For all $w_1,w_2 \in W$ the following holds:
if $w_1 \slcsGeq{} w_2$ then $w_1 \slcsEeq w_2$.
\qed
\end{proposition}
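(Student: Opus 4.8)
The plan is to mirror exactly the argument by which Proposition~\ref{prop:SLCSGimplWSLCSG} is obtained from Lemma~\ref{lem:etgaCorrectG}, now using its poset-model counterpart, Lemma~\ref{lem:etgaCorrectE}, in its place. The one observation that makes this immediate is that the encoding $\etga$ of Definition~\ref{def:etga} sends every \slcsE{} formula to a genuine \slcsG{} formula: inspecting the defining clauses, $\etga$ produces only proposition letters, negations, conjunctions, and $\gamma$-modalities, so $\etga(\form)$ lies in the language of \slcsG{} for every \slcsE{} formula $\form$. I would state this small fact first, since the whole argument hinges on $\etga(\form)$ being a legitimate test formula for the hypothesis $w_1 \slcsGeq w_2$.

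First I would fix $w_1, w_2 \in W$ and assume $w_1 \slcsGeq w_2$, i.e. $\calF, w_1 \models \psi$ iff $\calF, w_2 \models \psi$ for every \slcsG{} formula $\psi$. To establish $w_1 \slcsEeq w_2$ it suffices, by Definition~\ref{def:WFPSLCSeq}, to show $\calF, w_1 \models \form$ iff $\calF, w_2 \models \form$ for an arbitrary \slcsE{} formula $\form$. The key step is a chain of three equivalences: by Lemma~\ref{lem:etgaCorrectE} we have $\calF, w_1 \models \form$ iff $\calF, w_1 \models \etga(\form)$; since $\etga(\form)$ is a \slcsG{} formula, the hypothesis $w_1 \slcsGeq w_2$ gives $\calF, w_1 \models \etga(\form)$ iff $\calF, w_2 \models \etga(\form)$; and a second application of Lemma~\ref{lem:etgaCorrectE} gives $\calF, w_2 \models \etga(\form)$ iff $\calF, w_2 \models \form$. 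Composing the three yields $\calF, w_1 \models \form$ iff $\calF, w_2 \models \form$, and since $\form$ was arbitrary we conclude $w_1 \slcsEeq w_2$.

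There is essentially no genuine obstacle here: the result is a routine chaining of equivalences and the real work has already been discharged in Lemma~\ref{lem:etgaCorrectE}. The only point that requires a moment's care is the well-definedness of the middle step, namely that $\etga(\form)$ really is a \slcsG{} formula so that the hypothesis $w_1 \slcsGeq w_2$ may legitimately be applied to it; this is settled once and for all by the structural inspection of Definition~\ref{def:etga} noted above. Consequently the \qed in the statement is justified by this short derivation, exactly parallel to the polyhedral case.
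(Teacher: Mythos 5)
Your proof is correct and follows exactly the route the paper intends: the paper presents this proposition as a direct consequence of Lemma~\ref{lem:etgaCorrectE}, mirroring how Proposition~\ref{prop:SLCSGimplWSLCSG} follows from Lemma~\ref{lem:etgaCorrectG}, and your three-equivalence chain (together with the observation that $\etga(\form)$ is a genuine \slcsG{} formula) is precisely the argument left implicit by the \qed in the statement.
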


\begin{remark}\label{rem:weakPMeakerPM}
As expected, the converse of Proposition~\ref{prop:SLCSPMimplWSLCSPM} does not hold,
as shown by the poset model $\calF$ of  Figure~\ref{fig:PosetAlternatingTriangle}.
Clearly, $\relint{A} \not\slcsGeq{} \relint{ABC}$. In fact $\calF,\relint{A} \models \gamma(\mathtt{red},\ltrue)$
whereas $\calF,\relint{ABC} \not\models \gamma(\mathtt{red},\ltrue)$. On the other hand, 
$\relint{A} \slcsEeq \relint{ABC}$ can be easily proven by induction on the structure of formulas
(see Sect.~\ref{apx:prf:rem:weakPMeakerPM}).
\closerem
\end{remark}

\begin{remark}\label{rem:noDiamondInEta}
The example of Figure~\ref{fig:PosetAlternatingTriangle}  is useful also for showing that the classical  modal logic operator $\Diamond$ cannot be expressed in \slcsE{.} We recall that 
$$
\begin{array}{l c l}
\calF, w\models \Diamond \form & \Leftrightarrow &
w' \in W \mbox{ exists such that } w \preceq w' \mbox{ and } \calF, w' \models \form.
\end{array}
$$
Clearly, in the model of the figure, we have $\calF, \relint{A} \models \Diamond \mathtt{red}$ while
$\calF, \relint{ABC} \not\models \Diamond \mathtt{red}$. On the other hand $\relint{A} \slcsEeq \relint{ABC}$ holds, as we have just seen in Remark~\ref{rem:weakPMeakerPM}. So, if 
$\Diamond$ were expressible in \slcsE{,} then $\relint{A}$ and $\relint{ABC}$ should have agreed on $\Diamond \mathtt{red}$. 
Note that $\Diamond$ can be expressed in \slcsG, as the following equality can be 
proven~\cite{Be+22}: $\Diamond \form \equiv \gamma(\form,\ltrue)$.
\closerem
\end{remark}

The following result is useful for setting a bridge between the continuous and the discrete interpretation of \slcsE{.}

\begin{lemma}\label{lem:VB}
Given a polyhedral model $\calP=(|K|,V)$,
for all $x\in |K|$ and formulas $\form$ of \slcsE{}
the following holds: $\calP,x \models \form$ iff $\map(\calP),\map(x) \models \etga(\form)$.
\end{lemma}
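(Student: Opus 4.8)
The plan is to reduce the claim to two facts that are already available: Lemma~\ref{lem:etgaCorrectG}, which relates the polyhedral semantics of an \slcsE{} formula $\form$ to that of its translation $\etga(\form)$, and the truth-preservation property of the map $\map$ for \slcsG{} formulas established in~\cite{Be+22}. The key observation is that, by Definition~\ref{def:etga}, $\etga(\form)$ is an \slcsG{} formula, so the latter result applies to it verbatim.

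Concretely, I would argue by the following chain of equivalences, for a fixed $x \in |K|$ and \slcsE{} formula $\form$. First, by Lemma~\ref{lem:etgaCorrectG},
$$
\calP, x \models \form \quad\text{iff}\quad \calP, x \models \etga(\form).
$$
Second, since $\etga(\form)$ is an \slcsG{} formula, the truth-preservation result of~\cite{Be+22} for \slcsG{} under $\map$ gives
$$
\calP, x \models \etga(\form) \quad\text{iff}\quad \map(\calP), \map(x) \models \etga(\form).
$$
Composing these two equivalences yields exactly $\calP, x \models \form$ iff $\map(\calP), \map(x) \models \etga(\form)$, as required. Since $\form$ is arbitrary and the argument does not depend on the structure of $\form$ beyond what is already absorbed into the two cited results, no separate structural induction is needed here.

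Because the statement is a direct composition of previously proved results, there is no genuine obstacle; the only point requiring care is to confirm that $\etga$ indeed produces \slcsG{} formulas (which is immediate from Definition~\ref{def:etga}), so that the cited \slcsG{} preservation theorem is applicable to $\etga(\form)$. Note in particular that the poset-side counterpart Lemma~\ref{lem:etgaCorrectE} is \emph{not} needed for this proof: the right-hand side of the statement is phrased in terms of the \slcsG{} formula $\etga(\form)$ under the poset interpretation of \slcsG{}, rather than in terms of the poset interpretation of the original \slcsE{} formula $\form$, so the bridge to the discrete setting is supplied entirely by the \slcsG{} preservation result.
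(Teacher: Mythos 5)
Your proof is correct, but it takes a genuinely different and more economical route than the paper's. The paper proves Lemma~\ref{lem:VB} by structural induction on $\form$, and in the $\eta(\form_1,\form_2)$ case it unfolds $\etga(\eta(\form_1,\form_2))$ into $\etga(\form_1)\land\gamma(\etga(\form_1),\etga(\form_2))$, handles the first conjunct via the induction hypothesis (after a detour back through Lemma~\ref{lem:etgaCorrectG}) and the second conjunct via the \slcsG{} truth-preservation theorem of~\cite{Be+22}. You instead observe that $\etga(\form)$ is itself an \slcsG{} formula, so the preservation theorem can be applied to it \emph{as a whole}, and the claim follows by composing that single application with Lemma~\ref{lem:etgaCorrectG}; no induction is needed because the two cited results already carry all the structural content. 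This is a legitimate strengthening of the argument's modularity: it makes explicit that Lemma~\ref{lem:VB} is a corollary of the two ingredients rather than a parallel induction, and it exposes a redundancy in the paper's proof (where the induction hypothesis is invoked for $\etga(\form_1)$ even though the \slcsG{} preservation theorem would apply to that conjunct directly). Your closing remarks are also accurate on both counts: $\etga$ lands in \slcsG{} by inspection of Definition~\ref{def:etga}, and Lemma~\ref{lem:etgaCorrectE} is only needed downstream, in Theorem~\ref{theo:calMPresForm}, to replace $\etga(\form)$ by $\form$ on the poset side.
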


As a direct consequence of Lemma~\ref{lem:etgaCorrectE} and Lemma~\ref{lem:VB} we get the bridge between the continuous and the discrete interpretation of \slcsE{.}

\begin{theorem}\label{theo:calMPresForm}
Given a polyhedral model $\calP=(|K|,V)$,
for all $x\in |K|$ and formulas $\form$ of \slcsE{}
it holds that: $\calP,x \models \form$ iff $\map(\calP),\map(x) \models \form$.
\end{theorem}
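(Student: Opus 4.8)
The plan is to treat this as a one-line transitivity argument that chains the two preceding lemmas, the whole point being to cancel the encoding $\etga$ that was introduced to cross from the continuous to the discrete semantics. First I would invoke Lemma~\ref{lem:VB}, which bridges polyhedral and poset models but at the cost of inserting $\etga$: for every $x \in |K|$ and every \slcsE{} formula $\form$ it yields
\[
\calP, x \models \form \;\Leftrightarrow\; \map(\calP), \map(x) \models \etga(\form).
\]
This is the only step where the geometry of $\map$ and the correspondence between topological paths and \plm-paths enter, so all the substantive content is assumed to sit inside Lemma~\ref{lem:VB}.

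Second, I would observe that $\map(\calP)$ is, by definition, the cell poset model of $\calP$, hence a \emph{finite} poset model, and that $\map(x)$ is the unique cell of $\relint{K}$ containing $x$, i.e.\ a genuine element of its carrier. This legitimises applying Lemma~\ref{lem:etgaCorrectE} to the poset model $\calF = \map(\calP)$ at the point $w = \map(x)$. That lemma states $\calF, w \models \form$ iff $\calF, w \models \etga(\form)$; read from right to left it gives
\[
\map(\calP), \map(x) \models \etga(\form) \;\Leftrightarrow\; \map(\calP), \map(x) \models \form.
\]
Composing the two biconditionals then delivers $\calP, x \models \form$ iff $\map(\calP), \map(x) \models \form$ exactly as required.

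I do not expect any genuine obstacle, since the theorem is deliberately phrased as a direct consequence and the argument is pure transitivity of the two equivalences. The only point that warrants a moment's care is checking the hypothesis of Lemma~\ref{lem:etgaCorrectE}, namely that its target structure be a finite poset model: this holds because a simplicial complex is a finite collection of simplexes, so $\relint{K}$ is finite and $\map(\calP)$ is a finite poset model. With that observation in place, all the real work (the interplay of $\eta$, $\gamma$, and the path translation) has already been discharged in Lemmas~\ref{lem:VB} and~\ref{lem:etgaCorrectE}, and nothing further is needed.
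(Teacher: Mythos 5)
Your proposal is correct and follows exactly the paper's own proof: chain Lemma~\ref{lem:VB} with Lemma~\ref{lem:etgaCorrectE} applied to the finite poset model $\map(\calP)$ to cancel the encoding $\etga$. The extra check that $\map(\calP)$ is indeed a finite poset model is a harmless (and valid) addition.
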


This theorem allows one to go back and forth between the polyhedral model and the corresponding poset model without loosing anything expressible in \slcsE{}.

 \section{Weak Simplicial Bisimilarity}\label{sec:WeakBis}

In this section, we introduce weak versions of simplicial bisimilarity and \plm-bisimilarity and we show that they coincide with logical equivalence induced by \slcsE{} in polyhedral and poset models, respectively.

\begin{definition}[Weak Simplicial Bisimulation]\label{def:WSimBis}
Given a polyhedral model $\calP=(|K|,V)$,  a symmetric relation $B \subseteq |K| \times |K|$
is a {\em weak simplicial bisimulation} if, for all $x_1,x_2 \in |K|$,
whenever $B(x_1,x_2)$, it holds that:
\begin{enumerate}
\item $V^{-1}(\SET{x_1}) = V^{-1}(\SET{x_2})$;
\item 
for each topological path $\pi_1$ from $x_1$, 
there is topological path $\pi_2$ from $x_2$ 
such that $B(\pi_1(1),\pi_2(1))$ and
for all $r_2 \in [0,1)$ there is $r_1 \in  [0,1)$ such that $B(\pi_1(r_1),\pi_2(r_2))$.
\end{enumerate}
Two points $x_1,x_2 \in |K|$ are weakly simplicial bisimilar, written $x_1 \wsibis^{\calP} x_2$, if there is
a weak simplicial bisimulation  $B$ such that $B(x_1,x_2)$.
\end{definition}

For example, the open segments $AB$, $BC$, and $AC$ in Figure~\ref{fig:AlternatingTriangle} are mutually weakly simplicial bisimilar and every point in set 
$\relint{ABC} \cup \relint{A} \cup \relint{B} \cup \relint{C}$ 
is weakly simplicial bisimilar
to every other point in the same set.

\begin{definition}[Weak \plm-bisimulation]\label{def:WPLMBis}
Given a finite poset model $\calF=(W,\preceq~\!\!,\peval)$, a symmetric binary relation $B \subseteq W \times W$
is a weak  \plm-bisimulation if, for all $w_1,w_2 \in W$,
whenever $B(w_1,w_2)$, it holds that:
\begin{enumerate}
\item $\invpeval(\SET{w_1}) = \invpeval(\SET{w_2})$;
\item 
for each $u_1,d_1 \in W$ such that $w_1 \; \preceqsucc \; u_1 \succeq \, d_1$
there is a \plm-path $\pi_2:[0;\ell_2] \to W$ from $w_2$ such that
$B(d_1,\pi_2(\ell_2))$ and, for all $j\in[0;\ell_2)$,  the following holds: 
$B(w_1,\pi_2(j))$ or $B(u_1,\pi_2(j))$. 
\end{enumerate}
We say that $w_1$ is weakly \plm-bisimilar to $w_2$, written $w_1 \wplmbis^{\calF} w_2$ if
there is a weak  \plm-bisimulation $B$ such that $B(w_1,w_2)$.\closedefi
\end{definition}

For example, all red cells in the Hasse diagram of Figure~\ref{fig:PosetAlternatingTriangle} are weakly \plm-bisimilar and all blue cells are weakly \plm-bisimilar.

The following lemma shows that, in a polyhedral model $\calP$, weak simplicial bisimilarity $\wsibis^{\calP}$ (Def.~\ref{def:WSimBis}) is stronger than $\slcsEeq$ --
logical equivalence w.r.t. \slcsE{:} 

\begin{lemma}\label{lem:WSBisIMPLEtaLog}
Given a polyhedral model  $\calP=(|K|,V)$, for all $x_1,x_2 \in |K|$, the following holds:
if $x_1 \wsibis^{\calP} x_2$ then  $x_1 \slcsEeq x_2$.
\end{lemma}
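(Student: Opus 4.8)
The plan is to proceed by structural induction on the \slcsE{} formula $\form$, proving the slightly more general statement that for every weak simplicial bisimulation $B$ and all $x_1,x_2 \in |K|$ with $B(x_1,x_2)$, we have $\calP,x_1 \models \form$ iff $\calP,x_2 \models \form$. The lemma then follows by instantiating $B$ with a weak simplicial bisimulation witnessing $x_1 \wsibis^{\calP} x_2$. Since $B$ is symmetric, I would only establish the forward implication $\calP,x_1 \models \form \Rightarrow \calP,x_2 \models \form$ in each case, the converse being obtained by swapping the roles of $x_1$ and $x_2$ via $B(x_2,x_1)$.

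For the base case $\form = p$, condition (1) of Def.~\ref{def:WSimBis} ensures that $x_1$ and $x_2$ satisfy exactly the same proposition letters, so $\calP,x_1 \models p$ iff $\calP,x_2 \models p$. The Boolean cases $\lneg\form'$ and $\form_1 \land \form_2$ are immediate from the induction hypothesis, because satisfaction of these connectives is defined pointwise in terms of the satisfaction of the immediate subformulas at the very same point (Def.~\ref{def:SlcsEPolMod}).

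The interesting case is $\form = \eta(\form_1,\form_2)$. Assuming $\calP,x_1 \models \eta(\form_1,\form_2)$, there is a topological path $\pi_1:[0,1]\to|K|$ with $\pi_1(0)=x_1$, $\calP,\pi_1(1)\models\form_2$, and $\calP,\pi_1(r)\models\form_1$ for all $r\in[0,1)$. I would feed $\pi_1$ into condition (2) of Def.~\ref{def:WSimBis} to obtain a topological path $\pi_2$ from $x_2$ with $B(\pi_1(1),\pi_2(1))$ and, for every $r_2\in[0,1)$, some $r_1\in[0,1)$ with $B(\pi_1(r_1),\pi_2(r_2))$. The endpoint condition then follows from the induction hypothesis applied to the subformula $\form_2$ at the pair $(\pi_1(1),\pi_2(1))$, giving $\calP,\pi_2(1)\models\form_2$; and for each $r_2\in[0,1)$, the matched point $\pi_1(r_1)$ satisfies $\form_1$ since $r_1\in[0,1)$, whence $\calP,\pi_2(r_2)\models\form_1$ by the induction hypothesis applied to $\form_1$ at $(\pi_1(r_1),\pi_2(r_2))$. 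As $\pi_2(0)=x_2$, this witnesses $\calP,x_2\models\eta(\form_1,\form_2)$.

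The step requiring the most care is precisely this $\eta$ case, and specifically the bookkeeping of which path points are matched. The semantics of $\eta$ quantifies $\form_1$ over the \emph{left-closed} interval $[0,1)$, so that the starting point itself (the value $r_2=0$, i.e. $\pi_2(0)=x_2$) must also be shown to satisfy $\form_1$. This is handled uniformly, without a separate argument for the initial point, because condition (2) ranges over all $r_2\in[0,1)$, including $r_2=0$, and supplies a matching $r_1\in[0,1)$ with $\calP,\pi_1(r_1)\models\form_1$. This is exactly where the design of weak simplicial bisimulation --- matching intermediate \emph{and} starting points over $[0,1)$, rather than only intermediate points over $(0,1)$ as would suffice for $\gamma$ --- aligns with the weakened $\eta$ modality.
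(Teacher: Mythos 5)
Your proof is correct and follows essentially the same route as the paper's: structural induction with the $\eta$ case handled by feeding the witnessing path into condition~(2) of Def.~\ref{def:WSimBis} and applying the induction hypothesis to the matched endpoint and to each matched point over $[0,1)$. Working with an arbitrary weak simplicial bisimulation $B$ rather than $\wsibis$ directly, and making the symmetry argument explicit, are only cosmetic differences.
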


\begin{proof}
By induction on the structure of the formulas. We consider only the case $\eta(\form_1, \form_2)$.
Suppose $x_1 \wsibis x_2$ and $\calP,x_1 \models \eta(\form_1, \form_2)$.
Then there is a topological path $\pi_1$ from $x_1$ such that 
$\calP,\pi_1(1)\models \form_2$ and $\calP,\pi_1(r_1)\models \form_1$ for all $r_1\in [0,1)$.
Since $x_1 \wsibis x_2$, then there is  a topological path $\pi_2$ from $x_2$ such that
$\pi_1(1) \wsibis \pi_2(1)$ and for each  $r_2 \in [0,1)$ there is $r'_1 \in [0,1)$
such that $\pi_1(r'_1) \wsibis \pi_2(r_2)$.
By the Induction Hypothesis, we get $\calP, \pi_2(1)\models \form_2$ 
and for each  $r_2 \in [0,1)$ 
$\calP, \pi_2(r_2) \models  \form_1$. Thus $\calP,x_2 \models \eta(\form_1, \form_2)$.\qed
\end{proof}

Furthermore, logical equivalence induced by \slcsE{} is stronger than weak simplicial-bisimilarity, 
as implied by Lemma~\ref{lem:EtaLogIsWSB} below, which uses the following two auxiliary lemmas:

\begin{lemma}\label{lem:dPExists}
Given a finite poset model $\calF=(W,\preceq, \peval)$ and  
weak \plm-bisimulation $B \subseteq W \times W$, for all $w_1,w_2$ such that $B(w_1,w_2)$, the following holds:
for each  \dwn-path $\pi_1:[0;k_1] \to W$ from $w_1$ 
there is a \dwn-path $\pi_2:[0;k_2] \to W$ from $w_2$ 
such that $B(\pi_1(k_1),\pi_2(k_2))$ and 
for each $j\in [0;k_2)$ there is $i\in [0;k_1)$ such that $B(\pi_1(i),\pi_2(j))$.
\end{lemma}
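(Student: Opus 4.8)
The plan is to prove the statement by induction on the length $k_1$ of the \dwn-path $\pi_1$, reducing the lifting of an arbitrary \dwn-path to repeated applications of the single-step zig-zag condition of clause~2 of Def.~\ref{def:WPLMBis}. Throughout, I write $\pi_1 = (x_0, \ldots, x_{k_1})$ with $x_0 = w_1$, and I use that the Kripke relation $R$ underlying the poset model $\calF$ is $\preceq$, which is reflexive; thus $\dircnv{R}(x_i, x_{i+1})$ (comparability) holds for every step, and $\cnv{R}(x_{k_1-1}, x_{k_1})$, i.e. $x_{k_1-1} \succeq x_{k_1}$, holds for the final one.

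First I would isolate the first \emph{bump} of $\pi_1$: let $p$ be the least index with $x_p \succeq x_{p+1}$. Such a $p$ exists and satisfies $p < k_1$, because the final step is a down-step. By minimality of $p$, every earlier step is an up-step, so $x_0 \preceq x_1 \preceq \cdots \preceq x_p$ and hence $w_1 = x_0 \preceqsucc x_p$ (using reflexivity when $p = 0$); together with $x_p \succeq x_{p+1}$ this exhibits the required pattern $w_1 \preceqsucc u_1 \succeq d_1$ with $u_1 := x_p$ and $d_1 := x_{p+1}$. Applying clause~2 of Def.~\ref{def:WPLMBis} to $B(w_1, w_2)$ then yields a \plm-path $\rho : [0;\ell] \to W$ from $w_2$ with $B(x_{p+1}, \rho(\ell))$ and, for every $j \in [0;\ell)$, $B(x_0, \rho(j))$ or $B(x_p, \rho(j))$.

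Next I would split on whether this bump exhausts $\pi_1$. If $p+1 = k_1$, then $\rho$ is already the desired \dwn-path: it ends at $\rho(\ell)$ with $B(x_{k_1}, \rho(\ell))$, and each earlier $\rho(j)$ is $B$-related to $x_0$ or $x_p = x_{k_1-1}$, whose indices lie in $[0;k_1)$; this branch also settles the base case $k_1 = 1$, where necessarily $p = 0$. Otherwise $p + 1 < k_1$, and $(x_{p+1}, \ldots, x_{k_1})$ is a \dwn-path from $x_{p+1}$ of length $k_1 - p - 1 \in [1;k_1)$; since $B(x_{p+1}, \rho(\ell))$, the induction hypothesis supplies a \dwn-path $\pi_2'$ from $\rho(\ell)$ matching it, and I take $\pi_2$ to be the concatenation of $\rho$ and $\pi_2'$ at their shared point $\rho(\ell) = \pi_2'(0)$. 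This $\pi_2$ is again an undirected path whose final step is that of $\pi_2'$, hence a \dwn-path; the endpoint condition $B(x_{k_1}, \pi_2(k_2))$ is inherited from $\pi_2'$; and the intermediate condition follows by cases, as the points of $\rho$ are related to $x_0$ or $x_p$ (indices in $[0;k_1)$) while those of $\pi_2'$ are, by the induction hypothesis, related to some $x_{p+1+i'}$ with index in $[p+1;k_1) \subseteq [0;k_1)$.

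The main obstacle is conceptual rather than computational: one must recognise that the rigid-looking local shape $w_1 \preceqsucc u_1 \succeq d_1$ of the bisimulation clause is precisely what consumes one ascent-to-a-peak followed by a single descent of an arbitrary \dwn-path, leaving a strictly shorter \dwn-path for the induction hypothesis. The remaining care is in the index bookkeeping --- keeping every witness index $i$ strictly below $k_1$, in particular the peak index $p$ and all tail indices --- and in verifying that concatenating the \plm-path $\rho$ with the inductively obtained \dwn-path again yields a \dwn-path with the correct endpoints and intermediate witnesses.
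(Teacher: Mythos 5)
Your proof is correct, but it takes a genuinely different route from the paper's. The paper first normalises $\pi_1$ into an \upd-path by inserting reflexive duplicates (Lemma~\ref{lem:d2ud}), proves the transfer property for \upd-paths by induction on the number of up--down ``tents'', applying clause~2 of Def.~\ref{def:WPLMBis} once per tent (Lemma~\ref{lem:dPExists4udP}), and then pulls the matching back along the monotone surjection relating the two paths. You instead induct directly on the length of the \dwn-path, peel off the first maximal ascent followed by a single descent, and use \emph{transitivity} of $\preceq$ to collapse the whole ascent $x_0 \preceq \cdots \preceq x_p$ into one instance of the pattern $w_1 \preceqsucc u_1 \succeq d_1$ before recursing on the strictly shorter tail. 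Your argument is self-contained (no auxiliary normalisation lemmas, no index transfer through a surjection) and invokes the bisimulation clause only once per descent of $\pi_1$ rather than once per step; the paper's factorisation, on the other hand, reuses machinery (\dwn-to-\upd conversion) that serves several other results in the appendix, and relies only on reflexivity where you additionally need transitivity --- immaterial here since $\calF$ is a poset model. Your index bookkeeping is sound: the witnesses you produce are $x_0$, $x_p$ with $p<k_1$, and tail points $x_{p+1+i'}$ with $p+1+i'\in[p+1;k_1)$, all within $[0;k_1)$ as required, and the concatenation of the \plm-path $\rho$ with the inductively obtained \dwn-path inherits its final down-step from the latter, so it is indeed a \dwn-path.
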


\begin{lemma}\label{lem:dPtoTP}
Given a polyhedral model $\calP=(|K|,V)$, and associated cell poset model $\map(\calP)=(W,\preceq,\peval)$, for any 
\dwn-path  $\pi:[0;\ell] \to W$, 
there is a topological path $\pi':[0,1] \to |K|$ such that: (i) $\map(\pi'(0))=\pi(0)$, (ii) $\map(\pi'(1))=\pi(\ell)$, and
(iii) for all $r \in (0,1)$ there is $i<\ell$ such that $\map(\pi'(r))=\pi(i)$.
\end{lemma}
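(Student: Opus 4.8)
The plan is to realise the discrete \dwn-path as an explicit piecewise-linear curve through $|K|$, obtained by joining one chosen interior point of each visited cell by straight segments, and then to read off the cell of every point of the curve from a single convex-geometry fact. Write $\sigma_i$ for the simplex with $\pi(i)=\relint{\sigma_i}$, and let $b_i\in\relint{\sigma_i}$ be its barycentre $b_{\sigma_i}$; note $b_i$ does lie in the relative interior, as all its barycentric coordinates are strictly positive.

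The key geometric ingredient I would isolate first is the following standard property of relative interiors: if $\sigma'\sqsubseteq\sigma$, $p\in\relint{\sigma'}$ and $q\in\relint{\sigma}$, then the half-open segment $\{(1-s)\,p+s\,q : s\in(0,1]\}$ is contained in $\relint{\sigma}$, while $p$ itself lies in $\relint{\sigma'}$. This holds because $\relint{\sigma'}\preceq\relint{\sigma}$ yields $\sigma'\subseteq\sigma=\closure_T(\relint{\sigma})$ inside the common affine hull, so $p$ lies in the closure of the relatively open convex set $\relint{\sigma}$ and $q$ in its relative interior; a segment from a relative-interior point towards a closure point stays in the relative interior except possibly at the closure endpoint.

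Next I would define $\pi':[0,1]\to|K|$ by splitting $[0,1]$ into the $\ell$ equal subintervals $[i/\ell,(i{+}1)/\ell]$ and setting, on the $i$-th one, $\pi'(r)=(1-s)\,b_i+s\,b_{i+1}$ with $s=\ell r-i$. Since $\dircnv{\preceq}(\pi(i),\pi(i+1))$, one of $\sigma_i,\sigma_{i+1}$ is a face of the other, so each segment lies inside the larger closed simplex and hence inside $|K|$; consecutive pieces share the value $b_{i+1}$, so $\pi'$ is continuous and is therefore a topological path. Clauses (i) and (ii) are then immediate, as $\pi'(0)=b_0\in\relint{\sigma_0}$ and $\pi'(1)=b_\ell\in\relint{\sigma_\ell}$, whence $\map(\pi'(0))=\pi(0)$ and $\map(\pi'(1))=\pi(\ell)$.

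For clause (iii) I would use the geometric fact to classify the cell of each point of the $i$-th segment: on an upward step $\sigma_i\sqsubseteq\sigma_{i+1}$ the point $b_i$ lies in $\relint{\sigma_i}$ and every other point of the segment in $\relint{\sigma_{i+1}}$, while on a downward step $\sigma_{i+1}\sqsubseteq\sigma_i$ every point except $b_{i+1}$ lies in $\relint{\sigma_i}$ and $b_{i+1}$ in $\relint{\sigma_{i+1}}$ (the degenerate case $\sigma_i=\sigma_{i+1}$ being trivial). Thus each point of segment $i$ sits in $\relint{\sigma_i}$ or $\relint{\sigma_{i+1}}$, and each interior breakpoint $i/\ell$ (for $1\le i\le\ell-1$) maps to $b_i\in\relint{\sigma_i}$ with $i<\ell$. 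The only way an interior point could be forced into index $\ell$ is through an upward final step; but $\pi$ is a \dwn-path, so its last step is downward, $\sigma_\ell\sqsubseteq\sigma_{\ell-1}$, and the interior of the last segment lies in $\relint{\sigma_{\ell-1}}$ with $\ell-1<\ell$. Hence $\map(\pi'(r))=\pi(i)$ for some $i<\ell$ at every $r\in(0,1)$, as required. The main obstacle, and the only place the \dwn-hypothesis is indispensable, is exactly this bookkeeping step: pinning the cell $\pi(\ell)$ to the single endpoint $\pi'(1)$, since an upward final step would push interior points of the last segment into $\relint{\sigma_\ell}$ and break (iii). Everything else reduces to the convex-geometry fact and the continuity of a concatenation of affine segments.
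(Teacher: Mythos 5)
Your proposal is correct and follows essentially the same route as the paper's proof: a piecewise-linear path joining the barycentres of consecutive cells, with each segment's interior pinned to the relative interior of the larger simplex, and the \dwn-hypothesis used exactly where the paper uses it, to keep the interior of the final segment in $\relint{\sigma_{\ell-1}}$. The only difference is that you spell out the convex-geometry fact about half-open segments that the paper leaves implicit.
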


\begin{lemma}\label{lem:EtaLogIsWSB}
In a given polyhedral model  $(|K|,V)$, $\slcsEeq$ is a weak simplicial bisimulation.
\end{lemma}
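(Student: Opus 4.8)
The plan is to verify directly that the equivalence relation $\slcsEeq$ meets the two clauses of Definition~\ref{def:WSimBis}. Symmetry is immediate, since $\slcsEeq$ is an equivalence relation. For clause~1, observe that every proposition letter $p\in\ap$ is itself an \slcsE{} formula, so $x_1 \slcsEeq x_2$ forces $x_1\in V(p)$ iff $x_2\in V(p)$ for every $p$, that is, $V^{-1}(\SET{x_1})=V^{-1}(\SET{x_2})$. All the content lies in clause~2.

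The idea for clause~2 is to build a single $\eta$-formula that is witnessed by the given path $\pi_1$ and, by logical equivalence, transferred to $x_2$. First I would transport the characteristic formulas of Definition~\ref{def:chiE} to the polyhedral side: by Theorem~\ref{theo:calMPresForm} and Proposition~\ref{prop:chiE}, for any $y,z\in|K|$ one has $\calP,y\models\chi(\map(z))$ iff $\map(y)\slcsEeq\map(z)$ iff $y\slcsEeq z$, so $\chi(\map(z))$ is a characteristic formula for $\slcsEeq$ on $\calP$. Now fix a topological path $\pi_1$ from $x_1$. Because $\map:|K|\to\relint{K}$ is continuous and $\relint{K}$ is finite, the set $C=\ZET{\map(\pi_1(r))}{r\in[0,1)}$ of cells visited on $[0,1)$ is finite; set $\form_1=\bigvee_{c\in C}\chi(c)$ and $\form_2=\chi(\map(\pi_1(1)))$, both \slcsE{} formulas. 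Using reflexivity of $\slcsEeq$ together with the characteristic-formula property, $\pi_1$ itself witnesses $\calP,x_1\models\eta(\form_1,\form_2)$: its endpoint satisfies $\form_2$, and every $\pi_1(r)$ with $r\in[0,1)$ satisfies the disjunct $\chi(\map(\pi_1(r)))$ of $\form_1$.

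Since $\eta(\form_1,\form_2)$ is an \slcsE{} formula and $x_1\slcsEeq x_2$, we also obtain $\calP,x_2\models\eta(\form_1,\form_2)$, so there is a topological path $\pi_2$ from $x_2$ with $\calP,\pi_2(1)\models\form_2$ and $\calP,\pi_2(r)\models\form_1$ for all $r\in[0,1)$. The matching demanded by clause~2 then follows by reading $\form_1,\form_2$ backwards: $\pi_2(1)\models\chi(\map(\pi_1(1)))$ gives $\pi_2(1)\slcsEeq\pi_1(1)$, and for each $r_2\in[0,1)$ the point $\pi_2(r_2)$ satisfies some disjunct $\chi(c)$ with $c=\map(\pi_1(r_1))$ for an $r_1\in[0,1)$, whence $\pi_2(r_2)\slcsEeq\pi_1(r_1)$. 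This establishes that $\slcsEeq$ is a weak simplicial bisimulation.

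The step I expect to be most delicate is the justification that $\form_1$ is a genuine (finite) formula, i.e.\ that $\pi_1$ visits only finitely many cells on $[0,1)$; this rests on continuity of $\map$ together with finiteness of $\relint{K}$. An alternative, following the auxiliary Lemmas~\ref{lem:dPExists} and~\ref{lem:dPtoTP}, would route clause~2 through the poset: abstract $\pi_1$ to a \dwn-path from $\map(x_1)$, transfer it to a matching \dwn-path from $\map(x_2)$ via Lemma~\ref{lem:dPExists} applied to the weak \plm-bisimulation $\slcsEeq^{\map(\calP)}$ (whose bisimulation clause one verifies for a peak $\map(x_1)\preceqsucc u_1\succeq d_1$ using the tent-shaped \plm-path $(\map(x_1),u_1,d_1)$ and the formula $\eta(\chi(\map(x_1))\lor\chi(u_1),\chi(d_1))$), and finally lift the result back to a topological path $\pi_2$ from $x_2$ with Lemma~\ref{lem:dPtoTP}. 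That route additionally requires the topological-to-discrete abstraction of paths from the cited polyhedral-model theory, which the direct argument above avoids.
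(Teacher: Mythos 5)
Your proof is correct, but it takes a genuinely different route from the paper's. The paper proves clause~2 by descending to the cell poset: it abstracts the given topological path $\pi_1$ to a \dwn-path $\hat\pi_1$ from $\map(x_1)$ (via continuity of $\map\circ\pi_1$ and connectedness of the image), invokes Lemma~\ref{lem:EtaLogIsWFpBis} to know that $\slcsEeq^{\map(\calP)}$ is a weak \plm-bisimulation, transfers $\hat\pi_1$ to a matching \dwn-path from $\map(x_2)$ by Lemma~\ref{lem:dPExists}, and lifts the result back to a topological path $\pi_2$ via Lemma~\ref{lem:dPtoTP} --- essentially your ``alternative'' sketch. Your primary argument instead stays entirely on the polyhedral side: it packages the whole transfer into the single formula $\eta\bigl(\bigvee_{c\in C}\chi(c),\,\chi(\map(\pi_1(1)))\bigr)$, where $C=\map(\pi_1([0,1)))$ is finite because $\relint{K}$ is, and where Theorem~\ref{theo:calMPresForm} together with Proposition~\ref{prop:chiE} makes $\chi(\map(z))$ a characteristic formula for $\slcsEeq$ on $\calP$ itself. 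This is the classical Hennessy--Milner ``logical equivalence is a bisimulation'' argument, and it checks out: $\pi_1$ witnesses the formula at $x_1$, logical equivalence moves it to $x_2$, and reading the disjuncts backwards yields exactly the matching required by clause~2 of Definition~\ref{def:WSimBis}. What your route buys is economy and a cleaner dependency structure --- it bypasses the topological-to-discrete path abstraction (which the paper asserts somewhat tersely), and it does not need Lemmas~\ref{lem:dPExists}, \ref{lem:dPtoTP}, or the forward reference to Lemma~\ref{lem:EtaLogIsWFpBis}. What the paper's route buys is reuse: those lemmas are needed anyway for the poset-side results, and making the continuous--discrete correspondence explicit at the level of paths fits the paper's overall programme of moving between polyhedral and poset models. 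One small remark: finiteness of $C$ follows from finiteness of $\relint{K}$ alone; continuity of $\map$ is not needed for that particular point.
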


\begin{proof}
Let $x_1,x_2 \in |K|$ such that $x_1 \slcsEeq x_2$. 
The first condition of Definition~\ref{def:WSimBis} is clearly satisfied since $x_1 \slcsEeq x_2$.
Suppose $\pi_1$ is a topological path from $x_1$.
$\map(\pi_1([0,1]))$ is a connected subposet of $\relint{K}$. 
Thus, due to continuity of $\map\circ\pi_1$, a \dwn-path $\hat\pi_1:[0;k_1] \to \relint{K}$  
from $\map(\pi_1(0))$ to $\map(\pi_1(1))$ exists such that for all $i\in [0;k_1)$ there is
$r_1 \in [0,1)$ with $\hat\pi_1(i)=\map(\pi_1(r_1))$. 
We also know that $\map(x_1) \slcsEeq \map(x_2)$, as a consequence of  Theorem~\ref{theo:calMPresForm},
since $x_1 \slcsEeq x_2$. In addition, due to Lemma~\ref{lem:EtaLogIsWFpBis} below, we also
know that $\map(x_1) \wplmbis \map(x_2)$. By Lemma~\ref{lem:dPExists},  we get that 
there is a \dwn-path $\hat\pi_2:[0;k_2] \to \relint{K}$  
such that 
$\hat\pi_1(k_1) \slcsEeq \hat\pi_2(k_2)$ and for each $j\in [0;k_2)$ there is $i\in [0;k_1)$
such that $\hat\pi_1(i) \slcsEeq \hat\pi_2(j)$. 
By Lemma~\ref{lem:dPtoTP}, it follows that there is topological path $\pi_2$ from $x_2$
satisfying the three conditions of the lemma and, again by Theorem~\ref{theo:calMPresForm},
we have that $\pi_2(1) \slcsEeq \pi_1(1)$. In addition, for any $r_2 \in [0,1)$, since 
$\map(\pi_2(r_2))= \hat\pi_2(j)$ for $j\in [0;k_2)$ (condition (ii) of Lemma~\ref{lem:dPtoTP})
there is $i\in[0;k_1)$ such that $\hat\pi_1(i) \slcsEeq \hat\pi_2(j)$.
Finally, by construction, there is $r_1\in [0,1)$ such that $\map(\pi_1(r_1)) = \hat\pi_1(i)$.
By Theorem~\ref{theo:calMPresForm}, we finally get
$\pi_1(r_1) \slcsEeq \pi_2(r_2)$.\qed
\end{proof}

On the basis of Lemma~\ref{lem:WSBisIMPLEtaLog} and Lemma~\ref{lem:EtaLogIsWSB}, we have that the largest weak simplicial bisimulation exists, it is a weak simplicial bisimilarity, it is an equivalence relation, and it coincides with logical equivalence in the polyhedral model induced by \slcsE{,} thus establishing the HMP for $\wsibis^{\calP}$ w.r.t.  \slcsE{:}

\begin{theorem}\label{thm:HMPpoly}
Given a polyhedral model $\calP= (|K|,V), x_1,x_2 \in |K|$, the following holds:
$x_1 \slcsEeq^{\calP} x_2$
iff
$x_1 \wsibis^{\calP} w_2$. \qed
\end{theorem}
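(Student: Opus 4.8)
The plan is to obtain the theorem as an immediate corollary of the two preceding lemmas, handling each direction of the biconditional in a single step; the genuine work has already been done in Lemma~\ref{lem:WSBisIMPLEtaLog} and Lemma~\ref{lem:EtaLogIsWSB}, so what remains is only to assemble them.

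For the implication $x_1 \wsibis^{\calP} x_2 \Rightarrow x_1 \slcsEeq^{\calP} x_2$ I would simply cite Lemma~\ref{lem:WSBisIMPLEtaLog}, which states precisely that weak simplicial bisimilarity entails \slcsE-equivalence; no further argument is required. For the converse, $x_1 \slcsEeq^{\calP} x_2 \Rightarrow x_1 \wsibis^{\calP} x_2$, I would invoke Lemma~\ref{lem:EtaLogIsWSB}, which guarantees that the relation $\slcsEeq$ is itself a weak simplicial bisimulation on $|K|$. Since the hypothesis $x_1 \slcsEeq^{\calP} x_2$ places the pair $(x_1,x_2)$ inside this concrete weak simplicial bisimulation, the definition of $\wsibis^{\calP}$ as relatedness by \emph{some} weak simplicial bisimulation yields $x_1 \wsibis^{\calP} x_2$ directly. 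Putting the two directions together also shows that $\slcsEeq$ is the largest weak simplicial bisimulation, and hence an equivalence, matching the remark that precedes the statement.

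The main obstacle is therefore not located in this theorem but in Lemma~\ref{lem:EtaLogIsWSB}, where the real difficulty resides: its proof must convert a topological path witnessing an $\eta$-formula into a \dwn-path in the cell poset through continuity of $\map$, transport that discrete path along the weak \plm-bisimilarity $\wplmbis$ on $\map(\calP)$ supplied by the auxiliary lemmas, and finally lift the transported path back to a topological path via Lemma~\ref{lem:dPtoTP}, crossing between the continuous and the discrete semantics by means of Theorem~\ref{theo:calMPresForm} at each transition. Once those results are in hand, the present theorem is a pure packaging step, which is exactly why the statement can be closed immediately with \qed.
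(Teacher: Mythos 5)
Your proposal is correct and matches the paper exactly: the theorem is stated with an immediate \qed precisely because it is the conjunction of Lemma~\ref{lem:WSBisIMPLEtaLog} (giving $\wsibis^{\calP}\subseteq\,\slcsEeq^{\calP}$) and Lemma~\ref{lem:EtaLogIsWSB} (showing $\slcsEeq^{\calP}$ is itself a weak simplicial bisimulation, hence contained in $\wsibis^{\calP}$). Your observation that the substantive work lives in Lemma~\ref{lem:EtaLogIsWSB} also agrees with the paper's structure.
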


Similar results can be obtained for poset models. The following lemma shows that, in every finite poset model $\calF$, weak \plm-bisimilarity (Def.~\ref{def:WPLMBis}) is stronger than
logical equivalence with respect to \slcsE{,} i.e. $\wplmbis^{\calF} \: \subseteq \: \slcsEeq^{\calF}$:

\begin{lemma}\label{lem:WFpBisIMPLEtaLog}
Given a finite poset model $\calF=(W,\preceq, \peval)$, for all $w_1,w_2 \in W$,
if $w_1 \wplmbis^{\calF} w_2$ then  $w_1 \slcsEeq^{\calF} w_2$.
\end{lemma}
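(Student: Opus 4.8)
The plan is to prove, by induction on the structure of $\form$, the (apparently stronger, but equivalent) statement that any two $\wplmbis^{\calF}$-related points agree on $\form$; the lemma is then the instance for the pair $w_1,w_2$. First I would fix a weak \plm-bisimulation $B$ with $B(w_1,w_2)$ and note $B \subseteq \wplmbis^{\calF}$, so the induction hypothesis may be applied to \emph{any} $B$-related pair. The atomic case $\form = p$ is exactly the first clause of Definition~\ref{def:WPLMBis}, which forces $w_1$ and $w_2$ to carry the same proposition letters, and the cases $\lneg\form$ and $\form_1 \land \form_2$ follow routinely from the induction hypothesis. Only $\eta(\form_1,\form_2)$ needs work, and since $\wplmbis^{\calF}$ is symmetric it suffices to show that $\calF,w_1 \models \eta(\form_1,\form_2)$ implies $\calF,w_2 \models \eta(\form_1,\form_2)$.

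For that implication I would start from a \plm-path $\pi_1:[0;\ell_1]\to W$ from $w_1$ with $\calF,\pi_1(\ell_1)\models\form_2$ and $\calF,\pi_1(i)\models\form_1$ for all $i\in[0;\ell_1)$. Every \plm-path is a \dwn-path, so Lemma~\ref{lem:dPExists} (applied to $B$) yields a \dwn-path $\pi_2:[0;k_2]\to W$ from $w_2$ with $B(\pi_1(\ell_1),\pi_2(k_2))$ and, for every $j\in[0;k_2)$, some $i\in[0;\ell_1)$ with $B(\pi_1(i),\pi_2(j))$. The induction hypothesis now transports truth across these $B$-links: from $B(\pi_1(\ell_1),\pi_2(k_2))$ I get $\calF,\pi_2(k_2)\models\form_2$, and from the matchings I get $\calF,\pi_2(j)\models\form_1$ for all $j\in[0;k_2)$. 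Crucially $j=0$ is included, so the starting point $w_2 = \pi_2(0)$ itself satisfies $\form_1$ --- this is precisely the extra requirement that distinguishes $\eta$ from $\gamma$ and that the \emph{weak} transfer condition of Definition~\ref{def:WPLMBis} is tailored to deliver.

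It remains to realise this \dwn-path as a genuine \plm-path in the sense of Definition~\ref{def:WSlcsFinPos}. Since $(W,\preceq)$ is a poset, hence a reflexive Kripke frame, Lemma~\ref{lem:d2plm} supplies a \plm-path $\pi_2'$ with the same starting point $w_2$, the same ending point $\pi_2(k_2)$, and the same set of intermediate points as $\pi_2$. Every point of $\pi_2'$ other than its last therefore lies in $\ZET{\pi_2(j)}{j\in[0;k_2)}$ and so satisfies $\form_1$, while its last point $\pi_2(k_2)$ satisfies $\form_2$; hence $\calF,w_2\models\eta(\form_1,\form_2)$, which closes the induction. I expect the delicate point of the whole proof to be exactly this final reconversion: one must ensure that passing from a \dwn-path to a \plm-path --- a transformation that may lengthen the prefix by a reflexive step but keeps the single $\form_2$-endpoint in last position --- neither drops the weak constraint that the first point satisfy $\form_1$ nor slides the $\form_2$-only endpoint into an intermediate slot, which is what the preservation of starting/ending/intermediate points in Lemma~\ref{lem:d2plm} guarantees.
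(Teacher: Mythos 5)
Your proof is correct and follows essentially the same route as the paper's: induction on formulas, reduction of the witnessing \plm-path to a \dwn-path, transfer along the matching supplied by Lemma~\ref{lem:dPExists}, and application of the induction hypothesis to the $B$-related pairs. The only difference is that you spell out the final reconversion of the resulting \dwn-path into a \plm-path via Lemma~\ref{lem:d2plm}, a step the paper leaves implicit (it is covered by Proposition~\ref{prop:interchangeableE}); your explicit attention to why that conversion cannot demote the $\form_2$-endpoint to an intermediate position is a welcome clarification rather than a deviation.
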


\begin{proof}
By induction on formulas. We consider only the case $\eta(\form_1,\form_2)$.
Suppose $w_1 \wplmbis w_2$ and $\calF, w_1 \models \eta(\form_1,\form_2)$.
Then, there is (a \plm-path and so) a \dwn-path $\pi_1$ from $w_1$ of some length $k_1$ such that 
$\calF,\pi_1(k_1) \models \form_2$ and for all $i\in [0;k_1)$ $\calF,\pi_1(i) \models \form_1$ holds.
By Lemma~\ref{lem:dPExists}, we know that a \dwn-path $\pi_2$ from $w_2$ exists 
of some length $k_2$ such that $\pi_1(k_1) \wplmbis \pi_2(k_2)$ and
for all $j \in [0;k_2)$ there is $i\in [0;k_1)$ such that $\pi_1(i) \wplmbis \pi_2(j)$.
By the Induction Hypothesis, we then get that $\calF,\pi_2(k_2) \models \form_2$ and
for all $j \in [0;k_2)$ we have $\calF,\pi_2(j) \models \form_1$. This implies that 
$\calF, w_2 \models \eta(\form_1,\form_2)$.
\end{proof}

Furthermore, logical equivalence induced by \slcsE{} is stronger than weak \plm-bisimilarity, i.e.
$\slcsEeq^{\calF}\: \subseteq\: \wplmbis^{\calF}$, as implied by the following:

\begin{lemma}\label{lem:EtaLogIsWFpBis}
In a finite poset model $\calF=(W,\preceq, \peval)$, $\slcsEeq^{\calF}$ is a weak \plm-bisimulation.
\end{lemma}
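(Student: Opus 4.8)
The plan is to show that the logical equivalence $\slcsEeq^{\calF}$ satisfies both conditions of Definition~\ref{def:WPLMBis}, i.e. that it is a weak \plm-bisimulation. The relation is clearly symmetric. For condition~(1), if $w_1 \slcsEeq^{\calF} w_2$ then, since each proposition letter $p$ is a \slcsE{} formula, $w_1$ and $w_2$ satisfy exactly the same proposition letters, whence $\invpeval(\SET{w_1}) = \invpeval(\SET{w_2})$. The substance of the proof lies entirely in verifying the transfer condition~(2).

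For condition~(2), I would argue by contradiction, exploiting the characteristic formulas $\chi(w)$ from Definition~\ref{def:chiE} and their defining property in Proposition~\ref{prop:chiE}. Assume $w_1 \slcsEeq^{\calF} w_2$ and fix $u_1, d_1 \in W$ with $w_1 \preceqsucc u_1 \succeq d_1$; I must produce a matching \plm-path from $w_2$. Consider the \slcsE{} formula $\eta\bigl(\chi(w_1) \lor \chi(u_1), \chi(d_1)\bigr)$. The crucial observation is that the given configuration $w_1 \preceqsucc u_1 \succeq d_1$ is itself (essentially) a \plm-path from $w_1$ witnessing this formula at $w_1$: the path goes up (or is related in the $\pm$ sense) from $w_1$ to $u_1$ and then down to $d_1$, so the endpoint satisfies $\chi(d_1)$ and every non-final point satisfies $\chi(w_1) \lor \chi(u_1)$. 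Here I would use Lemma~\ref{lem:d2plm} if needed to convert the evident \dwn-path into a genuine \plm-path with the same starting/ending points and intermediate points, in the reflexive frame obtained by taking the reflexive closure of $\preceq$. Hence $\calF, w_1 \models \eta(\chi(w_1) \lor \chi(u_1), \chi(d_1))$.

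Since $w_1 \slcsEeq^{\calF} w_2$, the same formula holds at $w_2$, so there is a \plm-path $\pi_2:[0;\ell_2]\to W$ from $w_2$ with $\calF,\pi_2(\ell_2)\models \chi(d_1)$ and $\calF,\pi_2(j)\models \chi(w_1)\lor\chi(u_1)$ for all $j\in[0;\ell_2)$. By Proposition~\ref{prop:chiE}, $\pi_2(\ell_2)\models\chi(d_1)$ gives $d_1 \slcsEeq^{\calF} \pi_2(\ell_2)$, i.e. $B(d_1,\pi_2(\ell_2))$; and for each $j\in[0;\ell_2)$, satisfying $\chi(w_1)$ or $\chi(u_1)$ yields $w_1 \slcsEeq^{\calF} \pi_2(j)$ or $u_1 \slcsEeq^{\calF} \pi_2(j)$, i.e. $B(w_1,\pi_2(j))$ or $B(u_1,\pi_2(j))$. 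This is exactly condition~(2), completing the proof.

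\textbf{The main obstacle} I anticipate is the careful treatment of the \plm-path witnessing the formula at $w_1$. The configuration $w_1 \preceqsucc u_1 \succeq d_1$ allows $w_1 = u_1$ or $u_1 = d_1$ and involves the $\pm$-relation rather than a strict $\preceq$, so the ``obvious'' two-step path may have length less than~$2$ or may not literally satisfy the first-step/last-step directionality required of a genuine \plm-path (namely $R(\pi(0),\pi(1))$ and $\cnv{R}(\pi(\ell-1),\pi(\ell))$). Handling these degenerate cases cleanly — presumably by passing to the reflexive closure and invoking Lemma~\ref{lem:d2plm} to normalise a \dwn-path into a \plm-path while preserving endpoints and the set of intermediate points in order — is where the technical bookkeeping concentrates. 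The intended semantics of $\preceqsucc$ and the precise matching of intermediate points to the disjunction $\chi(w_1)\lor\chi(u_1)$ must be aligned exactly with the quantifier structure of Definition~\ref{def:WPLMBis}.
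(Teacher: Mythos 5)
Your proposal is correct and follows essentially the same route as the paper: both establish condition (2) by observing that $\calF,w_1 \models \eta(\chi(w_1)\lor\chi(u_1),\chi(d_1))$, transferring this formula to $w_2$ via logical equivalence, and then applying Proposition~\ref{prop:chiE} to the resulting path. Your extra care about normalising the configuration $w_1 \preceqsucc u_1 \succeq d_1$ into a genuine \plm-path (via Lemma~\ref{lem:d2plm} in the reflexive frame) is a detail the paper leaves implicit, and your phrase ``by contradiction'' is a slip --- the argument you actually give is direct --- but neither point changes the substance.
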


\begin{proof}
If $w_1 \slcsEeq w_2$, then the first requirement of Definition~\ref{def:WPLMBis} is trivially satisfied.
We prove that $\slcsEeq$ satisfies the second requirement of Definition~\ref{def:WPLMBis}.
Suppose $w_1 \slcsEeq w_2$ and let $u_1, d_1$  as in the above mentioned requirement.
This implies that $\calF, w_1 \models \eta(\chi(w_1) \lor \chi(u_1), \chi(d_1))$, where,
we recall, $\chi(w)$ is the `characteristic formula' for $w$ as in Definition~\ref{def:chiE}.
Since $w_1 \slcsEeq w_2$, we have that also
 $\calF, w_2 \models \eta(\chi(w_1) \lor \chi(u_1), \chi(d_1)$ holds. This in turn means that 
 a \dwn-path $\pi_2$ of some length $k_2$ from $w_2$ exists such that 
 $\calF, \pi_2(k_2) \models \chi(d_1)$ and
 for all $j\in [0;k_2)$ we have $\calF, \pi_2(j) \models \chi(w_1) \lor \chi(u_1)$,
 i.e. $\calF, \pi_2(j) \models \chi(w_1)$ 
 or $\calF, \pi_2(j) \models \chi(u_1)$.
 Consequently, by Proposition~\ref{prop:chiE}, we have:
 $\pi_2(k_2) \slcsEeq d_1$ and,
 for all $j\in [0;k_2)$, $\pi_2(j) \slcsEeq w_1$ or $\pi_2(j) \slcsEeq u_1$, 
so that the second condition of the definition is fulfilled.
\end{proof}

On the basis of Lemma~\ref{lem:WFpBisIMPLEtaLog} and Lemma~\ref{lem:EtaLogIsWFpBis}, we have that the largest weak \plm-bisimulation exists, it is a weak \plm-bisimilarity, it is an equivalence relation, and it coincides with logical equivalence in the finite poset induced by \slcsE{:}

\begin{theorem}\label{theo:PMbisEqSLCSEeq}
For every  finite poset model $(W,{\preceq},\peval), w_1,w_2 \in W$, the following holds:
$w_1 \slcsEeq^{\calF} w_2$ 
iff
$w_1 \wplmbis^{\calF} w_2$.
\end{theorem}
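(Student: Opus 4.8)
The plan is to prove Theorem~\ref{theo:PMbisEqSLCSEeq} by establishing the two inclusions separately, exactly paralleling the structure already set up for the polyhedral case in Lemma~\ref{lem:WFpBisIMPLEtaLog} and Lemma~\ref{lem:EtaLogIsWFpBis}. Indeed, the theorem is an immediate corollary of these two lemmas together with standard facts about the largest bisimulation, so the main work is to argue that the union of all weak \plm-bisimulations is itself a weak \plm-bisimulation and is an equivalence relation.

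First I would show $\wplmbis^{\calF} \, \subseteq \, \slcsEeq^{\calF}$, which is precisely Lemma~\ref{lem:WFpBisIMPLEtaLog}: by structural induction on formulas, the only interesting case being $\eta(\form_1,\form_2)$, where a witnessing \plm-path (hence \dwn-path) from $w_1$ is transferred to a \dwn-path from $w_2$ via Lemma~\ref{lem:dPExists}, and the induction hypothesis upgrades the path-matching on intermediate and endpoints to agreement on $\form_1,\form_2$. Conversely, for $\slcsEeq^{\calF} \, \subseteq \, \wplmbis^{\calF}$, I would invoke Lemma~\ref{lem:EtaLogIsWFpBis}, which states that $\slcsEeq^{\calF}$ is itself a weak \plm-bisimulation; the crux there is the use of the characteristic formula $\chi(w)$ from Definition~\ref{def:chiE} and Proposition~\ref{prop:chiE}, turning the bisimulation challenge $w_1 \preceqsucc u_1 \succeq d_1$ into the satisfied formula $\eta(\chi(w_1) \lor \chi(u_1), \chi(d_1))$, which logical equivalence transports to $w_2$, yielding the required matching \dwn-path (and then a \plm-path via Lemma~\ref{lem:d2plm}, noting cell poset models are reflexive).

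The remaining step is the wrapping argument. I would note that $\slcsEeq^{\calF}$ is patently an equivalence relation and, by the two inclusions above, coincides with $\wplmbis^{\calF}$; hence $\wplmbis^{\calF}$ is an equivalence and is the largest weak \plm-bisimulation. To make this fully rigorous one should check that weak \plm-bisimulations are closed under (symmetric, reflexive, transitive) union, so that $\wplmbis^{\calF}$ defined as the union over all such relations is well-defined and again a weak \plm-bisimulation; since both inclusions pin it to $\slcsEeq^{\calF}$, this sandwiching gives the theorem directly as $w_1 \slcsEeq^{\calF} w_2 \iff w_1 \wplmbis^{\calF} w_2$.

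I expect the main obstacle to lie inside Lemma~\ref{lem:EtaLogIsWFpBis}, specifically in ensuring that the single formula $\eta(\chi(w_1) \lor \chi(u_1), \chi(d_1))$ faithfully encodes the second bisimulation condition. The subtlety is that the condition requires a \dwn-path through cells each equivalent to $w_1$ \emph{or} $u_1$, ending at a cell equivalent to $d_1$, and one must verify both that $w_1$ actually satisfies this $\eta$-formula (using the trivial length-two path $w_1 \preceqsucc u_1 \succeq d_1$, which relies on the definition of \plm-path and reflexivity) and that, after transport to $w_2$, Proposition~\ref{prop:chiE} correctly reads the satisfying path back as a genuine bisimulation witness. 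Once that encoding is confirmed, everything else is routine bookkeeping with the already-established lemmas.
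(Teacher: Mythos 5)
Your proposal is correct and follows essentially the same route as the paper: the theorem is obtained directly by combining Lemma~\ref{lem:WFpBisIMPLEtaLog} (weak \plm-bisimilarity implies \slcsE-equivalence, by structural induction using Lemma~\ref{lem:dPExists}) with Lemma~\ref{lem:EtaLogIsWFpBis} ($\slcsEeq^{\calF}$ is itself a weak \plm-bisimulation, via the characteristic formulas of Definition~\ref{def:chiE} and Proposition~\ref{prop:chiE}), plus the standard observation that the largest weak \plm-bisimulation exists and is an equivalence. Your flagged subtlety about $w_1$ satisfying $\eta(\chi(w_1)\lor\chi(u_1),\chi(d_1))$ via the short witnessing path (using reflexivity when $u_1\preceq w_1$) is exactly the point the paper relies on inside Lemma~\ref{lem:EtaLogIsWFpBis}.
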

By this we have established the HMP for $\wplmbis$ w.r.t. \slcsE{.}

Finally, recalling that, by Theorem~\ref{theo:calMPresForm}, given polyhedral model $\calP =(|K|,V)$ for all $x\in |K|$ and \slcsE{} formula $\form$, we have that 
$\calP,x \models \form$ if and only if $\map(\calP),\map(x) \models \form$, we get the following final result:

\begin{corollary}\label{cor:coincidenceE}
For all  polyhedral models $\calP=(|K|,V)$, $x_1,x_2 \in |K|$:
$$
x_1 \wsibis^{\calP} x_2 \mbox{ iff }
x_1  \slcsEeq^{\calP}  x_2 \mbox{ iff }
\map(x_1)  \slcsEeq^{\map(\calP)} \map(x_2) \mbox{ iff }
\map(x_1)  \wplmbis^{\map(\calP)}  \map(x_2).
$$
\end{corollary}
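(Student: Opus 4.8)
The plan is to read the four-way equivalence as a chain of three biconditionals and to obtain each link directly from a result already established in the excerpt, so that no new induction or path-surgery argument is needed. Concretely, I would prove
\[
x_1 \wsibis^{\calP} x_2
\;\Longleftrightarrow\;
x_1 \slcsEeq^{\calP} x_2
\;\Longleftrightarrow\;
\map(x_1) \slcsEeq^{\map(\calP)} \map(x_2)
\;\Longleftrightarrow\;
\map(x_1) \wplmbis^{\map(\calP)} \map(x_2),
\]
treating the two outer links and the central link separately.

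First I would dispatch the outer two links. The leftmost biconditional, $x_1 \wsibis^{\calP} x_2$ iff $x_1 \slcsEeq^{\calP} x_2$, is exactly Theorem~\ref{thm:HMPpoly}, the Hennessy--Milner property for weak simplicial bisimilarity. The rightmost biconditional, $\map(x_1) \slcsEeq^{\map(\calP)} \map(x_2)$ iff $\map(x_1) \wplmbis^{\map(\calP)} \map(x_2)$, is an instance of Theorem~\ref{theo:PMbisEqSLCSEeq} applied to $\map(\calP)$: since $\map(\calP)$ is a cell poset model and cell posets arise from finite simplicial complexes, $\map(\calP)$ is a finite poset model, so the theorem applies verbatim to the two elements $\map(x_1),\map(x_2)$ of its carrier.

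The central link is the only part requiring a short argument, and it is where I would be most careful. It rests on Theorem~\ref{theo:calMPresForm}, which states that for every point $x\in|K|$ and every \slcsE{} formula $\form$ we have $\calP,x\models\form$ iff $\map(\calP),\map(x)\models\form$. Unfolding the definitions of $\slcsEeq^{\calP}$ and $\slcsEeq^{\map(\calP)}$, the claim $x_1\slcsEeq^{\calP} x_2$ iff $\map(x_1)\slcsEeq^{\map(\calP)}\map(x_2)$ amounts to the equivalence of the two statements ``for all $\form$, $\calP,x_1\models\form$ iff $\calP,x_2\models\form$'' and ``for all $\form$, $\map(\calP),\map(x_1)\models\form$ iff $\map(\calP),\map(x_2)\models\form$''. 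Applying Theorem~\ref{theo:calMPresForm} once to $x_1$ and once to $x_2$ rewrites each side of the inner biconditional of the first statement into the corresponding side of the second, formula by formula, so the two universally quantified statements coincide. This establishes the central link in both directions at once.

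Since nothing genuinely new is proved here, I do not expect a real obstacle; the only point demanding attention is that $\map$ need not be injective, so $\map(x_1)$ and $\map(x_2)$ may denote the same cell. This causes no difficulty: the argument above never uses injectivity, and when $\map(x_1)=\map(x_2)$ the central link holds trivially while Theorem~\ref{theo:calMPresForm} consistently forces $x_1\slcsEeq^{\calP} x_2$. Chaining the three links then yields the full four-way equivalence.
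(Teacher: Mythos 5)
Your proof is correct and follows exactly the route the paper takes: the outer links are Theorems~\ref{thm:HMPpoly} and~\ref{theo:PMbisEqSLCSEeq}, and the central link is obtained by applying Theorem~\ref{theo:calMPresForm} to $x_1$ and $x_2$ separately and chaining the resulting biconditionals. Nothing further is needed.
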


\begin{figure}[t!]
\begin{center}
\begin{tikzpicture}[scale=0.6, every node/.style={transform shape}]
    \tikzstyle{gstate}=[circle,draw=black,fill=white]
    \tikzset{->-/.style={decoration={
        markings,
        mark=at position #1 with {\arrow{>}}},postaction={decorate}}}
    \node[gstate,fill=gray!50] (C9) at ( 0,0) {$\mathbb{C}'_1$};
    \node[gstate,fill=red!50,right of=C9,xshift=3cm] (C013){$\mathbb{C}'_2$};
    \node[gstate,fill=green!50,above of=C013,xshift=2.2cm,yshift=1.5cm] (C4){$\mathbb{C}'_4$};
    \node[gstate,fill=gray!50,right of=C013,xshift=3cm] (C25678){$\mathbb{C}'_3$};
    \draw(C9) edge[->](C013);
    \draw(C9) edge[->, loop below](C9);
    \draw(C4) edge[->, loop above](C4);
    \draw(C013) edge[->](C4);
    \draw(C25678) edge[->](C4);
    \draw(C25678) edge[->, bend right](C013);
    \draw(C013) edge[->, loop below](C013);
    \draw(C013) edge[->, bend right](C25678);
    \draw(C25678) edge[->, loop below](C25678);
\end{tikzpicture}
\end{center}
\caption{The minimal model, modulo weak \plm-bisimilarity, of the model of Fig.~\ref{fig:PolyhedronNoPathCompressed}.
}\label{fig:exa:MinRunExaE}
\end{figure}
\noindent Saying that \slcsE-equivalence in a polyhedral model is the same as weak simplicial bisimilarity, which maps by $\map$ to the weak \plm-bisimilarity in the corresponding poset model, where the latter coincides with the \slcsE-equivalence. 
\begin{example}\label{exa:MinRunExaE}
Fig.~\ref{fig:exa:MinRunExaE} shows the minimal model 
$\min(\map(\calP))$, modulo $\wplmbis$, of $\map(\calP)$  (see Fig.~\ref{subfig:PolyhedronNoPathPosetCompressed}).
We have the following equivalence classes:
$\mathbb{C}'_1=\SET{\relint{A}}$, 
$\mathbb{C}'_2=\SET{\relint{B}, \relint{C}, \relint{AB}, \relint{AC}, \relint{BC}, \relint{BD}, \relint{CD},
\relint{ABC}, \relint{BCD}}$, 
$\mathbb{C}'_3=\SET{\relint{D},\relint{E},\relint{F},\relint{CE},\relint{DE},\relint{DF},\relint{EF},\\
\relint{DEF}}$ and
$\mathbb{C}'_4=\SET{\relint{CDE}}$.
Note that the minimal model is not a poset model, but it is a reflexive Kripke model.\footnote{It is worth noting that for model-checking purposes we can safely interpret \slcsE{} over (reflexive) Kripke models. The satisfaction relation is defined as in Def.~\ref{def:WSlcsFinPos} where $\calF$ is a Kripke model instead of a poset model (recall that \plm-paths are defined on Kripke frames).}
\closeex
\end{example}

\section{A Larger Example}
\label{sec:example}

As a proof-of-concept and feasibility we show a larger example of a 3D polyhedral structure composed of one white "room" and 26 green "rooms" connected by grey "corridors" as shown in Fig.~\ref{subfig:cube}.  In turn, each room is composed of 33 vertices, 122 edges, 150 triangles and 60 tetrahedra, i.e. it is composed of a total of 365 cells. Each corridor is composed of 8 edges, 12 triangles and 5 tetrahedra, i.e. it consists of 25 cells. The corridors are connected to rooms via the four points of the side of a room. In total, the structure consists of 11,205 cells. We have chosen a large, but symmetric structure on purpose. This makes it easy to interpret the various equivalence classes present in the minimal Kripke model of this structure shown in Fig.~\ref{subfig:minKripke}. 
Observe that, for this example, the minimal  model is also a poset model and, in particular,  a cell poset model representing  a polyhedron, as shown in Fig.~\ref{subfig:minPoly}. The latter can be seen as a minimised version of the original polyhedral structure. Note also the considerable reduction that was obtained: from 11,205 cells to just 7 in the minimal model.

In Fig.~\ref{subfig:minKripke} we have indicated the various equivalence classes with a letter. Those indicated with a "C" correspond to classes of (cells of) corridors, those with an "R" correspond to classes of (cells of) rooms.
For reasons of space and clarity, in the following we will not list all the individual cells that are part of a certain class, but instead we will indicate those cells by speaking about certain rooms and corridors, intending the cells that they are composed of.

There is one white class containing all white cells of the white room. Furthermore, there are three green classes corresponding to three types of green rooms, and three grey classes corresponding to three kinds of corridors. The green class R2 is composed of the (cells in) the six green rooms situated in the middle of each side of the cube structure. Those in R3 are the cells in the twelve green rooms situated in the middle of each `edge' of the cube structure. Those in R4 are the cells in the eight green rooms situated at the corners of the cube structure.
It is not difficult to find \slcs$_\eta$ formulas that distinguish, for instance, the various green classes. 
For example, the cells in R2 satisfy $\form_1=\eta(\mathtt{green} \lor \eta(\mathtt{grey},\mathtt{white}), \mathsf{white})$, whereas no cell in R3 or R4 satisfies~$\form_1$. 
To distinguish class R3 from R4 one can observe that cells in R3 satisfy 
$\form_2= \eta(\mathtt{green} \lor \eta(\mathtt{grey},\form_1), \form_1)$ whereas those in R4 do not satisfy $\form_2$.

In this symmetric case of this synthesised example, it was rather straightforward to find the various equivalence classes. In the general case it is much harder and one would need a suitable minimisation algorithm for \slcsE{.} We are currently working on an effective minimisation procedure based on encoding the cell poset model into a suitable \lts{,} exploiting behavioural equivalences for \lts{s} ---  strong bisimilarity and branching bisimulation equivalence --- following an approach similar to that followed in~\cite{Ci+23a} for finite closure spaces.
The first results are promising. In fact, the large structure shown in this example can be handled that way and gives results as presented. Details and proofs of correctness of this approach and its potential efficiency gain will be the topic of future work, also for reasons of space limitations.

\begin{figure}[t!]
\centering
	\subfloat[]{\label{subfig:cube}
		\includegraphics[valign=b,height=11em]{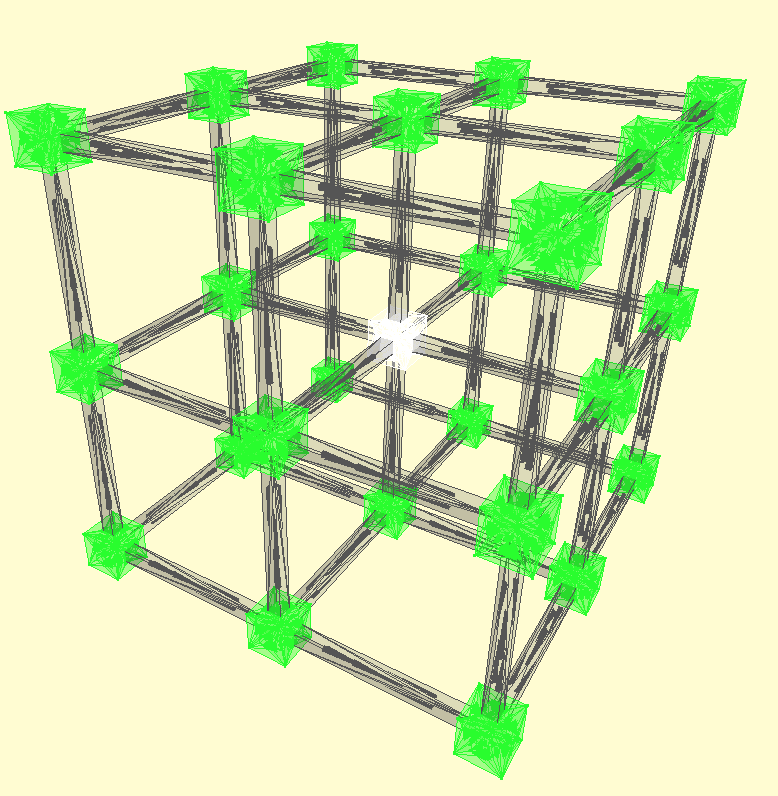}
	}\quad\quad
	\subfloat[]{\label{subfig:minKripke}
    \begin{tikzpicture}[scale=0.8, every node/.style={transform shape}]
    \tikzstyle{kstate}=[circle,draw=black,fill=white]
    \tikzset{->-/.style={decoration={
		markings,
		mark=at position #1 with {\arrow{>}}},postaction={decorate}}}
   

    
    \node[kstate,fill=white!50,label={270:$$}  ] (P0) at (  0,0) {$R1$};
    \node[kstate,fill=green!50,label={270:$$}   ] (P1) at (  1,0) {$R2$};
    \node[kstate,fill=green!50,label={270:$$}   ] (P2) at (  2,0) {$R3$};
    \node[kstate,fill=green!50,label={270:$$}   ] (P3) at (  3,0) {$R4$};

    \node[kstate,fill=gray!50,label={90:$$} ] (E0) at (0.5,1) {$C5$};
    \node[kstate,fill=gray!50,label={90:$$} ] (E1) at (1.5,1) {$C6$};
    \node[kstate,fill=gray!50,label={90:$$} ] (E2) at (2.5,1) {$C7$};

    \draw (P0)[->] to (E0);
    \draw (P1)[->] to (E0);
    \draw (P1)[->] to (E1);

    \draw (P2)[->] to (E1);
    \draw (P2)[->] to (E2);
    \draw (P3)[->] to (E2);
    
    \draw (P0)[->,loop below] to (P0);
    \draw (P1)[->,loop below] to (P1);
    \draw (P2)[->,loop below] to (P2);
    \draw (P3)[->,loop below] to (P3);
    \draw (E0)[->,loop above] to (E0);
    \draw (E1)[->,loop above] to (E1);
    \draw (E2)[->,loop above] to (E2);

\end{tikzpicture}  
	}\quad\quad
	\subfloat[]{\label{subfig:minPoly}
     \begin{tikzpicture}
       \tikzset{node distance=1cm}
       \tikzstyle{point}=[circle,inner sep=0pt,minimum width=4pt,minimum height=4pt]
           \node (A)[point,draw=black,fill=white] at (0,0){};
           \node (Aname) [above of = A,yshift = -1.5cm] {R1};
           \node (B)[point,draw=green,fill=green, right of= A] {};
          \node (Bname) [above of = B,yshift = -1.5cm] {R2};
          \node (C)[point,draw=green,fill=green, right of= B] {};
          \node (Cname) [above of = C,yshift = -1.5cm] {R3};
           \node (D)[point,draw=green,fill=green, right of= C]{};
          \node (Dname) [above of = D,yshift = -1.5cm] {R4};
       
          \draw [thick,gray] (A) edge (B);
          \draw [thick,gray] (B) edge (C);
          \draw [thick,gray] (C) edge (D);
     \end{tikzpicture}
	}
	
\caption{ (\ref{subfig:cube}) A simplicial complex of a 3D structure composed of rooms and corridors. (\ref{subfig:minKripke}) Its minimal Kripke structure. (\ref{subfig:minPoly}) Its minimal polyhedron.}\label{fig:cube}
\end{figure}

\section{Conclusions}
\label{sec:ConclusionsFW}

In~\cite{Be+22} simplicial bisimilarity was proposed for polyhedral models --- i.e. models of continuous space --- while  \plm-bisimilarity, the corresponding equivalence for cell-poset models --- discrete representations of polyhedral models --- was introduced in~\cite{Ci+23c}.
In order to support large model reductions, in this paper the novel notions of weak  simplicial bisimilarity and  weak  \plm-bisimilarity have been proposed, and the correspondence between the two has been studied. 
We have proposed \slcsE{,} a weaker version of the Spatial Logic for Closure Spaces on polyhedral models, and we have shown that simplicial bisimilarity enjoys the Hennessy-Milner property (Thm.~\ref{thm:HMPpoly}). We have also proven that the property holds for \plm-bisimilarity on poset models and the interpretation of \slcsE{} on such models (Thm.~\ref{theo:PMbisEqSLCSEeq}).
\slcsE{} can be used in the geometric spatial model checker  \polylogica{} for checking spatial reachability properties of polyhedral models. 
Model checking results can be visualised by projecting them onto the original polyhedral structure in a colour.
The results presented in this paper also have a practical value for the domain of visual computing where polyhedral models can be found in the  form of surface meshes or tetrahedral volume meshes that are often composed of a huge number of cells.

In future work, in line with our earlier work, we aim to develop an automatic, provably correct minimisation procedure so that model checking could potentially be performed on a much smaller model. We also intend to develop a procedure to translate results back to the original polyhedral model for their appropriate visualisation. Finally, the complexity and efficiency of such methods will be investigated.\vspace{-0.08in}

\begin{credits}
\subsubsection{\ackname}
Research partially supported by
Bilateral project between CNR (Italy) and SRNSFG (Georgia) ``Model Checking for Polyhedral Logic'' (\#CNR-22-010);
European Union - Next GenerationEU - National Recovery and Resilience Plan (NRRP), Investment 1.5 Ecosystems of Innovation, Project “Tuscany Health Ecosystem” (THE), CUP: B83C22003930001;
European Union - Next-GenerationEU - National Recovery and Resilience Plan (NRRP) – MISSION 4 COMPONENT 2, INVESTIMENT N. 1.1, CALL PRIN 2022 D.D. 104 02-02-2022 – (Stendhal) CUP N. B53D23012850006;
MUR project
PRIN 2020TL3X8X ``T-LADIES'';
CNR project "Formal Methods in Software Engineering 2.0", CUP B53C24000720005.

\subsubsection{\discintname}
The authors have no competing interests to declare that are
relevant to the content of this article.

\end{credits}
%
%
%
\bibliographystyle{splncs04}
\bibliography{BCGJLMdV24}
%
\appendix

\section{Background and Notation in Detail}\label{apx:BackgroundInDetail}

For sets $X$ and $Y$, a function $f:X \to Y$, and subsets $A \subseteq X$ and $B \subseteq Y$, 
we define $f(A)$ and $f^{-1}(B)$ as $\ZET{f(a)}{a \in A}$  and $\ZET{a}{f(a) \in B}$, respectively. 
The  {\em restriction} of  $f$ on $A$ is denoted by $f|A$.
The powerset of $X$ is denoted by $\pws{X}$.
For relation $R\subseteq X\times X$ we let $\cnv{R}$ denote its converse and $\dircnv{R}$ denote $R \, \cup  \cnv{R}$.
In the sequel, we assume that a set  $\ap$ of {\em proposition letters} is fixed.
The set of natural numbers and that of real numbers are denoted by $\nats$ and $\reals$ respectively. 
We use the standard interval notation: for $x,y \in \reals$ we let $[x,y]$ be the set
$\ZET{r\in \reals}{x\leq r \leq y}$, $[x,y) = \ZET{r\in \reals}{x\leq r < y}$ and so on,
where $[x,y]$ is equipped with the Euclidean topology inherited from $\reals$.
We use a similar notation for intervals over $\nats$:  
for $n,m \in \nats$ 
$[m;n]$ denotes the set $\ZET{i\in\nats}{m\leq i \leq n}$,
$[m;n)$ denotes the set $\ZET{i\in\nats}{m\leq i < n}$, and similarly for
$(m;n]$ and $(m;n)$.

A {\em topological space} is a pair $(X,\tau)$ where $X$ is a set (of {\em points}) and $\tau$  is a collection of subsets
of $X$ satisfying the following axioms: 
(i) $\emptyset,X \in \tau$,
(ii) for any index set $I$, $\bigcup_{i\in I} A_i \in \tau$ if each $A_i \in \tau$, and
(iii) for any finite index set $I$, $\bigcap_{i\in I} A_i \in \tau$ if each $A_i \in \tau$.
We let $\closure_T$ denote the topological closure operator.

A {\em Kripke frame} is a pair $(W, R)$ where $W$ is a set and $R \subseteq W \times W$, the {\em accessibility} relation on $W$. 
A {\em Kripke model} is a tuple $(W,R,\peval)$ where $(W,R)$ is a Kripke frame and $\peval: \ap \to \pws{W}$ is the valuation function, assigning to each $p\in \ap$ the set $\peval(p)$ of elements of $W$ where $p$ holds.

In the context of the present paper, it is convenient to view a 
{\em partially ordered set} --- poset, in the sequel --- $(W,\preceq)$ as a Kripke frame where the 
relation $\preceq \, \subseteq W \times W$ is a partial order, i.e. it is reflexive, transitive and anti-symmetric.
Similarly we define a {\em poset model} as a Kripke model  where the accessibility relation is a partial order. For partial orders $(W,\preceq)$, we use the standard notation, i.e.: $\cnv{\preceq}$ will be denoted by $\succeq$, $w_1 \prec w_2$ denotes $w_1 \preceq w_2$ and $w_1 \not= w_2$, and similarly for $\succ$.

\subsection{Paths}
Paths play a crucial role in the present paper. In the sequel, we provide definitions for the different kinds of paths we will use later on in the paper and we prove some useful properties of theirs.

\begin{definition}[Topological Path]\label{def:TopologicalPath}
Given a topological space $(X,\tau)$ and $x\in X$, a {\em topological path} from $x$  is a total, continuous function $\pi : [0,1] \to X$ such that $\pi(0)=x$.
We call $x$ the {\em starting point} of $\pi$. The {\em ending point} of $\pi$ is $\pi(1)$, while
for any $r\in (0,1)$, $\pi(r)$ is an {\em intermediate point} of $\pi$.
\closedefi
\end{definition}

\begin{definition}[Paths Over Kripke Frames]\label{def:PathsOverKripkeFrames}
Given a Kripke frame $(W,R)$ and $w \in W$:
\begin{itemize}
\item 
An {\em undirected path} from $w$, of length $\ell \in \nats$,  is  a total function $\pi : [0;\ell] \to W$ such that $\pi(0)=w$ and, for all $i\in [0;\ell)$,  $\dircnv{R}(\pi(i),\pi(i+1))$;
\item
A {\em \dwn-path} (to be read as ``down path'') from $w$, of length $\ell \geq 1$, is an undirected path $\pi$ from $w$ of length $\ell$ such that  $\cnv{R}(\pi(\ell-1),\pi(\ell))$;
\item
A {\em \plm-path} (to be read as ``plus-minus path'') from $w$, of length $\ell \geq 2$, is a \dwn-path $\pi$  from $w$ of length $\ell$ such that $R(\pi(0),\pi(1))$;
\item
An {\em \upd-path} (to be read as ``up-down path'') from $w$, of length $2\ell$, for $\ell \geq 1$, 
 is a \plm-path $\pi$  of length $2\ell$ such that $R(\pi(2i),\pi(2i+1))$ and $\cnv{R}(\pi(2i+1),\pi(2i+2))$, for all $i\in [0;\ell)$.\closedefi
\end{itemize}
We call $w$ the {\em starting point} of $\pi$. The {\em ending point} of $\pi$ is $\pi(\ell)$, while
for any $i\in (0;\ell)$, $\pi(i)$ is an {\em intermediate point} of $\pi$.
\closedefi
\end{definition}

Below, we will show some facts regarding the relationship among \upd-paths, \plm-paths and
\dwn-paths, but first we need to introduce some notation and operations on paths over Kripke frames.
For undirected path $\pi$ of length $\ell$ we often use the sequence notation
$(w_i)_{i=0}^{\ell}$ where $w_i=\pi(i)$ for all $i\in [0;\ell]$.

\begin{definition}[Operations on Paths]
Given a Kripke frame $(W,R)$ and  paths
$\pi' = (w'_i)_{i=0}^{\ell'}$ and $\pi'' = (w''_i)_{i=0}^{\ell''}$,
with $w'_{\ell'} = w''_0$, the \emph{sequentialisation} $\pi' \cdot \pi'' : [0;\ell' + \ell''] \to W$
of $\pi'$ with~$\pi''$ is the path 
from~$w'_0$ defined as follows:
$$
(\pi' \cdot \pi'')(i) = 
\left\{
\begin{array}{l}
\pi'(i), \text{ if } i \in [0;\ell'],\\
\pi''(i-\ell'), \text{ if } i \in [\ell'; \ell'+ \ell''].
\end{array}
\right.
$$

For path
$\pi=(w_i)_{i=0}^{\ell}$ and $k \in [0;\ell]$ we define the $k$-shift of $\pi$, denoted by
$\pi{\uparrow} k$, as follows:
$\pi{\uparrow} k = (w_{j+k})_{j=0}^{\ell-k}$ and, for $0<m\leq \ell$, we let
$\pi{\leftarrow}m$ denote the path obtained from $\pi$ by inserting a copy of $\pi(m)$ immediately before $\pi(m)$ itself. In other words, we have: $\pi{\leftarrow}m= (\pi |[0;m])\cdot((\pi(m),\pi(m))\cdot(\pi{\uparrow}m))$.
Finally, any path $\pi |[0;k]$, for some $k\in [0;\ell]$, is a {\em (non-empty) prefix} of $\pi$.
\closedefi
\end{definition}

\begin{figure}[t!]
\begin{center}
\begin{tikzpicture}[scale=0.8, every node/.style={transform shape}]
    \tikzstyle{gstate}=[circle,draw=black,fill=white]
    \tikzset{->-/.style={decoration={
        markings,
        mark=at position #1 with {\arrow{>}}},postaction={decorate}}}
    \node[gstate] (a) at ( 0,0) {a};
    \node[gstate, right of = a] (b) {b};
    \node[gstate, right of = b] (c) {c};
    \node[gstate, right of = c] (d) {d};
    \draw(a) edge[->](b);
    \draw(b) edge[->](c);
    \draw(c) edge[->](d);
\end{tikzpicture}
\end{center}
\caption{A simple finite Kripke frame. Arrows in the figure represent the accessibility relation
$R$.
}\label{fig:FinKriFra}
\end{figure}

\begin{example}
For Kripke frame $(\SET{a,b,c,d}, R)$ with 
$R=\SET{(a,b),(b,c),(c,d)}$ (see Figure~\ref{fig:FinKriFra}), 
 path $(a,b,c)$ of length $2$ and path $(c,d)$ of length $1$, 
we have that $(a,b,c)\cdot (c,d)=(a,b,c,d)$, of length $3$, $(a)\cdot (a,b) = (a,b)$, $(a)\cdot (a) = (a)$. Note the difference between sequentialisation and concatenation `$++$': for instance, $(a,b){++}(c)=(a,b,c)$ whereas $(a,b)\cdot (c)$  is undefined since $b\not=c$,  $(a){++}(a)$ is $(a,a)$ whereas $(a)\cdot (a) = (a)$.
We have $(a,b,c){\uparrow}1 =(b,c)$ and $(a,b,c){\uparrow}2 =(c)$ 
while $(a,b,c){\leftarrow}1 =(a,b,b,c)$.
Paths $(a), (a,b), (a,b,c)$ are all the (non-empty) prefixes of $(a,b,c)$.
\closeex
\end{example}

As it is clear from Def.~\ref{def:PathsOverKripkeFrames}, every \upd-path is also a \plm-path, that is also a \dwn-path. 
Furthermore, the  three lemmas below ensure that, for reflexive Kripke frames:
\begin{itemize}
\item
for  every \plm-path there is a \upd-path with the same starting and ending points and with the same set of intermediate points, occurring in the same order (Lemma~\ref{lem:pm2ud} below);
\item
for every \dwn-path there is a \upd-path with the same starting and ending points and with the same set of intermediate points, occurring in the same order (Lemma~\ref{lem:d2ud} below);  
\item
for every \dwn-path there is a \plm-path with the same starting and ending points and with the same set of intermediate points, occurring in the same order (Lemma~\ref{lem:d2plm}, proved in Sect.~\ref{apx:prf:lem:d2plm} on page~\pageref{apx:prf:lem:d2plm}).
\end{itemize}

\begin{lemma}\label{lem:pm2ud}
Given a reflexive Kripke frame $(W,R)$ and  a \plm-path $\pi:[0;\ell]\to W$,
there is a \upd-path $\pi':[0;\ell']\to W$, for some $\ell'$, and a 
total, surjective, monotonic non-decreasing function $f:[0;\ell'] \to [0;\ell]$ such that 
$\pi'(j)=\pi(f(j))$ for all $j\in [0;\ell']$.
\end{lemma}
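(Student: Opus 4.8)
The plan is to exploit the only structural gap between a \plm-path and a \upd-path: both must begin with an $R$-step and end with a $\cnv{R}$-step, but a \upd-path must in addition \emph{strictly alternate} $R$-steps and $\cnv{R}$-steps throughout, whereas a \plm-path may contain arbitrary runs of consecutive same-type steps in between. The first thing I would record is therefore a characterisation: a \plm-path $\pi:[0;\ell]\to W$ already \emph{is} a \upd-path precisely when its consecutive steps can be typed so as to alternate; in that case, since the first step satisfies $R$ and the last satisfies $\cnv{R}$, the length $\ell$ is automatically even and the conditions $R(\pi(2i),\pi(2i+1))$ and $\cnv{R}(\pi(2i+1),\pi(2i+2))$ hold for all $i\in[0;\ell/2)$. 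The task thus reduces to removing the alternation \emph{defects} --- interior positions where two consecutive steps have the same type --- without disturbing the endpoints or the order of the visited vertices.

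The heart of the argument is a padding construction that relies on reflexivity of $R$: for every $w\in W$ the self-loop $R(w,w)$ holds, hence $\cnv{R}(w,w)$ holds as well, so a duplicated vertex can serve as \emph{either} an $R$-step or a $\cnv{R}$-step. I would proceed by induction on the number of defects. If $\pi$ has a defect at an interior vertex $\pi(k)$, i.e.\ the steps $\pi(k-1)\to\pi(k)$ and $\pi(k)\to\pi(k+1)$ have the same type, then applying the insertion operation $\pi{\leftarrow}k$ (already defined in the paper) places a copy of $\pi(k)$ immediately before $\pi(k)$; the two original steps are preserved and are now separated by a self-loop which, typed with the opposite direction as reflexivity permits, restores local alternation and so removes this defect while spawning no new one. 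The result is still a \plm-path, since an interior insertion leaves the first and last steps untouched, and it has strictly fewer defects, so the induction hypothesis applies. (Equivalently one may give the construction directly, scanning $\pi$ from left to right and prefixing a self-loop of the required type whenever the next step would break the running alternation; self-loops already present in $\pi$, if any, are harmless, as they may be typed freely.)

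It then remains to exhibit the re-indexing function. Each insertion $\pi{\leftarrow}k$ comes with the evident total, surjective, non-decreasing map $h:[0;\ell+1]\to[0;\ell]$ given by $h(j)=j$ for $j\le k$ and $h(j)=j-1$ for $j\ge k+1$, which satisfies $(\pi{\leftarrow}k)(j)=\pi(h(j))$. Composing these maps along the induction yields the required $f:[0;\ell']\to[0;\ell]$: totality, surjectivity, and being non-decreasing are all preserved under composition, and $\pi'(j)=\pi(f(j))$ follows by transitivity. Because $f$ is surjective and non-decreasing with $f(0)=0$ and $f(\ell')=\ell$, the constructed \upd-path $\pi'$ has the same starting point, the same ending point, and the same intermediate vertices in the same order as $\pi$, as claimed; note also that $\ell'\ge 1$ since insertions only lengthen the path and $\ell\ge 2$.

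I expect the main obstacle to be bookkeeping rather than conceptual: making the notion of a defect precise in the presence of steps satisfying \emph{both} $R$ and $\cnv{R}$ (self-loops and symmetric edges), and verifying that each inserted self-loop can genuinely be assigned the direction needed to restore alternation without creating a fresh defect at either of its neighbours. This is exactly the point where reflexivity is indispensable, and it is cleanest handled by the explicit left-to-right scan, which avoids the typing ambiguity altogether by committing to a direction for each step only at the moment it is appended.
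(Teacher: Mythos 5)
Your proof is correct, and it rests on the same essential device as the paper's --- using reflexivity to splice in a duplicated vertex, via the operation $\pi{\leftarrow}k$, wherever the alternation required of a \upd-path fails --- but the induction is organised differently. The paper inducts on the length of the \plm-path and extends the \upd-path at its right end; this forces a case split on the direction of the step from $\pi(\ell-1)$ to $\pi(\ell)$, because the prefix $\pi|[0;\ell]$ of a \plm-path of length $\ell+1$ need not itself be a \plm-path (when that step is an $R$-step, the paper first pads $\pi|[0;\ell-1]$ with a self-loop to recover a \plm-path before invoking the induction hypothesis). You instead induct on the number of interior alternation defects and repair each one locally with an oppositely-typed self-loop; this sidesteps the prefix problem entirely and makes preservation of the endpoints, of the order of visited vertices, and of the \plm-path property under each insertion immediate. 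The price is the typing ambiguity for steps satisfying both $R$ and $\cnv{R}$, which you correctly flag and which is resolved either by fixing a typing up front (first step up, last step down --- both forced by the definition of \plm-path --- and any consistent choice elsewhere, then removing defects relative to that typing) or by your left-to-right scan, provided the scan commits the final step to type down. Your treatment of the re-indexing function, as a composition of the evident per-insertion maps, is exactly right and matches what the paper does step by step. Both arguments are sound; yours avoids the paper's case analysis at the cost of the small amount of bookkeeping you already identified.
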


\begin{proof}
We proceed by induction on the length $\ell$ of \plm-path $\pi$.\\
{\bf Base case:} $\ell =2$.\\
In this case, by definition of \plm-path, we have $R(\pi(0),\pi(1))$ and $\cnv{R}(\pi(1),\pi(2))$, which, by definition of \upd-path, implies that $\pi$ itself is an \upd-path and $f:[0;\ell] \to [0;\ell]$ is just the identity function.\\

\noindent
{\bf Induction step.} We assume the assert holds for all \plm-paths of length $\ell$ and we prove it for $\ell+1$.
Let $\pi:[0;\ell+1] \to W$ be a \plm-path. 
Then $\cnv{R}(\pi(\ell),\pi(\ell+1))$, since $\pi$ is a $\pm$-path.
We consider the following cases:\\
{\bf Case A:} $\cnv{R}(\pi(\ell-1),\pi(\ell))$ and  $\cnv{R}(\pi(\ell),\pi(\ell+1))$.\\
In this case, consider the prefix $\pi_1 = \pi | [0;\ell]$ of $\pi$, noting that $\pi_1$ is a \plm-path of length $\ell$. By the Induction Hypothesis there is an \upd-path $\pi'_1$ of some length $\ell'_1$
and a total, surjective, monotonic non-decreasing function $g:[0;\ell'_1] \to [0;\ell]$ such that 
$\pi'_1(j)=\pi_1(g(j))=\pi(g(j))$ for all $j\in [0;\ell'_1]$. 
Note that $\pi'_1(\ell'_1) = \pi(\ell)$ so that  the sequentialisation
of $\pi'_1$ with the two-element path  $(\pi(\ell),\pi(\ell+1))$ is well-defined.
Consider  path $\pi'=(\pi'_1 \cdot (\pi(\ell),\pi(\ell+1)))\leftarrow \ell'_1$, of length $\ell'_1+2$
consisting of $\pi'_1$ followed by $\pi(\ell)$ followed in turn by $\pi(\ell+1)$. 
In other words, $\pi' = (\pi'_1(0) \ldots \pi'_1(\ell'_1),\pi(\ell),\pi(\ell+1))$, with $\pi'_1(\ell'_1) = \pi(\ell)$ --- recall that $R$ is reflexive.
It is easy to see that $\pi'$ is an \upd-path and that function $f:[0;\ell'_1+2] \to [0;\ell+1]$, with
$f(j)=g(j)$ for $j\in [0;\ell'_1]$, $f(\ell'_1+1)=\ell$ and $f(\ell'_1+2)=\ell+1$, is 
total, surjective, and monotonic non-decreasing.\\
{\bf Case B:} $R(\pi(\ell-1),\pi(\ell))$ and $\cnv{R}(\pi(\ell),\pi(\ell+1))$.\\
In this case the prefix $\pi | [0;\ell]$ of $\pi$ is {\em not} a \plm-path.
We then consider the path consisting of prefix $\pi|[0;\ell-1]$ where we add a copy
of $\pi(\ell-1)$, i.e. the path $\pi_1=(\pi|[0;\ell-1])\leftarrow (\ell-1)$ --- we can do that because $R$ is reflexive.
Note that $\pi_1$ is a \plm-path and has length $\ell$.
By the Induction Hypothesis there is an \upd-path $\pi'_1$ of some length $\ell'_1$
and a total, surjective, monotonic non-decreasing function $g:[0;\ell'_1] \to [0;\ell]$ such that 
$\pi'_1(j)=\pi_1(g(j))=\pi(g(j))$ for all $j\in [0;\ell'_1]$. 
Consider path $\pi'=\pi'_1 \cdot (\pi(\ell-1),\pi(\ell), \pi(\ell+1))$, of length $\ell'_1 +2$,
that is well defined since $\pi'_1(\ell'_1)=\pi(\ell-1)$ by definition of $\pi_1$.
In other words, $\pi'=(\pi'_1(0),\ldots,\pi'_1(\ell'_1),\pi(\ell),\pi(\ell+1))$, with $\pi'_1(\ell'_1) = \pi(\ell-1)$.
Path $\pi'$ is an \upd-path. In fact $\pi'|[0;\ell'_1]=\pi'_1$ is an \upd-path. Furthermore, 
$\pi'(\ell'_1)=\pi(\ell-1)$, $R(\pi(\ell-1),\pi(\ell))$, $\cnv{R}(\pi(\ell)),\pi(\ell+1)$ and $\pi(\ell+1)=\pi'(\ell'_1 +2)$.
Finally, function $f:[0;\ell'_1+2] \to [0;\ell+1]$, with
$f(j)=g(j)$ for $j\in [0;\ell'_1]$, $f(\ell'_1+1)=\ell$ and $f(\ell'_1+2)=\ell+1$, is 
total, surjective, and monotonic non-decreasing.
\end{proof}

\begin{lemma}\label{lem:d2ud}
Given a reflexive Kripke frame  $(W,R)$ and  a \dwn-path $\pi:[0;\ell]\to W$,
there is a \upd-path $\pi':[0;\ell'']\to W$, for some $\ell'$, and a 
total, surjective, monotonic non-decreasing function $f:[0;\ell'] \to [0;\ell]$ such that 
$\pi'(j)=\pi(f(j))$ for all $j\in [0;\ell']$.
\end{lemma}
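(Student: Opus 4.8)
The plan is to derive Lemma~\ref{lem:d2ud} by composing the two transformations already at our disposal: Lemma~\ref{lem:d2plm}, which turns a \dwn-path into a \plm-path, and Lemma~\ref{lem:pm2ud}, which turns a \plm-path into an \upd-path. Both are stated for reflexive Kripke frames, exactly the hypothesis here, and both record the correspondence between the original and the transformed path through a total, surjective, monotonic non-decreasing reindexing function. The key observation is that such reindexings compose, so chaining the two lemmas yields the desired reindexing from the \upd-path onto the original \dwn-path.

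Concretely, first I would apply Lemma~\ref{lem:d2plm} to the given \dwn-path $\pi:[0;\ell]\to W$ to obtain a \plm-path $\bar\pi:[0;\bar\ell]\to W$ together with a total, surjective, monotonic non-decreasing $g:[0;\bar\ell]\to[0;\ell]$ satisfying $\bar\pi(j)=\pi(g(j))$ for all $j\in[0;\bar\ell]$. Next I would apply Lemma~\ref{lem:pm2ud} to $\bar\pi$, obtaining an \upd-path $\pi':[0;\ell']\to W$ and a total, surjective, monotonic non-decreasing $h:[0;\ell']\to[0;\bar\ell]$ with $\pi'(k)=\bar\pi(h(k))$. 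Setting $f=g\circ h:[0;\ell']\to[0;\ell]$ then gives $\pi'(k)=\bar\pi(h(k))=\pi(g(h(k)))=\pi(f(k))$ for every $k\in[0;\ell']$, which is precisely the required relation (reading the $\ell''$ of the statement as $\ell'$).

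The remaining verification is that $f$ enjoys the three stipulated properties, and this is the only routine point. Totality is immediate; surjectivity follows because a composition of surjections is surjective; and $f$ is monotonic non-decreasing because it is the composition of two monotonic non-decreasing maps. These properties also pin down the endpoints: any non-decreasing surjection from $[0;n]$ onto $[0;m]$ must send $0$ to $0$ and $n$ to $m$, so $f(0)=g(h(0))=g(0)=0$ and $f(\ell')=g(h(\ell'))=g(\bar\ell)=\ell$; hence $\pi'$ starts at $\pi(0)=w$ and ends at $\pi(\ell)$, with its intermediate vertices being exactly those of $\pi$ occurring in the same order.

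I do not expect a genuine obstacle here: since the two component lemmas are already available and their reindexing data compose cleanly, the argument is essentially bookkeeping. The only place demanding a little care is confirming that the class of total, surjective, monotonic non-decreasing functions is closed under composition and that this forces the boundary behaviour. Should one prefer to avoid relying on Lemma~\ref{lem:d2plm}, whose own proof lives in a separate appendix, the alternative would be a direct induction on $\ell$ mirroring the proof of Lemma~\ref{lem:pm2ud}, splitting on whether the first step of the \dwn-path is an $R$-step or a $\cnv{R}$-step and using reflexivity to insert a self-loop in the latter case; but the compositional route is shorter and reuses existing work.
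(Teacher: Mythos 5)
Your compositional route is circular in the context of this paper: Lemma~\ref{lem:d2plm} is not an independent resource here. The paper's own proof of Lemma~\ref{lem:d2plm} consists of the single observation that it ``follows directly from Lemma~\ref{lem:d2ud} since every \upd-path is also a \plm-path''. So deriving Lemma~\ref{lem:d2ud} by first invoking Lemma~\ref{lem:d2plm} and then Lemma~\ref{lem:pm2ud} assumes a statement whose only justification in the paper is the very lemma you are trying to prove. Your bookkeeping is otherwise fine --- total, surjective, monotonic non-decreasing reindexings are indeed closed under composition, and such a map $f:[0;\ell']\to[0;\ell]$ is forced to satisfy $f(0)=0$ and $f(\ell')=\ell$ --- but it rests on a premise that is not available at this point in the development. (Lemma~\ref{lem:pm2ud}, by contrast, is proved independently and may be used freely.)

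The escape route you mention in your final sentence is exactly what the paper does: a direct induction on $\ell$. In the base case $\ell=1$, reflexivity lets one duplicate $\pi(0)$ to obtain the \upd-path $(\pi(0),\pi(0),\pi(1))$ with $f(0)=f(1)=0$, $f(2)=1$. In the induction step one applies the induction hypothesis to the shifted path $\pi{\uparrow}1$ to get an \upd-path $\pi''$ from $\pi(1)$, and then prepends $(\pi(0),\pi(0),\pi(1))$ if $R(\pi(0),\pi(1))$ fails (using reflexivity for the self-loop) or $(\pi(0),\pi(1),\pi(1))$ if it holds, adjusting the reindexing accordingly. If you flesh out that induction, you have a complete and non-circular proof; as written, your primary argument cannot stand.
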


\begin{proof}
The proof is carried out by induction on the length $\ell$ of $\pi$.\\
{\bf Base case.} $\ell=1$.
Suppose $\ell=1$, i.e. $\pi:[0;1] \to W$ with $\cnv{R}(\pi(0),\pi(1))$. Then let 
$\pi':[0;2]\to W$ be such that $\pi'(0)=\pi'(1)=\pi(0)$ and $\pi'(2)=\pi(1)$ --- we can do that since $R$ is reflexive --- and
$f:[0;2] \to [0;1]$ be such that $f(0)=f(1)=0$ and $f(2)=1$.
Clearly $\pi'$ is an \upd-path and $\pi'(j)=\pi(f(j))$ for all $j\in [0;2]$.\\
{\bf Induction step.} We assume the assert holds for all \dwn-paths of length $\ell$ and we prove it for $\ell+1$.
Let $\pi:[0;\ell+1] \to W$ a \dwn-path and suppose the assert holds for all \dwn-paths of length $\ell$. 
In particular, it holds for $\pi\uparrow 1$, i.e., there is a \upd-path $\pi''$ 
of some length $\ell''$ with $\pi''(0) = \pi(1)$, and 
total, monotonic non-decreasing surjection $g:[0;\ell'']\to W$ such that 
$\pi''(j) = \pi(g(j))$ for all $j\in [0;\ell'']$.
Suppose $R(\pi(0),\pi(1))$ does not hold. Then, since $R$ is reflexive, we let $\pi'= (\pi(0),\pi(0),\pi(1))\cdot \pi''$ and
$f :[0;\ell''+2] \to [0;\ell+1]$ with $f(0)=f(1)=0$ and
$f(j)=g(j-2)$ for all $j\in[2;\ell''+2]$. 
If instead $R(\pi(0),\pi(1))$, then we let $\pi'= (\pi(0),\pi(1),\pi(1))\cdot \pi''$ and
$f :[0;\ell''+2] \to [0;\ell+1]$ with $f(0)=0, f(1)=1$ and
$f(j)=g(j-2)$ for all $j\in[2;\ell''+2]$.
\end{proof}

The following result states that to evaluate an \slcsE{}  formula $\eta(\form_1,\form_2)$ in a poset model, it does not matter whether one considers \plm-paths or \dwn-paths. 

\begin{proposition}\label{prop:interchangeableE}
Given a finite poset $\calF=(W,\preceq,\peval)$, $w \in W$ and an \slcsE{} formula $\eta(\form_1,\form_2)$ the following statements are equivalent:
\begin{enumerate}
\item\label{enu:plmE}
There exists a \plm-path $\pi:[0;\ell] \to W$ for some $\ell$ with
$\pi(0)=w$, 
$\calF, \pi(\ell)\models \form_2$ and 
$\calF, \pi(i)\models \form_1$  for all $i\in (0;\ell)$.
\item\label{enu:dwnE}
There exists a \dwn-path $\pi:[0;\ell] \to W$ for some $\ell$ with
$\pi(0)=w$, 
$\calF, \pi(2\ell)\models \form_2$ and 
$\calF, \pi(i)\models \form_1$  for all $i\in (0;\ell)$.
\end{enumerate}
\end{proposition}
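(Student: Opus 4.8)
The plan is to establish the two implications separately; the direction \ref{enu:plmE}~$\Rightarrow$~\ref{enu:dwnE} is immediate, while \ref{enu:dwnE}~$\Rightarrow$~\ref{enu:plmE} carries all the content. For \ref{enu:plmE}~$\Rightarrow$~\ref{enu:dwnE} I would simply note that, by Definition~\ref{def:PathsOverKripkeFrames}, every \plm-path is already a \dwn-path: the two notions share the final-step requirement $\cnv{R}(\pi(\ell-1),\pi(\ell))$, the \plm-path carrying only the additional constraint $R(\pi(0),\pi(1))$. Consequently a witness for~\ref{enu:plmE} is verbatim a witness for~\ref{enu:dwnE}: it has the same starting point $w$, the same ending point $\pi(\ell)$ (where $\form_2$ holds) and the same intermediate points (where $\form_1$ holds), so nothing needs to be constructed.

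For \ref{enu:dwnE}~$\Rightarrow$~\ref{enu:plmE} the idea is to repair a \dwn-path so that it also begins with an ``up'' step, turning it into a \plm-path without disturbing the atomic information along it. Given a \dwn-path $\pi:[0;\ell]\to W$ witnessing~\ref{enu:dwnE}, I would distinguish two cases. If $\ell\geq 2$ and $R(\pi(0),\pi(1))$ holds, then $\pi$ is already a \plm-path and there is nothing to do. Otherwise (when $\ell=1$, or the first step strictly descends) I would prepend a reflexive self-loop at $w$, which is legitimate since a poset frame is reflexive, obtaining $\pi'=(w,w)\cdot\pi=(w,w,\pi(1),\dots,\pi(\ell))$; its first step is $R(w,w)$ and its last step is the unchanged $\cnv{R}(\pi(\ell-1),\pi(\ell))$, so $\pi'$ is a \plm-path of length $\ell+1\geq 2$. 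More in line with the surrounding development, one may instead invoke Lemma~\ref{lem:d2plm} to obtain directly a \plm-path $\pi'$ together with a total, surjective, monotonic non-decreasing reparametrisation $f$ satisfying $\pi'(j)=\pi(f(j))$.

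It then remains to transport the satisfaction conditions along the repaired path, and this is where the only genuine care is required. The ending point is untouched --- $\pi'(\ell+1)=\pi(\ell)$, and in the Lemma~\ref{lem:d2plm} formulation $\pi'(\ell')=\pi(f(\ell'))=\pi(\ell)$ because $f$ is a monotonic non-decreasing surjection, forcing $f(0)=0$ and $f(\ell')=\ell$ --- so $\form_2$ still holds at the end. Every intermediate vertex of $\pi'$ is either one of the original intermediate vertices of $\pi$, where $\form_1$ holds by hypothesis, or a copy of the starting vertex $w$ inserted by the repair. I expect this inserted vertex to be the main obstacle: it equals $w=\pi(0)$, so the argument needs $\calF,w\models\form_1$, i.e.\ precisely the start instance of the clause ``$\calF,\pi(i)\models\form_1$'' coming from the evaluation of $\eta$ (cf.\ the start-inclusive index range $[0;\ell)$ in Definition~\ref{def:WSlcsFinPos}). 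With $\calF,w\models\form_1$ secured, all intermediate vertices of $\pi'$ satisfy $\form_1$; when using the reparametrisation one should moreover check that no interior index is mapped by $f$ to $\ell$ --- which would spuriously require $\form_1$ at the $\form_2$-endpoint --- a defect the minimal repair above avoids by construction. This shows that $\pi'$ witnesses~\ref{enu:plmE}, completing the proof.
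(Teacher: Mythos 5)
Your proof is correct and, in substance, follows the same route as the paper's: the paper disposes of the whole proposition in one line, citing Lemma~\ref{lem:d2plm} together with the fact that every \plm-path is a \dwn-path. Your extra care is not wasted, though. First, your elementary repair --- prepending a reflexive self-loop $(w,w)$ when the first step of the \dwn-path is not ascending --- is exactly what the inductive construction behind Lemma~\ref{lem:d2plm} (via Lemma~\ref{lem:d2ud}) does at the front of the path, so the two variants you offer are really the same argument at different levels of abstraction. Second, and more importantly, the obstacle you isolate is genuine and is glossed over by the paper's one-line proof: the inserted copy of $w$ becomes an \emph{intermediate} point of the repaired \plm-path, so the argument truly needs $\calF,w\models\form_1$. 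As literally printed, item~(\ref{enu:dwnE}) only guarantees $\form_1$ on the open range $(0;\ell)$ (and writes $\pi(2\ell)$ where $\pi(\ell)$ is meant), and under that reading the implication (\ref{enu:dwnE})$\Rightarrow$(\ref{enu:plmE}) actually fails: a length-one \dwn-path descending directly to a $\form_2$-point is a counterexample when $\form_1$ holds nowhere, since every \plm-path has length at least $2$ and hence at least one intermediate point. The intended ranges are the start-inclusive $[0;\ell)$ of Definition~\ref{def:WSlcsFinPos}, under which your argument (and the paper's) goes through. Your final caveat --- that the reparametrisation $f$ of Lemma~\ref{lem:d2plm} must not send an interior index to $\ell$ --- is likewise well taken: the formal statement of that lemma does not exclude this, although the constructions in Lemmas~\ref{lem:pm2ud} and~\ref{lem:d2ud} do, and your self-loop repair sidesteps the issue entirely.
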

\begin{proof}
The equivalence of statements (\ref{enu:plmE}) and (\ref{enu:dwnE}) follows directly 
from Lemma~\ref{lem:d2plm} and the fact that \plm-paths are also \dwn-paths.
\end{proof}

\begin{lemma}\label{lem:dPExists4udP}
Given a finite poset model $\calF=(W,\preceq, \peval)$ and  a
weak \plm-bisimulation $B \subseteq W \times W$, for all $w_1,w_2$ such that $B(w_1,w_2)$, the following holds:
for each  \upd-path $\pi_1:[0;2h] \to W$ from $w_1$ 
there is a \dwn-path $\pi_2:[0;k] \to W$ from $w_2$ 
such that $B(\pi_1(2h),\pi_2(k))$ and 
for each $j\in [0;k)$ there is $i\in [0;2h)$ such that $B(\pi_1(i),\pi_2(j))$.
\end{lemma}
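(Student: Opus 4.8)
The plan is to proceed by induction on $h$, the number of ``peaks'' of the \upd-path $\pi_1$. Recall that a \upd-path of length $2h$ has the zig-zag shape $\pi_1(0)\preceq\pi_1(1)\succeq\pi_1(2)\preceq\cdots\preceq\pi_1(2h-1)\succeq\pi_1(2h)$, so it decomposes into $h$ consecutive peaks, the $i$-th one being $\pi_1(2i)\preceq\pi_1(2i+1)\succeq\pi_1(2i+2)$. Each such peak is precisely an instance of the pattern $w_1\preceqsucc u_1\succeq d_1$ of clause~2 of Definition~\ref{def:WPLMBis} (using the $\preceq$-alternative of $\preceqsucc$), which is what allows us to invoke the weak \plm-bisimulation. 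Since a \upd-path has length $2\ell$ with $\ell\geq 1$, the base case is $h=1$.

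For the base case $h=1$, the path $\pi_1$ is a single peak $w_1=\pi_1(0)\preceq\pi_1(1)\succeq\pi_1(2)$. Taking $u_1=\pi_1(1)$ and $d_1=\pi_1(2)$, clause~2 of Definition~\ref{def:WPLMBis} directly yields a \plm-path $\pi_2$ from $w_2$ with $B(\pi_1(2),\pi_2(\ell_2))$ and, for every $j\in[0;\ell_2)$, either $B(\pi_1(0),\pi_2(j))$ or $B(\pi_1(1),\pi_2(j))$. As every \plm-path is a \dwn-path, $\pi_2$ is the required path, the matching index $i$ being $0$ or $1$, both in $[0;2)$.

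For the inductive step, I would first restrict $\pi_1$ to its prefix $\pi_1|[0;2h]$, which is a \upd-path of length $2h$ from $w_1$; the induction hypothesis supplies a \dwn-path $\pi_2'\colon[0;k']\to W$ from $w_2$ with $B(\pi_1(2h),\pi_2'(k'))$ such that for each $j\in[0;k')$ some $i\in[0;2h)$ has $B(\pi_1(i),\pi_2'(j))$. Writing $w_2'=\pi_2'(k')$, I have $B(\pi_1(2h),w_2')$, so clause~2 of Definition~\ref{def:WPLMBis} applied to the last peak $\pi_1(2h)\preceq\pi_1(2h+1)\succeq\pi_1(2h+2)$ (with $u_1=\pi_1(2h+1)$, $d_1=\pi_1(2h+2)$) gives a \plm-path $\pi_2''\colon[0;\ell'']\to W$ from $w_2'$ with $B(\pi_1(2h+2),\pi_2''(\ell''))$ and all interior vertices $B$-related to $\pi_1(2h)$ or $\pi_1(2h+1)$. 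The witness for the extended path is the sequentialisation $\pi_2=\pi_2'\cdot\pi_2''$, which is well defined since $\pi_2'(k')=w_2'=\pi_2''(0)$, of length $k=k'+\ell''$.

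The bookkeeping is then routine: the endpoint condition holds because $\pi_2(k)=\pi_2''(\ell'')$, while for $j\in[0;k)$ one distinguishes the prefix range $[0;k')$, handled by the induction hypothesis with matching index in $[0;2h)$, from the suffix range $[k';k)$, where $\pi_2(j)=\pi_2''(j-k')$ is handled by the bisimulation clause with matching index $2h$ or $2h+1$; in either case the index lies in $[0;2h+2)$. The one genuinely structural point --- and the step I expect to be the main obstacle --- is checking that $\pi_2$ is again a \dwn-path. All of its steps are $\dircnv{\preceq}$-steps, since $\pi_2'$ and $\pi_2''$ are undirected paths; and its final step coincides with the final step of the \plm-path $\pi_2''$, which, being a \dwn-path, is a down-step $\pi_2''(\ell''-1)\succeq\pi_2''(\ell'')$. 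Hence the defining down-step condition of a \dwn-path is met. This is exactly why clause~2 of Definition~\ref{def:WPLMBis} is phrased to deliver \plm-paths (which always end with a down-step) rather than arbitrary undirected paths: it is what guarantees that the concatenation remains a \dwn-path.
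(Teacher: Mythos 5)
Your proof is correct and follows essentially the same route as the paper's: induction on $h$, applying the induction hypothesis to the prefix $\pi_1|[0;2h]$, invoking clause~2 of Definition~\ref{def:WPLMBis} on the final peak, and sequentialising the two resulting paths. Your explicit check that the concatenation is again a \dwn-path (because the appended \plm-path ends with a down-step) is a point the paper passes over more quickly, but the argument is the same.
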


\begin{proof}
We prove the assert by induction on $h$.\\
{\bf Base case.} $h=1$.\\
If $h=1$, the assert follows directly from Definition~\ref{def:WPLMBis} on page~\pageref{def:WPLMBis}
where $w_1= \pi(0), u_1=\pi(1)$ and $d_1=\pi(2)$.\\
{\bf Induction step.} We assume the assert holds for \upd-paths of length $2h$ or less and we prove it for \upd-paths of length $2(h+1)$. \\
Suppose $\pi_1$ is a \upd-path of length $2h+2$ and consider \upd-path $\pi'_1=\pi_1|[0;2h]$.
By the Induction Hypothesis, we know that there is a \dwn-path $\pi'_2:[0;k']\to W$ from $w_2$
such that $B(\pi'_1(2h),\pi'_2(k'))$ and for each $j\in[0;k')$ there is $i\in [0;2h)$ such that
$B(\pi'_1(i),\pi'_2(j))$. Clearly, this means that $B(\pi_1(2h),\pi'_2(k'))$ and for each $j\in[0;k')$ there is $i\in [0;2h)$ such that $B(\pi_1(i),\pi'_2(j))$.
Furthermore, since $B(\pi_1(2h),\pi'_2(k'))$ and $B$ is a weak \plm-bisimulation, 
we also know that there is a \dwn-path 
$\pi''_2:[0;k'']\to W$ from $\pi'_2(k')$ such that $B(\pi_1(2h+2),\pi''_2(k''))$ and for each $j\in[0;k'')$
there is $i\in [2h;2h+2)$ such that $B(\pi_1(i),\pi'_2(j))$. Let $\pi_2:[0;k'+k'']\to W$ be defined as
$\pi_2= \pi'_2 \cdot \pi''_2$. Clearly $\pi_2$ is a \dwn-path, since so is $\pi''_2$.
Furthermore $B(\pi_1(2h+2),\pi_2(k'+k''))$ since $B(\pi_1(2h+2),\pi''_2(k''))$ and $\pi''_2(k'')=\pi_2(k'+k'')$. Finally, it is straightforward to check  for all $j\in [0;k'+k'')$ 
there is $i\in [0;2h+2)$ such that $B(\pi_1(i),\pi_2(j))$.
\end{proof}

\section{Detailed Proofs}\label{apx:DetailedProofs}

\subsection{Proof of Lemma~\ref{lem:d2plm}}\label{apx:prf:lem:d2plm}
\noindent
{\bf Lemma~\ref{lem:d2plm}.}
{\em 
Given a reflexive Kripke frame  $(W,R)$ and a \dwn-path $\pi:[0;\ell]\to W$,
there is a \plm-path $\pi':[0;\ell'']\to W$, for some $\ell'$, and a 
total, surjective, monotonic, non-decreasing function $f:[0;\ell'] \to [0;\ell]$ with  
$\pi'(j)=\pi(f(j))$ for all $j\in [0;\ell']$.
}

\begin{proof}
The assert follows directly from Lemma~\ref{lem:d2ud} on page~\pageref{lem:d2ud} since every \upd-path is also a \plm-path.
\end{proof}

\subsection{Proof of Lemma~\ref{lem:etgaCorrectG}}\label{apx:Prf:lem:etgaCorrectG}
\noindent
{\bf Lemma~\ref{lem:etgaCorrectG}.}
{\em
Let $\calP=(|K|,V)$ be a polyhedral model, $x \in |K|$ and $\form$ a \slcsE{} formula.
Then $\calP,x \models \form$ iff $\calP,x \models \etga(\form)$.
}

\begin{proof}
By induction on the structure of $\form$. We consider only the case $\eta(\form_1,\form_2)$.
Suppose $\calP,x \models \eta(\form_1,\form_2)$. By definition there is a topological path 
$\pi$ such that $\calP,\pi(1) \models \form_2$ and 
$\calP,\pi(r) \models \form_1$ for all $r \in [0,1)$. By the Induction Hypothesis this is the same to say that $\calP,\pi(1) \models \etga(\form_2)$ and 
$\calP,\pi(r) \models \etga(\form_1)$ for all $r \in [0,1)$, i.e.
$\calP,x \models \etga(\form_1)$, 
$\calP,\pi(1) \models \etga(\form_2)$ and 
$\calP,\pi(r) \models \etga(\form_1)$ for all $r \in (0,1)$. In other words, we have
$\calP,x \models \etga(\form_1) \land \gamma(\etga(\form_1), \etga(\form_2))$ that,
by Definition~\ref{def:etga} on page~\pageref{def:etga} means $\calP,x \models \etga(\eta(\form_1,\form_2))$.

Suppose now $\calP,x \models \etga(\eta(\form_1,\form_2))$, i.e.
$\calP,x \models \etga(\form_1) \land \gamma(\etga(\form_1), \etga(\form_2))$, by Definition~\ref{def:etga} on page~\pageref{def:etga}.
Since $\calP,x \models \gamma(\etga(\form_1), \etga(\form_2))$, there is a path $\pi$
such that $\calP,\pi(1) \models \etga(\form_2)$ and 
$\calP,\pi(r) \models \etga(\form_1)$ for all $r \in (0,1)$. Using the Induction Hypothesis
we know the following holds:
$\calP,x \models  \form_1$, $\calP,\pi(1) \models \form_2$, and
$\calP,\pi(r) \models \form_1$ for all $r \in (0,1)$, i.e.
$\calP,\pi(1) \models \form_2$ and $\calP,\pi(r) \models \form_1$ for all $r \in [0,1)$.
So, we get $\calP,x \models \eta(\form_1,\form_2)$.
\end{proof}

\subsection{Proof concerning the example of Remark~\ref{rem:weakGweakerG}}
\label{apx:prf:rem:weakGweakerG}
The assert can be proven by induction on the structure of formulas.
The case for proposition letters, negation  and conjunction are straightforward and omitted.

Suppose $\calP, A \models \eta(\form_1, \form_2)$. 
Then there is a topological path $\pi_A:[0,1] \to |K|$ from $A$ such that
$\calP, \pi_A(1) \models \form_2$ and $\calP, \pi_A(r)\models \form_1$ for all 
$r \in [0,1)$. 
Since $\calP, A \models \form_1$, by the Induction Hypothesis, we have that $\calP, x \models \form_1$ for all $x \in \relint{ABC}$.
For each $x\in\relint{ABC}$, define $\pi_x: [0,1] \to |K|$ as follows, for arbitrary $v \in (0,1)$:
$$
\pi_x(r)=
\left\{
\begin{array}{l}
\frac{r}{v}A + \frac{v-r}{v}x, \mbox{ if } r\in [0,v),\\\\
\pi_A(\frac{r-v}{1-v}), \mbox{ if }r\in [v,1].
\end{array}
\right.
$$
Function $\pi_x$ is continuous. Furthermore, for all $y\in [0,v)$, we have that $\calP, \pi_x(y) \models \form_1$,  since $\pi_x(y) \in \relint{ABC}$. Also, for all $y\in [v,1)$ we have that $\calP, \pi_x(y) \models \form_1$,  since $\pi_x(y) = \pi_A(\frac{y-v}{1-v})$,
$0\leq \frac{y-v}{1-v}<1$ and for $y \in [0,1)$ we have that $\calP,\pi_A(y), \models \form_1$.  Thus
$\calP,\pi_x(r) \models \form_1$ for all $r \in [0,1)$. Finally, $\pi_x(1)=\pi_A(1)$ and 
$\calP, \pi_A(1) \models \form_2$ by hypothesis. 
Thus, $\pi_x$ is a topological path that witnesses $\calP, x \models \eta(\form_1, \form_2)$.

The proof of the converse is similar, using instead function $\pi_A: [0,1] \to |K|$ defined as follows, for arbitrary $v \in (0,1)$:
$$
\pi_A(r)=
\left\{
\begin{array}{l}
\frac{r}{v}p + \frac{v-r}{v}A, \mbox{ if } r\in [0,v),\\\\
\pi_p(\frac{r-v}{1-v}), \mbox{ if }r\in [v,1].
\end{array}
\right.
$$

\subsection{Proof of Proposition~\ref{prop:chiE}}\label{apx:prf:prop:chiE}
\noindent
{\bf Proposition~\ref{prop:chiE}.}
{\em
Given a finite poset model $(W,\preceq,\peval)$, for $w_1,w_2 \in W$, it holds that 
$$
  \calF,w_2 \models \chi(w_1) \mbox{ if and only if } w_1 \slcsEeq w_2.
$$
}
\begin{proof}
Suppose $w_1 \not\slcsEeq w_2$, then we have $\calF,w_2 \not\models \delta_{w_1,w_2}$, and so
  $\calF,w_2 \not\models \bigwedge_{w \in W} \: \delta_{w_1,w}$.
  If, instead, $w_1 \slcsEeq w_2$, then we have: $\delta_{w_1,w_1} \equiv \delta_{w_1,w_2} \equiv \ltrue  $ by definition, since $w_1 \slcsEeq w_1$ and $w_1 \slcsEeq w_2$. Moreover, for any other $w$, we have that, in any case,
  $\calF,w_1\models \delta_{w_1,w}$ holds and since $w_1 \slcsEeq w_2$, also $\calF,w_2\models \delta_{w_1,w}$ holds.
  So, in conclusion, $\calF,w_2\models \bigwedge_{w \in W} \: \delta_{w_1,w}$.
\end{proof}

\subsection{Proof of Lemma~\ref{lem:etgaCorrectE}}\label{apx:prf:lem:etgaCorrectE} 
\noindent
{\bf Lemma~\ref{lem:etgaCorrectE}.}
{\em
Let $\calF=(W,\preceq,\peval)$ be a 
finite poset model, 
$w \in W$ and $\form$ a \slcsE{} formula.
Then $\calF,w \models \form$ iff $\calF,w \models \etga(\form)$.
}

\begin{proof}
Similar to that of Lemma~\ref{lem:etgaCorrectG}, but with reference to the finite poset intepretation of the logic.
\end{proof}

\subsection{Proof concerning the example of Remark~\ref{rem:weakPMeakerPM}}\label{apx:prf:rem:weakPMeakerPM}

We prove the assert by induction on the structure of formulas. The case for atomic proposition letters, negation  and conjunction are straightforward and omitted.
Suppose $\calF,\relint{A} \models \eta(\form_1, \form_2)$. Then, there is a \plm-path $\pi$ of 
some length $\ell\geq2$ such that
$\pi(0) = \relint{A}$, $\pi(\ell) \models \form_2$ and $\pi(i)\models \form_1$ for all 
$i \in [0;\ell)$. Since $\calF,\relint{A} \models \form_1$, by the Induction Hypothesis, we have that $\calF,\relint{ABC} \models \form_1$.
Consider then path $\pi'= (\relint{ABC},\relint{ABC},\relint{A})\cdot \pi$. Path $\pi'$ is a \plm-path and it witnesses $\calF, \relint{ABC} \models \eta(\form_1, \form_2)$.

Suppose now $\calF, \relint{ABC} \models \eta(\form_1, \form_2)$ and let $\pi$ be a \plm-path witnessing it.
Then, path $(\relint{A},\relint{ABC},\relint{ABC}) \cdot \pi$ is a \plm-path witnessing $\calF, \relint{A} \models \eta(\form_1, \form_2)$.

\subsection{Proof of Lemma~\ref{lem:VB}}\label{apx:prf:lem:VB}
\noindent
{\bf Lemma~\ref{lem:VB}.}
{\em 
Given a polyhedral model $\calP=(|K|,V)$,
for all $x\in |K|$ and formulas $\form$ of \slcsE{}
the following holds: $\calP,x \models \form$ if and only if $\map(\calP),\map(x) \models \etga(\form)$.
}

\begin{proof}
The proof is by induction on the structure of $\form$. 
We consider only the case $\eta(\form_1,\form_2)$.
Suppose $\calP,x \models \eta(\form_1,\form_2)$. 
By Lemma~\ref{lem:etgaCorrectG} on page~\pageref{lem:etgaCorrectG} we get 
$\calP,x \models \etga(\eta(\form_1,\form_2))$
and then by Definition~\ref{def:etga} on page~\pageref{def:etga}, we have $\calP,x \models \etga(\form_1) \land \gamma(\etga(\form_1),\etga(\form_2))$, 
that is $\calP,x \models \etga(\form_1)$ and  $\calP,x \models\gamma(\etga(\form_1),\etga(\form_2))$. Again by Lemma~\ref{lem:etgaCorrectG} on page~\pageref{lem:etgaCorrectG}, we get also 
$\calP,x \models \form_1$ and so, by the Induction Hypothesis, we have
$\map(\calP),\map(x) \models \etga(\form_1)$.
Furthermore, by Theorem 4.4 of~\cite{Be+22} we also get 
$\map(\calP),\map(x) \models \gamma(\etga(\form_1),\etga(\form_2))$.
Thus we get $\map(\calP),\map(x) \models \etga(\form_1)\land \gamma(\etga(\form_1),\etga(\form_2))$, that is
$\map(\calP),\map(x) \models \etga(\eta(\form_1,\form_2))$.\\
Suppose now $\map(\calP),\map(x) \models \etga(\eta(\form_1,\form_2))$.
This means $\map(\calP),\map(x) \models \etga(\form_1) \land \gamma(\etga(\form_1),\etga(\form_2))$,
that is $\map(\calP),\map(x) \models \etga(\form_1)$  and 
$\map(\calP),\map(x) \models \gamma(\etga(\form_1),\etga(\form_2))$.
By the Induction Hypothesis we get that $\calP,x \models \form_1$.
Furthermore, by Theorem 4.4 of~\cite{Be+22} we also get
$\calP,x \models \gamma(\etga(\form_1),\etga(\form_2))$.
This means that there is topological path $\pi$ such that 
$\calP,\pi(1) \models \etga(\form_2)$ and $\calP,\pi(r) \models \etga(\form_1)$
for all $r \in (0,1)$. 
Using Lemma~\ref{lem:etgaCorrectG} on page~\pageref{lem:etgaCorrectG} we also get
$\calP,\pi(1) \models \form_2$ and $\calP,\pi(r) \models \form_1$
for all $r \in (0,1)$ and since also $\calP,x \models \form_1$ (see above), we
get $\calP,\pi(1) \models \form_2$ and $\calP,\pi(r) \models \form_1$
for all $r \in [0,1)$, that is
$\calP,x \models \eta (\form_1,\form_2)$.
\end{proof}

\subsection{Proof of Theorem~\ref{theo:calMPresForm}}\label{apx:prf:theo:calMPresForm}
\noindent
{\bf Theorem~\ref{theo:calMPresForm}.}
{\em 
Given a polyhedral model $\calP=(|K|,V)$,
for all $x\in |K|$ and formulas $\form$ of \slcsE{}
it holds that: $\calP,x \models \form$ if and only if $\map(\calP),\map(x) \models \form$.
}

\begin{proof}
Using Lemma~\ref{lem:VB} on page~\pageref{lem:VB}, we know that 
$\calP,x \models \form$ if and only if $\map(\calP),\map(x) \models \etga(\form)$.
Moreover, by Lemma~\ref{lem:etgaCorrectE} on page~\pageref{lem:etgaCorrectE}, we know that 
$\map(\calP),\map(x) \models \etga(\form)$ if and only if
$\map(\calP),\map(x) \models \form$, which brings to the result.
\end{proof}

\subsection{Proof of Lemma~\ref{lem:dPExists}}\label{apx:prf:lem:dPExists}
{\bf Lemma~\ref{lem:dPExists}.}
{\em Given a finite poset model $\calF=(W,\preceq, \peval)$ and  
weak \plm-bisimulation $B \subseteq W \times W$, for all $w_1,w_2$ such that $B(w_1,w_2)$, the following holds:
for each  \dwn-path $\pi_1:[0;k_1] \to W$ from $w_1$ 
there is a \dwn-path $\pi_2:[0;k_2] \to W$ from $w_2$ 
such that $B(\pi_1(k_1),\pi_2(k_2))$ and 
for each $j\in [0;k_2)$ there is $i\in [0;k_1)$ such that $B(\pi_1(i),\pi_2(j))$.
}

\begin{proof}
Let $\pi_1:[0;k_1] \to W$ be a \dwn-path from $w_1$. By Lemma~\ref{lem:d2ud} on page~\pageref{lem:d2ud} we know that
there is an \upd-path $\hat{\pi}_1:[0;2h] \to W$ and total, monotonic non-decreasing surjection
$f:[0;2h] \to [0;k_1]$ such that $\hat{\pi}_1(j)=\pi_1(f(j))$  for all $j\in [0;2h]$. Furthermore, by Lemma~\ref{lem:dPExists4udP} on page~\pageref{lem:dPExists4udP}, we know that
there is a \dwn-path $\pi_2:[0;k_2] \to W$ from $w_2$ 
such that $B(\hat{\pi}_1(2h),\pi_2(k_2))$ and 
for each $j\in [0;k_2)$ there is $i\in [0;2h)$ such that $B(\hat{\pi}_1(i),\pi_2(j))$.
In addition, 
$\hat{\pi}_1(0)=\pi_1(0)=w_1$,  $B(\pi_1(k_1),\pi_2(k_2))$ since $B(\hat{\pi}_1(2h),\pi_2(k_2))$ and $\hat{\pi}_1(2h)=\pi_1(k_1)$. Finally, 
for each $j\in [0;k_2)$ there is $i \in [0;k_1)$ such that $B(\pi_1(i),\pi_2(j))$,
since there is $n\in [0;2h)$ such that $B(\hat{\pi}_1(n),\pi_2(j))$ and $f(n)=i$ for some $i \in [0;k_1)$.
\end{proof}

\subsection{Proof of Lemma~\ref{lem:dPtoTP}}\label{apx:prf:lem:dPtoTP}

\noindent
{\bf Lemma~\ref{lem:dPtoTP}.}
{\em
Given a polyhedral model $\calP=(|K|,V)$, and associated cell poset model $\map(\calP)=(W,\preceq,\peval)$, for any  
\dwn-path  $\pi:[0;\ell] \to W$, 
there is a topological path $\pi':[0,1] \to |K|$ such that: (i) $\map(\pi'(0))=\pi(0)$, (ii) $\map(\pi'(1))=\pi(\ell)$, and
(iii) for all $r \in (0,1)$ there is $i<\ell$ such that $\map(\pi'(r))=\pi(i)$.
}

\begin{proof}
Since $\pi$ is a \dwn-path, we have that either $\closure_T(\map^{-1}(\pi(k-1))) \sqsubseteq \closure_T(\map^{-1}(\pi(k)))$  
or $\closure_T(\map^{-1}(\pi(k))) \sqsubseteq \closure_T(\map^{-1}(\pi(k-1)))$,  
for each $k\in (0;\ell]$\footnote{We recall here that $\sigma_1 \sqsubseteq \sigma_2$ iff 
$\relint{\sigma_1} \preceq \relint{\sigma_2}$ and that $\sigma = \closure_T(\relint{\sigma})$.}. 
It follows that there is a continuous map 
$\pi'_k:[\frac{k-1}{\ell},\frac{k}{\ell}]\to |K|$ such that, in the first case, 
$\map(\pi'_k(\frac{k-1}{\ell})) = \pi(k-1)$ and
$\pi'_k((\frac{k-1}{\ell},\frac{k}{\ell}]) \subseteq \closure_T(\map^{-1}(\pi(k)))$, 
while in the second case, 
$\pi'_k([\frac{k-1}{\ell},\frac{k}{\ell})) \subseteq \closure_T(\map^{-1}(\pi(k-1)))$ and 
$\map(\pi'_k(\frac{k}{\ell}))=\pi(k)$.
In fact $\pi'_k$ can be realised as a linear bijection to the line segment connecting the barycenters  
in the corresponding cell, either in $\map^{-1}(\pi(k))$ or in $\map^{-1}(\pi(k-1))$, respectively.

For each $k\in(0;\ell)$, both $\pi'_k(\frac{k}{\ell})$ and $\pi'_{k+1}(\frac{k}{\ell})$ coincide with the barycenter of $\map^{-1}(\pi(k))$, so that defining $\pi'(r)=\pi'_k(r)$ for $r\in[\frac{k-1}{\ell},\frac{k}{\ell}]$ correctly defines a  topological path (actually a piece-wise linear path), satisfying (i) and (ii).
Finally since $\pi$ is a \dwn-path, $\pi(\ell) \preceq\pi(\ell-1)$, so that 
$\pi'([\frac{\ell-1}{\ell},1))\subseteq\map^{-1}(\pi(\ell-1))$. This implies (iii) above.
\end{proof}

\end{document}